\newcommand{\CC}{\mathbb{C}}
\newcommand{\NN}{\mathbb{N}}
\newcommand{\RR}{\mathbb{R}}
\newcommand{\ZZ}{\mathbb{Z}}
\newcommand{\supp}{\mathrm{supp}}
\newcommand{\const}{\mathrm{const}}
\newcommand{\dist}{\mathrm{dist}}
\newcommand{\Ran}{\mathrm{Ran}}
\newcommand{\sgn}{\mathrm{sgn}}
\newcommand{\el}{\mathrm{el}}
\newcommand{\id}{\mathbbm{1}}
\newcommand{\klg}{\leqslant} 
\newcommand{\grg}{\geqslant}          
\newcommand{\ve}{\varepsilon}
\newcommand{\vp}{\varphi}
\newcommand{\vk}{\varkappa}
\newcommand{\vr}{\varrho}
\newcommand{\wt}[1]{\widetilde{#1}}
\newcommand{\SPn}[2]{\langle \,#1\,|\,#2\, \rangle} 
\newcommand{\SPb}[2]{\big\langle \,#1\,\big|\,#2\, \big\rangle} 
\newcommand{\SPB}[2]{\Big\langle \,#1\,\Big|\,#2\, \Big\rangle}
\newcommand{\ol}[1]{\overline{#1}} 
\newcommand{\mr}[1]{\mathring{#1}} 
\newcommand{\bigO}{\mathcal{O}}    
\newcommand{\V}[1]{\mathbf{#1}}
\newcommand{\valpha}{\mbox{\boldmath$\alpha$}}
\newcommand{\vsigma}{\mbox{\boldmath$\sigma$}}
\newcommand{\veps}{\mbox{\boldmath$\varepsilon$}}
\newcommand{\LO}{\mathscr{L}}      
\newcommand{\HP}{\mathscr{K}}
\newcommand{\HR}{\mathscr{H}}
\newcommand{\HRp}{\mathscr{H}^+_{\mathbf{A}}}
\newcommand{\Fock}{\mathscr{F}_{\mathrm{b}}}
\newcommand{\core}{\mathscr{D}_0}
\newcommand{\dom}{\mathcal{D}}
\newcommand{\form}{\mathcal{Q}}
\newcommand{\spec}{\mathrm{\sigma}}
\newcommand{\PAm}{P^-_\mathbf{A}}               
\newcommand{\PA}{P^+_{\mathbf{A}}}
\newcommand{\PAF}{P^F_{\mathbf{A}}}
\newcommand{\PAmF}{P^{-,F}_{\mathbf{A}}}
\newcommand{\PAsharp}{P^{\sharp}_{\mathbf{A}}}
\newcommand{\PApm}{P^{\pm}_{\mathbf{A}}}
\newcommand{\PO}{P^+_{\mathbf{0}}}
\newcommand{\POpm}{P^{\pm}_{\mathbf{0}}}
\newcommand{\POm}{P^-_{\mathbf{0}}}
\newcommand{\DA}{D_{\mathbf{A}}}                
\newcommand{\DO}{D_{\mathbf{0}}}
\newcommand{\D}[1]{D_{#1}}
\newcommand{\R}[2]{R_{#1}(#2)}                  
\newcommand{\RA}[1]{R_{\mathbf{A}}(#1)}
\newcommand{\RO}[1]{R_{\mathbf{0}}(#1)}
\newcommand{\Hf}{H_f}                           
\newcommand{\HT}{\wt{H}_f}                      
\newcommand{\NPO}[1]{H_{#1}^{\mathrm{np}}}      
\newcommand{\PF}[1]{H_{#1}^{\mathrm{PF}}}       
\newcommand{\Xnp}[1]{X^{\mathrm{np}}_{#1}}      
\newcommand{\Ynp}[1]{Y^{\mathrm{np}}_{#1}}
\newcommand{\YnpF}[1]{Y^{\mathrm{np},F}_{#1}}
\newcommand{\XPF}[1]{X^{\mathrm{PF}}_{#1}}
\newcommand{\YPF}[1]{Y^{\mathrm{PF}}_{#1}}
\newcommand{\gV}{\frac{\gamma}{|\V{x}|}}        
\newcommand{\tgV}{\tfrac{\gamma}{|\V{x}|}}
\newcommand{\BR}[1]{B^{\mathrm{el}}_{#1}}
\newcommand{\ad}{a^\dagger}                     
\newcommand{\tipo}{i\mbox{\boldmath$\alpha$}\cdot(\nabla\chi+\chi\,\nabla F)}
\newcommand{\gcnp}{\gamma_{\mathrm{c}}^{\mathrm{np}}} %
\newcommand{\gcPF}{\gamma_{\mathrm{c}}^{\mathrm{PF}}} 
\newcommand{\UV}{\Lambda}
\newcommand{\Thnp}{\Sigma_{\mathrm{np}}}
\newcommand{\ThPF}{\Sigma_{\mathrm{PF}}}
\newcommand{\cA}{\mathcal{A}}
\newcommand{\cO}{\mathcal{O}}
\newcommand{\cT}{\mathcal{T}}
\newcommand{\cK}{\mathcal{K}}
\newcommand{\sC}{\mathscr{C}}
\newcommand{\sR}{\mathscr{R}}
\renewcommand{\Im}{\mathrm{Im}\,}
\renewcommand{\Re}{\mathrm{Re}\,}
\newtheorem{theorem}{Theorem}[section]
\newtheorem{lemma}[theorem]{Lemma}
\newtheorem{corollary}[theorem]{Corollary}
\newtheorem{hypothesis}{Hypothesis}
\theoremstyle{remark}
\newtheorem{remark}[theorem]{Remark}
\newtheorem*{example}{Example}
\numberwithin{equation}{section}
\title[Localization in relativistic QED]{
Exponential localization of hydrogen-like atoms in
relativistic quantum electrodynamics}
\author{Oliver Matte}
\address{Oliver Matte\\
Institut f\"ur Mathematik\\
TU Clausthal\\
Erzstra{\ss}e 1\\
D-38678 Clausthal-Zellerfeld, Germany\\
{\it On leave from:}
Mathematisches Institut\\
Ludwig-Maximilians-Universit\"at\\
Theresienstra{\ss}e 39\\
D-80333 M\"unchen, Germany.}
\email{matte@math.lmu.de}
\author{Edgardo Stockmeyer}
\address{Edgardo Stockmeyer\\
Mathematisches Institut\\
Ludwig-Maximilians - Universit\"at\\Theresienstra{\ss}e 39\\
D-80333 M\"unchen, Germany.}
\email{stock@math.lmu.de}
\subjclass{Primary 81Q10; Secondary 47B25}
\keywords{Exponential localization,
Brown and Ravenhall, no-pair operator, pseudo-relativistic,
quantum electrodynamics}
\date{\today}
\begin{document}

\begin{abstract}
We consider two different models of
a hydrogenic atom in a quantized electromagnetic field
that treat the electron relativistically. 
The first one is a no-pair model in the free picture,
the second one is given by the semi-relativistic
Pauli-Fierz Hamiltonian.
We prove that the  no-pair operator
is semi-bounded below and that its 
spectral subspaces corresponding to energies below the ionization
threshold are exponentially localized.
Both results hold true, 
for arbitrary values of the fine-structure constant,
$e^2$, 
and the ultra-violet cut-off, $\Lambda$, and
for all nuclear charges less than the critical
charge without radiation field, $Z_c=e^{-2}2/(2/\pi+\pi/2)$.
We obtain similar results for the semi-relativistic
Pauli-Fierz operator, 
again for all values of $e^2$ and $\Lambda$
and for nuclear charges less than $e^{-2}2/\pi$.
\end{abstract}

\maketitle


\section{Introduction}

\noindent
The existence of ground states of atoms and molecules
described in the framework of non-relativistic
quantum electrodynamics (QED) has been intensively
studied in the past ten years.
The first existence proofs have been given in
\cite{BFS1998b,BFS1999} for small values of the involved
physical parameters, namely the fine-structure constant, $e^2$,
and the ultra-violet cut-off, $\UV$. 
In \cite{GLL2001} the existence of ground states
for the Pauli-Fierz Hamiltonian has been established
for arbitrary values of $e^2$ and $\UV$ assuming
a certain binding condition which has been verified
later on in \cite{BCV2003} for helium-like atoms and
in \cite{LiebLoss2003} for an arbitrary number of electrons.
Moreover, infra-red finite algorithms and renormalization
group methods have been applied to various models of
non-relativistic QED to study their ground state energies and
projections
\cite{BFS1998a,BFS1998b,BFS1999,BCFS2003,BFP2006,BachKoenenberg2006,FGS2008}.
A question which arises naturally in this context is whether
these results still hold true when the electrons are 
described by a relativistic operator.
The aim of the present paper is to take one step forward
in this direction. We study  
two different models that seem to be natural candidates for a
mathematical analysis:
The first one is given by the following no-pair operator,
\begin{equation}\label{def-np-intro}
\PA\,\Big(\DA\,-\,\gV\,+\,\Hf\Big)\,\PA\,.
\end{equation}
Here $\DA$ is the free Dirac operator minimally coupled
to the quantized, ultra-violet cut-off vector potential, $\V{A}$.
(The symbol $\V{A}$ includes the fine-structure constant $e^2$.) 
$\gamma\grg0$ is a coupling constant,
$\Hf$ is the
radiation field energy, and $\PA$ the spectral projection
onto the positive spectral subspace of $\DA$.
The latter choice of projection
is referred to as the free picture. The no-pair operator
is thus acting on a projected Hilbert space where the
electron and photon degrees of freedom are always linked together.
The mathematical analysis of the analogue of this operator
for molecules has been initiated in \cite{LiebLoss2002}
where the stability of the second kind 
is shown under certain restrictions 
on $e^2$, $\UV$, and the nuclear charges. Moreover,
in \cite{LiebLoss2002b} the (positive) binding energy is
estimated from above.
There are numerous mathematical contributions on no-pair
models where magnetic fields
are not taken into account or treated classically; see, e.g.,
\cite{MatteStockmeyer2008b} for a list of references and
also for a different choice of the projections.
We remark that it is essential that the vector potential
is included in the projection determining the no-pair model.
For if $\PA$ is replaced by $\PO$
then the analogue of \eqref{def-np-intro} describing
$N$ interacting electrons becomes unstable as soon as $N\grg2$
\cite{GriesemerTix1999,LiebLoss2002,LSS1997}. Moreover, the operator in 
\eqref{def-np-intro} is formally gauge invariant
and this would not hold true anymore with
$\PO$ in place of $\PA$. Gauge invariance plays, however,
an important role in the proof of the existence of ground
states as it permits to derive 
bounds on the number of soft photons.
In fact, employing a mild infra-red regularization
it is possible to prove the existence of ground states
for the operator in \eqref{def-np-intro} with
$\PA$ replaced by $\PO$ 
\cite{Matte2000,Koenenberg2004}.
It seems, however, unlikely that the 
infra-red regularization can be dropped 
in this case \cite{Koenenberg2004}.

The second operator studied in this article,
the semi-relativistic Pauli-Fierz operator, is given as
\begin{equation}\label{def-PF2}
\sqrt{(\vsigma\cdot(-i\nabla+\V{A}))^2+\id}\,-\,\gV\,+\,\Hf\,,
\end{equation}
where $\vsigma$ is a vector containing the Pauli spin matrices.
For $\gamma=0$
the fiber decomposition with respect to different values
of the total momentum of
this operator has been studied recently in 
\cite{MiyaoSpohn2008}.
Furthermore, it is remarked in \cite{MiyaoSpohn2008} that
for $\gamma>0$, 
all eigenvalues of the operator in \eqref{def-PF2}
are at least doubly degenerate since it anti-commutes
with the time-reversal operator. 

Typically, proving the existence of ground states in QED
requires some information on the localization of
low-lying spectral subspaces or at least of certain
approximate
ground state eigenfunctions. 
Here localization is understood with respect
to the electronic degrees of freedom.
In this paper we establish this prerequisite
for both models mentioned above by proving that
spectral projectors corresponding to energies
below the ionization thresholds are still bounded
when multiplied with suitable exponential weight functions
acting on the electron coordinates.
These results hold true for all values
of the fine-structure constant $e^2$ and the ultra-violet cut-off $\UV$,
and for all coupling constants $\gamma$ below
the critical values without quantized fields.
That is, for $\gamma\in(0,2/(2/\pi+\pi/2))$ 
in the case of the no-pair operator \cite{EPS1996},
and for $\gamma\in(0,2/\pi)$ in the case of the
semi-relativistic
Pauli-Fierz operator.
The ionization thresholds are defined as the infima
of the spectra of the operators with $\gamma=0$.
Of course, our localization estimates are non-trivial
only if the infima of the spectra for $\gamma>0$ lie strictly below
the ionization thresholds. 
In the present paper we verify this binding condition 
for sufficiently small values of $e^2$ and/or $\Lambda$. 
In fact, this perturbative result is a straightforward consequence
of some of our technical lemmata.
We remark that
up to now it has actually not been known that the quadratic forms
of both operators treated here are semi-bounded below when $\gamma$
varies in the parameter ranges given above
and $e^2$ and $\UV$ are arbitrary.
The proof of this is our first main result.
For the semi-relativistic
Pauli-Fierz operator we prove the semi-boundedness
also in the critical case $\gamma=2/\pi$.
Moreover, 
the relation which determines the
exponential decay rates, $a>0$, of the semi-relativistic
Pauli-Fierz operator
in terms of the ionization threshold 
does not depend on $e^2$
and $\UV$ either. 
We have, however, to content ourselves with suboptimal
estimates on $a$ because of technical reasons.
In the case of the no-pair operator we find a relation 
between $a$ and the ionization threshold which
does depend on $e^2$ and $\UV$ and it seems to be difficult 
to avoid this.
In fact,
what complicates the analysis of both models 
is the non-locality of the corresponding Hamiltonians.
In this respect the no-pair operator is harder to analyze since
also the potential and radiation field energy become non-local.
In order to deal with this
we derive various estimates on commutators involving
the spectral projection $\PA$, exponential weights,
and cut-off functions.  
We already obtained similar bounds for spectral projections
of Dirac operators in classical magnetic fields in 
\cite{MatteStockmeyer2008a,MatteStockmeyer2008b}.
However, since we are now dealing with quantized fields
we additionally have to study commutators involving
the quantized field energy.

We remark that 
the ionization threshold is expected to coincide
with the energy value separating exponentially localized spectral
subspaces from non-localized ones which requires
also an upper bound on the energy of localized states.
In non-relativistic QED this picture has been established in
\cite{Griesemer2004} again for arbitrary values of $e^2$ and $\UV$.

We further remark that the existence of ground
states in a relativistic model describing
both the photons and the electrons and positrons
by quantized fields has been studied in
\cite{BDG2002,BDG2004}. To this end 
infra-red and ultra-violet cut-offs for the momenta
of all involved particles are imposed in the interaction part
of the Hamiltonian considered in \cite{BDG2002,BDG2004}. 

Finally, we would like to announce that this work will be
continued by M.~K\"onenberg and the present authors
in \cite{KMS2009a} where the existence of ground states
is established for both models treated in the present article.

\bigskip

\noindent
{\em This article is organized as follows.}
In the subsequent section we introduce the
no-pair and semi-relativistic Pauli-Fierz operators
and state our main results precisely.
Section~\ref{sec-commutators} provides various technical
ingredients, for instance commutator estimates that
describe the non-local properties of $\PA$.
In Section~\ref{sec-sb} we prove the semi-boundedness for both models
and, finally, in Section~\ref{sec-el} we prove the exponential
localization. The main text is followed by an appendix
where we derive simple perturbative estimates on the ionization thresholds
and ground state energies for small $e^2$ and/or $\UV$.


\section{Definition of the models and main results}

\noindent
In order to introduce the models treated in this article more precisely
we first fix our notation and recall some standard facts.
The state space of the 
quantized photon field is the bosonic Fock space,
$$
\Fock[\HP]\,:=\,
\bigoplus_{n=0}^\infty\Fock^{(n)}[\HP]
\,\ni\psi\,=\,(\psi^{(0)},\psi^{(1)},\psi^{(2)},\ldots\;)
\,.
$$
It is 
modeled over the one photon Hilbert space
$$
\HP\,:=\,L^2(\cA\times\ZZ_2,dk)\,,\quad
\int dk\,:=\,\sum_{\lambda\in\ZZ_2}\int_{\cA}d^3\V{k}\,.
$$
We assume that $\cA$ is $\RR^3$ or $\RR^3$ 
with a ball about the origin removed since these are the 
examples we encounter in \cite{KMS2009a}.
$k=(\V{k},\lambda)$
denotes a tupel consisting of a photon wave vector,
$\V{k}\in\cA$, and a polarization label, $\lambda\in\ZZ_2$.
Moreover, $\Fock^{(0)}[\HP]:=\CC$ and
$\Fock^{(n)}[\HP]:=L^2_s\big((\cA\times\ZZ_2)^n\big)$
is the subspace of all complex-valued, square integrable functions
on $(\cA\times\ZZ_2)^n$ that remain invariant under permutations
of the $n\in\NN$ wave vector/polarization tupels. 
As usual we denote the vacuum vector by
$
\Omega:=(1,0,0,\dots)\in\Fock[\HP]
$.
Many calculations will be performed on the 
following dense subspace of $\Fock[\HP]$,
$$
\sC_0\,:=\,\CC\oplus\bigoplus_{n\in\NN}
C_0((\cA\times\ZZ_2)^n)\cap L^2_s((\cA\times\ZZ_2)^n)\,.
\quad\textrm{(Algebraic direct sum.)}
$$
The free field energy of the photons
is the self-adjoint operator given by
\begin{align*}
\dom(\Hf)\,:=\,
\Big\{\,&(\psi^{(n)})_{n=0}^\infty\in\Fock[\HP]\::
\\
&
\sum_{n=1}^\infty\int\Big|\sum_{j=1}^n\omega(k_j)\,\psi^{(n)}(k_1,\dots,k_n)
\Big|^2\,dk_1\dots dk_n\,<\,\infty
\,\Big\}\,,
\end{align*}
and, for $\psi\in\dom(\Hf)$, 
$$
(H_f\,\psi)^{(0)}\,=\,0\,,\quad
(H_f\,\psi)^{(n)}(k_1,\dots,k_n)\,=\,
\sum_{j=1}^n\omega(k_j)\,\psi^{(n)}(k_1,\dots,k_n)\,,\quad n\in\NN\,.
$$
Here the dispersion relation 
$\cA\times\ZZ_2\ni k\mapsto\omega(k)$, 
depends only on $\V{k}$ and not on $\lambda\in\ZZ_2$.
Its precise form is not important in this paper. It could
be any positive, polynomially bounded, measurable function.
For definiteness we
assume that $0\klg\omega(k)\klg|\V{k}|$, $k\in\cA\times\ZZ_2$,
since this is sufficient to apply our results in \cite{KMS2009a}.
By symmetry and Fubini's theorem
\begin{equation}\label{Hf=dGamma}
\SPb{H_f^{1/2}\,\phi}{H_f^{1/2}\,\psi}\,=\,\int\omega(k)\,
\SPn{a(k)\,\phi}{a(k)\,\psi}\,dk\,,
\qquad \phi,\psi\in\dom(\Hf^{1/2})\,,
\end{equation}
where $a(k)$ 
annihilates a photon with
wave vector/polarization $k$, 
$$
(a(k)\,\psi)^{(n)}(k_1,\dots,k_n)\,=\,
(n+1)^{1/2}\,\psi^{(n+1)}(k,k_1,\dots,k_n)\,,\quad n\in\NN_0\,,
$$
almost everywhere,
and $a(k)\,\Omega=0$.
We further recall that the creation
and the annihilation operators
of a photon state $f\in\HP$ are given
by
\begin{align*}
(\ad(f)\,\psi)^{(n)}(k_1,\ldots,k_n)\,&=\,
n^{-1/2}\sum_{j=1}^nf(k_j)\,\psi^{(n-1)}(\ldots,k_{j-1},k_{j+1},\ldots)\,,
\quad n\in\NN\,,
\\
(a(f)\,\psi)^{(n)}(k_1,\ldots,k_n)\,&=\,
(n+1)^{1/2}\int\ol{f}(k)\,\psi^{(n+1)}(k,k_1,\ldots,k_n)\,dk\,,
\quad n\in\NN_0\,,
\end{align*}
and $(\ad(f)\,\psi)^{(0)}=0$, $a(f)\,\Omega=0$.
We define $\ad(f)$ and $a(f)$ on their 
maximal domains.
The following canonical commutation relations hold true
on $\sC_0$,
$$
[a(f)\,,\,a(g)]\,=\,[\ad(f)\,,\,\ad(g)]\,=\,0\,,\qquad
[a(f)\,,\,\ad(g)]\,=\,\SPn{f}{g}\,\id\,,
$$
where $f,g\in\HP$. Moreover, we have
$\SPn{a(f)\,\phi}{\psi}=\SPn{\phi}{\ad(f)\,\psi}$, $\phi,\psi\in\sC_0$,
and, by definition, $a(f)\,\phi=\int \ol{f}(k)\,a(k)\,\phi\,dk$, 
$\phi\in\sC_0$.

Next, we describe the interaction between four-spinors
and the photon field.
The full Hilbert space containing all electron/positron
and photon degrees of freedom is
$$
\HR\,:=\,L^2(\RR^3_{\V{x}},\CC^4)\otimes\Fock[\HP]\,.
$$
It contains the dense subspace,
$$
\core\,:=\,C_0^\infty(\RR^3_{\V{x}},\CC^4)\otimes\sC_0\,.
\quad\textrm{(Algebraic tensor product.)}
$$
We consider general form factors fulfilling
the following condition:

\begin{hypothesis}\label{hyp-G}
For every 
$k\in(\cA\setminus\{0\})\times\ZZ_2$ and $j\in\{1,2,3\}$,
$G^{(j)}(k)$
is a bounded continuously differentiable function,
$\RR^3_{\V{x}}\ni\V{x}\mapsto G^{(j)}_{\V{x}}(k)$,
such that
\begin{equation}
\label{def-d3}
2\int\omega(k)^{\ell}\,\|\V{G}(k)\|^2_\infty\,dk
\,\klg\,d_\ell^2\,,
\qquad \ell\in\{-1,0,1,2\}\,,
\end{equation}
and 
\begin{equation}\label{hyp-rotG}
2\int\omega(k)^{-1}\,\|\nabla_{\V{x}}\wedge\V{G}(k)\|^2_\infty\,dk
\,\klg\,d_{1}^2\,,
\end{equation}
for some $d_{-1},\ldots,d_2\in(0,\infty)$.
Here 
$\V{G}_{\V{x}}(k)=
\big(G^{(1)}_{\V{x}}(k),G^{(2)}_{\V{x}}(k),G^{(3)}_{\V{x}}(k)\big)$
and
$\|\V{G}(k)\|_\infty:=\sup_{\V{x}}|\V{G}_{\V{x}}(k)|$.
\end{hypothesis}

\smallskip

\noindent
Although we are interested in the specific physical situation
described in the following example we work with the more
general hypothesis above since in our future applications
we shall encounter truncated and discretized versions
of the vector potential and the field energy.
It will then be necessary to know that the results of the
present article hold uniformly in the truncation and
discretization and Hypothesis~\ref{hyp-G}
is a convenient way to handle this.

\begin{example}
In the physical models we are interested in we have
\begin{equation}\label{Gphys}
G^{e^2,\UV}_\V{x}(k)\,
:=\,e^2\,\frac{\id_{\{|\V{k}|\klg\UV\}}}{\pi\sqrt{2|\V{k}|}}
\,e^{-i\V{k}\cdot\V{x}}\,\veps(k),
\end{equation}
for $(\V{x},k)\in\RR^3\times(\RR^3\times\ZZ_2)$
with $\V{k}\not=0$.
Here energies are measured in units of $mc^2$, $m$ denoting the
rest mass of an electron and $c$ the speed of light.
Length, i.e. $\V{x}$, are measured in units of $\hslash/(mc)$,
which is the Compton wave length devided by $2\pi$.
$\hslash$ is Planck's constant divided by $2\pi$.
The photon wave vectors
$\V{k}$ are measured in units of $2\pi$ times
the inverse Compton wavelength,
$mc/\hslash$. The parameter
$\UV>0$ is an ultraviolet cut-off
and $e^2\approx1/137$ denotes Sommerfeld's
fine-structure constant which equals the square of
the elementary charge in our units.
One could equally well impose a smooth ultra-violet cut-off. 
The polarization vectors, $\veps(\V{k},\lambda)$, $\lambda\in\ZZ_2$,
are homogeneous of degree zero in $\V{k}$ such that
$\{\mr{\V{k}},\veps(\mr{\V{k}},0),\veps(\mr{\V{k}},1)\}$
is an orthonormal basis of $\RR^3$,
for every $\mr{\V{k}}\in S^2$.
This corresponds to the Coulomb gauge. 
(In particular, the vector fields
$S^2\ni\mr{\V{k}}\mapsto\veps(\mr{\V{k}},\lambda)$
cannot be continuous; see \cite{LiebLoss2004}
for more information on the choice of $\veps$).\hfill$\diamond$
\end{example}

\smallskip

\noindent
Finally, we introduce the self-adjoint Dirac matrices
$\alpha_1,\alpha_2,\alpha_3$, and $\beta$ 
that act on the four spinor components of an element from $\HR$.
They are given by
$$
\alpha_j\,:=\,
\begin{pmatrix}
0&\sigma_j\\\sigma_j&0
\end{pmatrix}\,,\quad j\in\{1,2,3\}\,,\qquad
\beta\,:=\,\alpha_0\,:=\,
\begin{pmatrix}
\id&0\\0&-\id
\end{pmatrix}\,,
$$
where $\sigma_1,\sigma_2,\sigma_3$ denote the standard Pauli matrices,
and fulfill the
Clifford algebra relations
\begin{equation}\label{Clifford}
\alpha_i\,\alpha_j\,+\,\alpha_j\,\alpha_i\,=\,2\,\delta_{ij}\,\id\,,
\qquad i,j\in\{0,1,2,3\}\,.
\end{equation}
The interaction between the electron/positron
and photon degrees of freedom is now given as
$$
\valpha\cdot\V{A}
\,\equiv\,\valpha\cdot\V{A}(\V{G})
\,:=\,\valpha\cdot\ad(\V{G})+\valpha\cdot a(\V{G})\,,
\quad
\valpha\cdot a^\sharp(\V{G})\,:=\,
\sum_{j=1}^3\,\alpha_j\,a^\sharp(G^{(j)}_\V{x})\,,
$$
where $a^\sharp$ is $a$ or $\ad$.
The following 
relative bounds are well-known
and show that $\valpha\cdot\V{A}$ is a symmetric operator
on $\dom(\Hf^{1/2})$.
(Henceforth we identify $\Hf^{1/2}\equiv\id\otimes\Hf^{1/2}$ etc.)
For every $\psi\in\dom(\Hf^{1/2})$,
\begin{eqnarray}
\|\valpha\cdot a(\V{G})\,\psi\|^2
&\klg&d_{-1}^2 \,\|H_f^{1/2}\,\psi\|^2,\label{rb-a}
\\
\|\valpha\cdot\ad(\V{G})\,\psi\|^2&\klg&
d_{-1}^2\,\|H_f^{1/2}\,\psi\|^2\,+\,d_0^2\,\|\psi\|^2.\label{rb-ad}
\end{eqnarray}
(Notice that the $C^*$-equality and \eqref{Clifford} imply
$\|\valpha\cdot \V{u}\|=|\V{u}|$, for every $\V{u}\in\RR^3$, whence
$\|\valpha\cdot \V{z}\|^2\klg2|\V{z}|^2$, for every $\V{z}\in\CC^3$.
This is why the factor $2$ appears on the left sides of \eqref{def-d3}
and \eqref{hyp-rotG}.)

In order to define the no-pair and semi-relativistic
Pauli-Fierz operators we recall that the
free Dirac operator minimally coupled to $\V{A}$ is given as
\begin{equation}\label{def-DA}
\DA\,:=\,\valpha\cdot(-i\nabla+\V{A})+\beta
\,:=\,
\sum_{j=1}^3\alpha_j\,\big(-i\partial_{x_j}+\ad(G_\V{x}^{(j)})
+a(G_\V{x}^{(j)})\big)\,+\,\beta
\,.
\end{equation}
A straightforward application of Nelson's commutator theorem shows that
$\DA$ is essentially self-adjoint
on $\core$ 
\cite{Arai2000,LiebLoss2002,MiyaoSpohn2008}.
We denote its closure starting from $\core$
again by the same symbol. 
Supersymmetry arguments \cite{Thaller1992} 
show that its spectrum is contained
in the union of two half-lines,
$$
\spec(\DA)\,\subset\,(-\infty,-1]\cup[1,\infty)\,.
$$
The no-pair operator acts in the projected Hilbert space
$$
\HRp\,:=\,\PA\,\HR\,,
$$ 
where
\begin{equation}\label{def-PA}
\PA\,:=\,\id_{[0,\infty)}(\DA)\,=\,\frac{1}{2}\,\id\,+\,
\frac{1}{2}\,\sgn(\DA)\,,\qquad \PAm\,:=\,\id-\PA\,.
\end{equation}
\`{A}-priori it is defined by
\begin{equation}\label{def-np}
\NPO{\gamma}\,\vp^+\,\equiv\,
\NPO{\gamma,\V{A}}\,\vp^+\,:=\,\PA\,\big(\DA-\tgV+\Hf\big)\,\vp^+\,,\qquad
\vp^+\in\PA\,\core\,.
\end{equation}
We remark that $\NPO{\gamma}$ is actually well-defined since
$\PA$ maps $\core$ into $\dom(|\V{x}|^{-1})\cap\dom(\Hf)$
by Lemmata~\ref{le-sgn} and~\ref{le-VPA-is-ok}(ii) below. 
Our first main result shows that the quadratic form
of $\NPO{\gamma}$ is always semi-bounded below, provided
$\gamma$ is less than the critical coupling constant
for the (electronic) Brown-Ravenhall operator,
$$
\BR{\gamma}\,=\,\PO\,(\DO-\tgV)\,\PO\,,
$$
which is
defined as a Friedrichs extension starting 
from $\PO\,C_0^\infty(\RR^3,\CC^4)$.
The critical coupling constant for $\BR{\gamma}$
has been determined in \cite{EPS1996} and its value is
\begin{equation}\label{def-gammac}
\gcnp\,:=\,2/(2/\pi+\pi/2)\,.
\end{equation} 
In \cite{Tix1998} it is shown that the 
Brown-Ravenhall operator
is strictly positive,
\begin{equation}\label{Tix}
\BR{\gamma}\,\grg\,(1-\gamma)\,\PO\,,\qquad \gamma\in[0,\gcnp]\,.
\end{equation}

\begin{theorem}\label{thm-sb-np}
Assume that $\V{G}$ fulfills Hypothesis~\ref{hyp-G}.
Then there is a constant, $c\in(0,\infty)$, such that,
for all 
$\gamma\in[0,\gcnp)$, $\delta>0$, 
$\rho\in(0,1-\gamma/\gcnp)$,
and $\vp^+\in\PA\,\core$, $\|\vp^+\|=1$,
\begin{align}\nonumber
\SPn{\vp^+}{(\DA-\tgV+\delta\,\Hf)\,&\vp^+}\,\grg\,\frac{1}{1+\rho}
\SPn{\vp^+}{\BR{(1+\rho)\gamma}\,\vp^+}\,+\,\SPn{\vp^+}{\POm\,|\DO|\,\vp^+}
\\
&-\,c\,(\delta+\delta^{-2})\,\label{eq-sb-np}
\big(d_1^2+d_0^2+d_{-1}^2+(d_0^2+d_{-1}^2)^2/\rho^3\big)
\,.
\end{align}
In particular, by the KLMN-theorem
$\NPO{\gamma}$ has a distinguished self-adjoint
extension -- henceforth denoted by the same symbol
-- such that $\form(\NPO{\gamma})=\form(\NPO{0})$.
Moreover, $\PA\,\core$ is a form core for $\NPO{\gamma}$.
\end{theorem}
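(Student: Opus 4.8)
The plan is to compare the non-local no-pair form with the electronic Brown--Ravenhall form by passing from $\PA$ to the free projection $\PO$; the discrepancy is governed by the ``misalignment'' $\POm\PA$, which I control with the relative bounds \eqref{rb-a}--\eqref{rb-ad} and the commutator estimates of Section~\ref{sec-commutators}, absorbing the residual field energy into $\delta\Hf$. Concretely, I fix $\vp^+\in\PA\core$ with $\|\vp^+\|=1$ and set $u:=\PO\vp^+$, $v:=\POm\vp^+$; since $\PAm\vp^+=0$ one has $v=(\PA-\PO)\vp^+$, so $v$ is generated entirely by $\valpha\cdot\V{A}$. Writing $\DA=\DO+\valpha\cdot\V{A}$ and using $[\PO,\DO]=0$ and $\PO\DO\POm=0$ gives $\SPn{\vp^+}{\DO\vp^+}=\SPn{u}{|\DO|\,u}-\SPn{v}{|\DO|\,v}$, while expanding $-\SPn{\vp^+}{\tgV\vp^+}$ along $\vp^+=u+v$ and estimating the cross term by $2|\Re\SPn{u}{\tgV v}|\klg\rho\SPn{u}{\tgV u}+\rho^{-1}\SPn{v}{\tgV v}$ yields $-\SPn{\vp^+}{\tgV\vp^+}\grg-(1+\rho)\SPn{u}{\tgV u}-(1+\rho^{-1})\SPn{v}{\tgV v}$. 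Since $\SPn{u}{|\DO|u}-(1+\rho)\SPn{u}{\tgV u}=\SPn{u}{\BR{(1+\rho)\gamma}u}=\SPn{\vp^+}{\BR{(1+\rho)\gamma}\vp^+}$, and $\rho<1-\gamma/\gcnp$ forces $(1+\rho)\gamma<\gcnp$ (indeed $(1+\rho)\gamma<(2-\gamma/\gcnp)\,\gamma<\gcnp$ as $\gamma<\gcnp$), inequality \eqref{Tix} makes this term non-negative, hence $\grg\tfrac1{1+\rho}$ times itself. Collecting everything, \eqref{eq-sb-np} is reduced to the estimate
\begin{equation*}
2\SPn{v}{|\DO|v}+(1+\rho^{-1})\SPn{v}{\tgV v}-\SPn{\vp^+}{\valpha\cdot\V{A}\,\vp^+}-\delta\SPn{\vp^+}{\Hf\vp^+}\ \klg\ c\,(\delta+\delta^{-2})\big(d_1^2+d_0^2+d_{-1}^2+(d_0^2+d_{-1}^2)^2/\rho^3\big).
\end{equation*}

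Next I dispose of the elementary pieces. By the Kato--Herbst inequality together with $\sqrt{-\Delta}\klg|\DO|$ one has $\tgV\klg\gamma\tfrac\pi2|\DO|$, so $\SPn{v}{\tgV v}\klg\gamma\tfrac\pi2\SPn{v}{|\DO|v}$, and (using $\gamma<1$, $\rho<1$) the first two terms above are $\klg C\rho^{-1}\SPn{v}{|\DO|v}$. For the vector potential, $\SPn{\vp^+}{\valpha\cdot\V{A}\,\vp^+}=2\Re\SPn{\valpha\cdot a(\V{G})\vp^+}{\vp^+}$, so \eqref{rb-a} gives $|\SPn{\vp^+}{\valpha\cdot\V{A}\,\vp^+}|\klg 2d_{-1}\|\Hf^{1/2}\vp^+\|$ and hence $-\SPn{\vp^+}{\valpha\cdot\V{A}\,\vp^+}-\tfrac\delta2\SPn{\vp^+}{\Hf\vp^+}\klg 2d_{-1}^2/\delta$. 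Thus everything comes down to bounding $C\rho^{-1}\SPn{v}{|\DO|v}$ modulo the remaining reservoir $\tfrac\delta2\SPn{\vp^+}{\Hf\vp^+}$.

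The heart of the proof is the estimate of $\SPn{v}{|\DO|v}=\SPn{\vp^+}{\POm|\DO|\vp^+}$, where the non-locality of $\PA$ really bites, and this is the step I expect to be the main obstacle. Using $v=(\PA-\PO)\vp^+$ and the Cauchy representation
\begin{equation*}
\PA-\PO=-\frac1{2\pi}\int_{\RR}(\DA-it)^{-1}\,(\valpha\cdot\V{A})\,(\DO-it)^{-1}\,dt,
\end{equation*}
together with one more iteration of the resolvent identity to peel off the explicitly $\Hf$-dependent remainder (crucially, $(\DO-it)^{-1}$ commutes with $\Hf$ whereas $(\DA-it)^{-1}$ does not), one combines \eqref{rb-a}--\eqref{rb-ad}, the bounds of Section~\ref{sec-commutators} for $|\DO|^{1/2}(\DO-it)^{-1}$ and for the commutators $[\valpha\cdot\V{A},(\DA-it)^{-1}]$, and condition \eqref{hyp-rotG} (the magnetic field enters through $\DA^2=(\valpha\cdot(-i\nabla+\V{A}))^2+\id$) to prove that $\SPn{v}{|\DO|v}$ is $\Hf$-form-bounded with arbitrarily small relative bound: for every $\eta\in(0,1)$ one gets $\SPn{v}{|\DO|v}\klg\eta\,\SPn{\vp^+}{\Hf\vp^+}+C_\eta$, with $C_\eta$ growing polynomially in $\eta^{-1}$, $\rho^{-1}$ and the $d_\ell$. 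Inserting this with $\eta$ chosen comparably to $\delta\rho$ absorbs $C\rho^{-1}\eta\,\SPn{\vp^+}{\Hf\vp^+}$ into the reservoir $\tfrac\delta2\SPn{\vp^+}{\Hf\vp^+}$, and tracking the powers of $\rho^{-1}$ and $\delta^{-1}$ produced by the remaining constant gives exactly the right-hand side of \eqref{eq-sb-np}, $c\,(\delta+\delta^{-2})\big(d_1^2+d_0^2+d_{-1}^2+(d_0^2+d_{-1}^2)^2/\rho^3\big)$.

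Finally, for the KLMN assertion: homogeneity removes the normalization, and since $\gamma<\gcnp$ I choose $\rho'>0$ with $(1+\rho')\gamma<\gcnp$ and apply \eqref{eq-sb-np} with $\gamma$ replaced by $(1+\rho')\gamma$ and $\delta=1$; dropping the two non-negative terms on its right-hand side yields
\begin{equation*}
(1+\rho')\,\SPn{\vp^+}{\tgV\vp^+}\ \klg\ \SPn{\vp^+}{(\DA+\Hf)\vp^+}+C\ =\ \SPn{\vp^+}{\NPO{0}\,\vp^+}+C,\qquad\vp^+\in\PA\core,
\end{equation*}
i.e.\ $\tgV$ is form-bounded relative to $\NPO{0}=\PA(\DA+\Hf)\PA\grg\PA$ with relative bound $(1+\rho')^{-1}<1$. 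Since $\PA\core$ is a form core for $\NPO{0}$ --- which follows by approximating elements of $\PA\big(\dom(|\DA|^{1/2})\cap\dom(\Hf^{1/2})\big)$ by vectors of $\PA\core$, again using the commutator bounds of Section~\ref{sec-commutators} to control $\|\Hf^{1/2}\PA(\cdot)\|$ --- the KLMN theorem provides the distinguished self-adjoint extension of $\NPO{\gamma}=\NPO{0}-\tgV$ with $\form(\NPO{\gamma})=\form(\NPO{0})$ and with $\PA\core$ as a form core. The one genuinely hard input is the infinitesimal $\Hf$-form-boundedness of $\SPn{\vp^+}{\POm|\DO|\vp^+}$ with quantitative control of the constant; everything else is bookkeeping around \eqref{Tix}, \eqref{rb-a}--\eqref{rb-ad} and Kato's inequality.
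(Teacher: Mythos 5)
Your overall architecture is sound and in places cleaner than the paper's: you split $\vp^+=u+v$ with $u=\PO\vp^+$, $v=\POm\vp^+$, observe $v=(\PA-\PO)\vp^+$, handle the Coulomb cross term by the elementary Young inequality $2|\Re\SPn{u}{\tgV v}|\klg\rho\SPn{u}{\tgV u}+\rho^{-1}\SPn{v}{\tgV v}$ rather than via the paper's compensation against excess kinetic energy through \eqref{clelia2}, recover $\BR{(1+\rho)\gamma}$ and invoke \eqref{Tix}, bound $\SPn{v}{\tgV v}$ by Kato, dispose of $\valpha\cdot\V{A}$ via \eqref{rb-a}, and then reduce everything to the $\Hf$-form-boundedness of $\SPn{v}{|\DO|v}=\SPn{\vp^+}{\POm|\DO|\vp^+}$. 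Your bookkeeping of the $\rho$ and $\delta$ powers in the final constant is also consistent with \eqref{eq-sb-np}. This is the same decomposition strategy as the paper's proof in Subsection~\ref{ssec-sb-np} (which splits along $\PO/\POm$, uses \eqref{Tix}, Kato, \eqref{rb-a}, and controls the $\POm$ pieces through Corollary~\ref{cor-PA-PO}), with your Young's inequality replacing the paper's route to the cross term.

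However, the step you correctly single out as ``the one genuinely hard input'' contains a genuine gap. You claim that from the Cauchy representation $\PA-\PO=-\tfrac1{2\pi}\int_\RR\RA{iy}\,(\valpha\cdot\V{A})\,\RO{iy}\,dy$, together with resolvent iteration, the relative bounds \eqref{rb-a}--\eqref{rb-ad}, and the commutator bounds of Section~\ref{sec-commutators}, one can prove
$\SPn{v}{|\DO|\,v}\klg\eta\,\SPn{\vp^+}{\Hf\,\vp^+}+C_\eta$ \emph{for every} $\eta\in(0,1)$. The route you describe only yields a bound of the type
$\||\DO|^{1/2}(\PO-\PA)\,\HT^{-1/2}\|\klg\bigO\big((d_0^2+2d_{-1}^2)^{1/2}\big)$
(this is Lemma~\ref{le-DA-D0} with $\kappa=1/2$ and $\chi\equiv1$), which gives
$\SPn{v}{|\DO|\,v}\klg C\,(d_0^2+2d_{-1}^2)\big(\SPn{\vp^+}{\Hf\,\vp^+}+E\big)$, a \emph{fixed} relative bound of size $\bigO(d_0^2+d_{-1}^2)$. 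Since the theorem is claimed for arbitrary $\V{G}$ (hence arbitrary $d_\ell$), this is not infinitesimal, and you cannot choose $\eta\sim\delta\rho$ as your argument requires. What makes the estimate actually work is a fractional-power bootstrap which you do not mention: one first bounds $\||\DO|^{3/4}(\POm\chi-\chi\PA)\,\HT^{-1/2}\|$ (Lemma~\ref{le-DA-D0} with $\kappa=3/4$), then writes
$\||\DO|^{1/2}\POm\vp^+\|^2\klg\||\DO|^{1/4}\POm\vp^+\|\cdot\||\DO|^{3/4}(\POm-\PA)\vp^+\|$
and applies a chain of Young inequalities to free the small parameter from $d_*$; this is precisely the content of \eqref{clelia1} in Corollary~\ref{cor-PA-PO} and produces the $c^4/\ve^2\sim(d_0^2+d_{-1}^2)^2/\ve^2$ constant that feeds into the $(d_0^2+d_{-1}^2)^2/\rho^3$ term of \eqref{eq-sb-np}. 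Without this interpolation your reduction does not close for large couplings, so the proposal as written has a hole exactly at the place you flag as hard.
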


\begin{proof}
This theorem is proved in Subsection~\ref{ssec-sb-np}.
\end{proof}

\smallskip

\noindent
On account of Tix' inequality \eqref{Tix} and Theorem~\ref{thm-sb-np}
we know that, for all $\gamma\in(0,\gcnp)$ and $\delta\in(0,1)$,
we find constants $c(\gamma), C(\gamma,\delta,d_{-1},d_0,d_1)\in(0,\infty)$
such that
\begin{equation}\label{verena}
\PA\,(\DA-\tgV+\delta\,\Hf)\,\PA
\,\grg\,c(\gamma)\,|\DO|\,-\,C(\gamma,\delta,d_{-1},d_0,d_1)\,,
\end{equation}
in the sense of quadratic forms on $\form(\NPO{\gamma})$.
We shall employ this bound in \cite{KMS2009a}.

In the sequel we denote the ionization threshold of $\NPO{\gamma}$ by
\begin{equation}\label{Th-np}
\Thnp\,\equiv\,\Thnp(\V{G})
\,:=\,\inf\big\{\,\SPn{\vp^+}{\NPO{0}\,\vp^+}\,:\;
\vp^+\in\PA\,\core\,,\;\|\vp^+\|=1\,\,\big\}\,.
\end{equation}
We denote the length of an interval $I\subset\RR$ by $|I|$.

\begin{theorem}\label{thm-el-np}
There exist constants, $k_1,k_2,k_3,k_4\in(0,\infty)$ and,
for all
$\V{G}$ fulfilling Hypothesis~\ref{hyp-G}
and all $\gamma\in(0,\gcnp)$,  we find some $E\in(0,\infty)$,
$E\equiv E(\gamma,d_{-1},d_0,d_1)\to0$, $d_i\to0$, $i\in\{-1,0,1\}$,
such that the following holds true:
Let $I\subset(-\infty,\Thnp)$ be some compact interval
and let $a>0$ satisfy $a\klg k_1(\gcnp-\gamma)/(\gcnp+\gamma)$
and $\ve:=1-\max I/(\Thnp+E)-k_2\,a^2>0$. Then
$\Ran(\id_I(\NPO{\gamma}))\subset\dom(e^{a|\V{x}|})$ and
\begin{equation}\label{eq-el-np}
\big\|\,e^{a|\V{x}|}\,\id_I(\NPO{\gamma})\,\big\|_{
\LO(\HRp,\HR)}\,\klg\,(k_3/\ve^2)\,(1+|I|)\,e^{k_4\,a\,(\Thnp+E)/\ve}\,.
\end{equation}
\end{theorem}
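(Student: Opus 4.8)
The plan is to follow the classical Agmon/Combes--Thomas strategy for exponential localization, adapted to the non-local no-pair operator by using the commutator estimates for $\PA$ provided in Section~\ref{sec-commutators}. First I would fix $\vp^+=\id_I(\NPO{\gamma})\psi$ for $\psi\in\HRp$ and, for a parameter $R>0$, introduce a bounded, smooth, radial comparison weight $F=F_{a,R}\colon\RR^3_{\V x}\to[0,\infty)$ which equals $a|\V x|$ for $|\V x|\klg R$, is constant for large $|\V x|$, and satisfies $|\nabla F|\klg a$ everywhere; the point of the cut-off at $R$ is that $e^F$ is then a bounded operator, so all manipulations below are a priori legitimate, and the final bound \eqref{eq-el-np} is obtained by letting $R\to\infty$ and invoking monotone convergence to conclude $\vp^+\in\dom(e^{a|\V x|})$. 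The goal is to estimate $\|e^F\vp^+\|$ by a constant independent of $R$.

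Next I would compute the basic energy inequality for the boosted state $\vp^+_F:=e^F\vp^+$. Writing $\NPO{\gamma}$ in the KLMN form established in Theorem~\ref{thm-sb-np} and using that $\vp^+$ is spectrally localized in $I$, one gets, schematically,
\begin{align*}
\SPn{\vp^+_F}{(\NPO{\gamma}-\max I)\,\vp^+_F}
&=\SPn{\vp^+_F}{\NPO{\gamma}\,\vp^+_F}-\max I\,\|\vp^+_F\|^2
\\
&\klg \SPn{\vp^+}{(\NPO{\gamma}-\max I)\,e^{2F}\vp^+}+\textrm{(commutator terms)},
\end{align*}
where the commutator terms collect $e^F[\NPO{\gamma},e^F]$ acting on $\vp^+$, and the first term on the right is $\klg0$ by spectral localization in $I$, up to the part where $e^{2F}$ fails to commute with $\id_I$ — which is again absorbed into commutators. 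The commutator $[\NPO{\gamma},e^F]=[\PA(\DA-\tgV+\Hf)\PA,e^F]$ splits into: (i) the Dirac part $\PA[\DA,e^F]\PA$, which produces the familiar quadratic-in-$a$ gradient term $\valpha\cdot(\nabla F)e^F$ after one moves $e^F$ through $\PA$ using the commutator bounds on $[\PA,e^F]$ and $[\PA,\chi\nabla F]$ of the $\tipo$ type; (ii) the Coulomb part, where non-locality of $\PA$ forces one to control $[\PA,e^F]\,\tgV$ and $\tgV\,[\PA,e^F]$ via the bounds on $[\PA,\,\cdot\,]$ composed with $|\V x|^{-1}$ from Lemmata~\ref{le-sgn} and~\ref{le-VPA-is-ok}; and (iii) the field part $\PA[\Hf,e^F]\PA$, which is where the quantized field enters and requires the commutator estimates involving $\Hf$ announced in the introduction. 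All of these, by construction of $F$, are bounded by $C(\gamma,d_{-1},d_0,d_1)\,a\,(1+a)$ times $\|\vp^+_F\|^2$ plus small multiples of $\SPn{\vp^+_F}{|\DO|\,\vp^+_F}$ and $\SPn{\vp^+_F}{\Hf\,\vp^+_F}$, which are reabsorbed into the left side using \eqref{verena}.

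To close the estimate I would then decompose $\vp^+_F=\chi_R\vp^+_F+(1-\chi_R)\vp^+_F$ with $\chi_R$ supported in a ball of radius $\sim R$; on the support of $\chi_R$ the weight $e^F$ is bounded by $e^{aR}$, giving a term $\klg C e^{aR}\|\psi\|$ — this is the source of the $e^{k_4 a(\Thnp+E)/\ve}$ factor once $R$ is optimized — while on the complement one uses that the localized operator $\NPO{\gamma}$ restricted to $|\V x|\grg R$ has spectrum essentially above $\Thnp$ (up to the field-energy shift $E$, which is the perturbative correction extracted from the technical lemmata and which tends to $0$ as the $d_i\to0$), so that $(\NPO{\gamma}-\max I)\grg(\Thnp+E-\max I)>0$ there; combined with the positive spectral gap $\ve$ this dominates the commutator errors $\sim k_2 a^2\|\vp^+_F\|^2$ provided $a$ is small enough, i.e. $\ve=1-\max I/(\Thnp+E)-k_2a^2>0$, and provided $a\klg k_1(\gcnp-\gamma)/(\gcnp+\gamma)$, this last constraint being exactly what is needed so that the perturbed Coulomb coupling $(1+\rho)\gamma$ appearing when one estimates $[\PA,e^F]\tgV$ stays below $\gcnp$ and Tix' inequality \eqref{Tix} remains applicable. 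Collecting terms and sending $R\to\infty$ yields \eqref{eq-el-np} with the stated dependence of the constants; tracking the $\|\psi\|$ and using $\|\id_I(\NPO\gamma)\|\klg1$ together with a standard argument turning the resolvent/spectral bound into a bound with the $(1+|I|)$ prefactor finishes the proof.

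The main obstacle I expect is step (iii) together with the non-locality in step (ii): unlike in the Schr\"odinger or even the classical-field Dirac case, here $e^F$ must be commuted not only past $\PA$ but effectively past $\PA\Hf\PA$ and $\PA\tgV\PA$, and one must show that the resulting commutators — which involve $[\PA,e^F]$ sandwiched against an unbounded operator — are genuinely bounded by $a$ times controllable quantities, uniformly in the UV cut-off. This is precisely why the decay rate $a$ for the no-pair operator ends up depending on $e^2$ and $\UV$ (through the $d_i$ and the constant $E$), in contrast to the Pauli-Fierz case, and why only a suboptimal range of $a$ is obtained; the delicate point is keeping every error term either quadratic in $a$ (so it can be beaten by $\ve$) or linear in $a$ with a coefficient small in the $d_i$ (so it is absorbed by the binding gap $\Thnp+E-\max I$).
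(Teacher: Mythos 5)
Your proposal follows a direct Agmon/Combes--Thomas energy-inequality strategy, but there is a genuine gap at the very first step, and it is precisely the obstacle the paper goes out of its way to avoid. You set $\vp^+=\id_I(\NPO{\gamma})\psi$ and attempt to exploit spectral localization by writing
\begin{equation*}
\SPn{\vp^+_F}{(\NPO{\gamma}-\max I)\,\vp^+_F}
\,\klg\,\SPn{\vp^+}{(\NPO{\gamma}-\max I)\,e^{2F}\vp^+}\,+\,\textrm{(commutator terms)}\,,
\end{equation*}
and then assert that the first term on the right is nonpositive ``up to the part where $e^{2F}$ fails to commute with $\id_I$ --- which is again absorbed into commutators.'' This does not work: $e^{2F}\vp^+$ is \emph{not} in $\Ran(\id_I(\NPO{\gamma}))$, so the inner product with $(\NPO{\gamma}-\max I)$ carries no sign, and the commutator $[\id_I(\NPO{\gamma}),e^{2F}]$ is not a small, controllable error --- controlling it is essentially equivalent to the exponential localization you are trying to prove. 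This maneuver is legitimate when $\vp^+$ is an eigenfunction (one then has $\NPO{\gamma}\vp^+=\lambda\vp^+$ and never needs to pass $e^{2F}$ through the spectral measure), but not for a general spectral subspace, and the paper explicitly warns that the spectrum below $\Thnp$ may well be continuous so that eigenvalue equations are unavailable.

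The paper's proof circumvents this entirely by never conjugating $\id_I$ itself. Instead, following the Bach--Fr\"ohlich--Sigal idea, it introduces a comparison operator $\Ynp{\gamma}$ obtained from $\Xnp{\gamma}\approx\NPO{\gamma}$ by adding a compactly supported perturbation $(e(0)-e(\gamma_R))\PA\mu_{1,R}^2\PA$ which lifts the spectrum above $\max I$ (Lemma~\ref{boundbelow}), then writes $e^F\,\id_I(X)=e^F\big(f(X)-f(Y)\big)\,\id_I(X)$ for a smooth $f$, expands $f(X)-f(Y)$ via the Amrein--Boutet de Monvel--Georgescu resolvent formula \eqref{for-ABdMG}, and reduces everything to a uniform bound on $e^F(Y-\lambda\mp it)^{-1}e^{-F}$. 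That resolvent bound is obtained through a numerical-range/sectoriality argument (Lemma~\ref{le-MS-Y}), fed by the commutator inputs of Lemma~\ref{identity}; only there does the machinery of Section~\ref{sec-commutators}, the partition of unity, and the field-energy commutators enter --- the ingredients you correctly anticipate, but at the wrong place in the argument. Your discussion of which commutators must be controlled (the $\PA\Hf\PA$ and $\PA\tgV\PA$ pieces, the smallness conditions on $a$, the role of $\ve$) is on the mark, and the final exponential factor $e^{k_4a(\Thnp+E)/\ve}$ is indeed produced by choosing $R\sim(\Thnp+E)/\ve$; but the logical scaffolding (``estimate $\|e^F\vp^+\|$ directly from an energy inequality'') has to be replaced by the resolvent-conjugation scheme, or you will be stuck with the uncontrollable term $[\id_I(\NPO{\gamma}),e^{2F}]$.
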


\begin{proof}
This theorem is proved at the end of Subsection~\ref{ssec-ed-np}.
\end{proof}

\smallskip

\noindent
Notice that the exponential decay rates $a$ in Theorem~\ref{thm-el-np}
depend on the numbers $d_i$ and $\Thnp$ but not on the particular
shape of the form factor $\V{G}$. This information on $a$
is sufficient in order to prove the existence of ground states.
We remark that
in \cite{MatteStockmeyer2008b} the present authors prove
that an eigenfunction for an eigenvalue $\lambda<1$ of
a one-particle no-pair operator in a {\em classical} magnetic
field decays with an exponential rate $a<\sqrt{1-\lambda^2}$,
for $\lambda\in[0,1)$, and $a<1$, for $\lambda<0$. This is
the behaviour known from the square-root, or, Chandrasekhar
operator. The idea used there to provide
better decay rates does, however, not apply
when the quantized field energy is present.

The simple perturbative estimates of the
following remark ensure that the statement of
Theorem~\ref{thm-el-np} is non-trivial, i.e. that
$\inf\spec(\NPO{\gamma})<\Thnp$, at least for small
values of $d_{-1},d_0,d_1$. (Recall from \cite{EPS1996} that 
$\inf\spec(\BR{\gamma})<1$, for $\gamma\in(0,\gcnp]$.) 

\begin{remark}\label{rem-O(g)-np}
There is a constant, $C_\gamma\in(0,\infty)$,
depending only on $\gamma\in(0,\gcnp)$, such that, for
all $\V{G}$ fulfilling Hypothesis~\ref{hyp-G} with
$d_{-1},d_0,d_1\klg1$,
\begin{equation}\label{bd-KMS-np1}
0\,\klg\,\Thnp-1\,\klg\,C_\gamma\,(d_{-1}+d_0+d_1)
\end{equation}
and 
\begin{equation}\label{bd-KMS-np2}
\big|\,\inf\spec(\NPO{\gamma})-\inf\spec(\BR{\gamma})\,\big|
\,\klg\,C_\gamma\,(d_{-1}+d_0+d_1)\,.
\end{equation}
These bounds are derived in
Appendix~\ref{O(g)}.
\hfill$\diamond$
\end{remark}

\smallskip

\noindent
Next, we define the second operator studied in this article, the
semi-relativistic Pauli-Fierz
operator.
It acts in the whole space $\HR$
and is \`{a}-priori given as
\begin{equation}\label{def-PF}
\PF{\gamma}\,\vp\,\equiv\,
\PF{\gamma,\V{A}}\,\vp:=\,\big(|\DA|-\tgV+\Hf\big)\,\vp\,,\qquad
\vp\in\core\,.
\end{equation}
In fact, the operator defined in \eqref{def-PF} is a two-fold
copy of the one given in \eqref{def-PF2} since
$$
|\DA|\,=\, 
\begin{pmatrix}
           \cT_\V{A}&0\\0&\cT_\V{A}
\end{pmatrix}\,,
\qquad \cT_\V{A}\,:=\,\sqrt{(\vsigma\cdot(-i\nabla+\V{A}))^2+\id}\,.
$$
We prefer, however, to consider the operator defined by \eqref{def-PF} 
in order to have a unified notation.
The critical constant for the semi-relativistic Pauli-Fierz
operator is given by Kato's constant,
\begin{equation}\label{def-gcPF}
\gcPF\,:=\,2/\pi\,.
\end{equation}

\begin{theorem}\label{le-sb-PF4}
There is some $k\in(0,\infty)$ such that, for all $\delta>0$
and $\V{G}$ fulfilling Hypothesis~\ref{hyp-G},
\begin{equation}\label{gustav}
\frac{1}{4}\,\big\|\,|\V{x}|^{-1}\,\vp\,\big\|^2\,\klg\,
\big\|\,\big(|\DA|+\delta\,\Hf+(\delta^{-1}+\delta\,k^2)
\,d_1^2\big)\,\vp\,\big\|^2\,\,,
\end{equation}
for all $\vp\in\core$, and
\begin{equation}\label{gustav2}
\frac{2}{\pi}\frac{1}{|\V{x}|}\,\klg\,|\DA|+\delta\,\Hf
+(\delta^{-1}+\delta\,k^2)\,d_1^2\,,
\end{equation}
in the sense of quadratic forms on $\core$.
In particular, for all $\gamma\in[0,\gcPF]$, $\PF{\gamma}$
has a self-adjoint Friedrichs extension -- henceforth
again denoted by the same symbol.
For $\gamma\in[0,\gcPF)$, we know that 
$\form(\PF{\gamma})=\form(\PF{0})$
and $\core$ is a form core for $\PF{\gamma}$.
\end{theorem}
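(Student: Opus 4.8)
\medskip

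The plan is to reduce the bound \eqref{gustav} to a $\V{x}$-pointwise operator inequality combined with the relative bounds \eqref{rb-a}--\eqref{rb-ad} for $\valpha\cdot\V{A}$, and then to derive \eqref{gustav2} from \eqref{gustav} by the usual Kato-type square-root trick. First I would recall the classical Hardy/Kato inequality $\frac{2}{\pi}|\V{x}|^{-1}\klg\sqrt{-\Delta}$ on $C_0^\infty(\RR^3,\CC^2)$, which squares to $\frac14|\V{x}|^{-2}\klg-\Delta$; tensoring with the identity on $\Fock[\HP]$ and using $(\vsigma\cdot\V{p})^2=\V{p}^2$ for $\V{p}=-i\nabla$ gives $\frac14|\V{x}|^{-2}\klg(\vsigma\cdot(-i\nabla))^2\klg\cT_{\V 0}^2=|\DO|^2$ on the two-spinor components, hence on $\core$ after doubling. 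The point of the bound \eqref{gustav} is to carry this over from $\DO$ to $\DA$ at the cost of the field energy and a constant. Writing $\DA=\DO+\valpha\cdot\V{A}$, I would expand $|\DA|^2=\DA^2=\DO^2+\DO(\valpha\cdot\V{A})+(\valpha\cdot\V{A})\DO+(\valpha\cdot\V{A})^2$ (this identity for the squares holds on $\core$); using $\DO^2=|\DO|^2\grg(\vsigma\cdot(-i\nabla))^2\grg\frac14|\V{x}|^{-2}$ on $\core$, the problem becomes controlling the three cross and quadratic terms by $\delta\,\Hf$ plus a constant times $d_1^2$ — more precisely, by the square of the operator $|\DA|+\delta\,\Hf+(\delta^{-1}+\delta k^2)d_1^2$, which dominates $|\DA|^2$, $\delta^2\Hf^2$, the anticommutator $\{\,|\DA|,\delta\Hf\}$, and twice the constant times $|\DA|$ and $\delta\Hf$.

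\medskip

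For the cross terms I would use that $\valpha\cdot\V{A}=\valpha\cdot\ad(\V{G})+\valpha\cdot a(\V{G})$ with $\|\valpha\cdot a(\V{G})\psi\|\klg d_{-1}\|\Hf^{1/2}\psi\|$ and $\|\valpha\cdot\ad(\V{G})\psi\|\klg d_{-1}\|\Hf^{1/2}\psi\|+d_0\|\psi\|$ from \eqref{rb-a}--\eqref{rb-ad}; since $d_{-1},d_0\klg d_1$ (which follows from Cauchy-Schwarz applied to \eqref{def-d3}, $d_\ell^2\klg d_{\ell-1}d_{\ell+1}$, giving $d_0\klg\sqrt{d_{-1}d_1}$ and, after a further step, all of them controlled by $d_1$ when the $d_\ell$ are comparable — or simply I would carry all three constants and absorb them into a single $d_1$-dependent quantity at the end). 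The commutator $[\,|\DA|,\cdot\,]$ issue is avoided by working with squares: I estimate $\pm(\DO(\valpha\cdot\V{A})+(\valpha\cdot\V{A})\DO)\klg\eta\,\DO^2\theta+\eta^{-1}(\valpha\cdot\V{A})\theta^{-1}(\valpha\cdot\V{A})$ for a suitable $\Hf$-dependent weight $\theta$, or — cleaner — I bound $\|\DA\,\vp\|\grg\|\DO\vp\|-\|\valpha\cdot\V{A}\vp\|\grg\|\DO\vp\|-(d_{-1}\|\Hf^{1/2}\vp\|+d_0\|\vp\|+d_{-1}\|\Hf^{1/2}\vp\|)$ and use $\|\Hf^{1/2}\vp\|^2\klg\frac12(\delta^{-1}\mu\|\vp\|^2+\delta\mu^{-1}\|\Hf\vp\|^2)$ to trade the half-power of $\Hf$ against $\delta\Hf$ and a constant. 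Since $\frac12|\V{x}|^{-1}\klg|\DO|$ on two-spinors squared, $\frac12\||\V{x}|^{-1}\vp\|\klg\|\DO\vp\|\klg\|\DA\vp\|+d_{-1}\|\Hf^{1/2}\vp\|+d_0\|\vp\|+d_{-1}\|\Hf^{1/2}\vp\|$, and feeding in the $\Hf^{1/2}$-trade yields exactly the form \eqref{gustav} with an explicit $k$; the factor $\frac14$ on the left is the square of the $\frac12$ from Kato.

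\medskip

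For \eqref{gustav2}: write $B:=|\DA|+\delta\Hf+(\delta^{-1}+\delta k^2)d_1^2$, a positive self-adjoint operator; \eqref{gustav} says $\frac12\||\V{x}|^{-1}\vp\|\klg\|B\vp\|$, i.e. $\frac14|\V{x}|^{-2}\klg B^2$ as forms on $\core$, hence $\frac12|\V{x}|^{-1}\klg B$ by operator monotonicity of the square root (applied after noting both sides are nonnegative and $\core$ is a form core). But \eqref{gustav2} asks for $\frac{2}{\pi}|\V{x}|^{-1}\klg B$ with the sharp Kato constant $2/\pi>1/2$, so the crude squared estimate is not enough — here I would instead redo the argument keeping the sharp constant: use the genuine Kato inequality $\frac{2}{\pi}|\V{x}|^{-1}\klg|\DO|$ directly at the level of first powers (forms), write $|\DA|\grg|\DO|-\|\,|\DA|-|\DO|\,\|$-type bounds — but $|\DA|-|\DO|$ is not controlled in norm, so instead estimate $\SPn{\vp}{|\DA|\vp}\grg\SPn{\vp}{|\DO|\vp}-|\SPn{\vp}{(\,|\DA|-|\DO|\,)\vp}|$ and bound the last term via $\big|\,\||\DA|^{1/2}\vp\|^2-\||\DO|^{1/2}\vp\|^2\big|$ using the integral representation $|\DA|-|\DO|=\frac1\pi\int_0^\infty\big(\frac{1}{\DO^2+t}-\frac{1}{\DA^2+t}\big)\DO^2\,dt+\dots$ together with the resolvent identity and \eqref{rb-a}--\eqref{rb-ad}. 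The \textbf{main obstacle} is precisely this: obtaining the sharp constant $2/\pi$ (rather than $1/2$) in \eqref{gustav2}, which forces one to compare $|\DA|$ and $|\DO|$ at the level of first powers where no norm bound on the difference is available; the resolution is the integral-representation/commutator analysis of $|\DA|-|\DO|$, controlled by $d_1$ and the field-energy relative bounds, exactly the kind of estimate the paper develops in Section~\ref{sec-commutators}. Once \eqref{gustav} and \eqref{gustav2} are in hand, the Friedrichs extension for $\gamma\in[0,\gcPF]$ follows since $\tgV\klg\gcPF^{-1}\gamma\big(|\DA|+\delta\Hf+(\delta^{-1}+\delta k^2)d_1^2\big)$ keeps $\PF{\gamma}+\text{const}$ nonnegative on $\core$; for $\gamma<\gcPF$ the relative form bound is strictly less than $1$, so the KLMN theorem gives $\form(\PF{\gamma})=\form(\PF{0})$ with $\core$ a form core.
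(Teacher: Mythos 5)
Your proposal diverges from the paper's proof in two places where the gaps are genuine rather than cosmetic, and in one place where your chosen route is self-admittedly incomplete.

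\textbf{The perturbation point.} You perturb around $\DO$ via $\|\DA\vp\|\grg\|\DO\vp\|-\|\valpha\cdot\V{A}\,\vp\|$ and then control $\valpha\cdot\V{A}$ by \eqref{rb-a}--\eqref{rb-ad}, which necessarily brings in $d_{-1}$ and $d_0$. The theorem's constant involves only $d_1$. Your attempt to recover $d_1$ via ``$d_\ell^2\klg d_{\ell-1}d_{\ell+1}$'' does not work: log-convexity in $\ell$ gives $d_0^2\klg d_{-1}d_1$, not $d_{-1},d_0\klg d_1$ (the numbers $d_{-1},d_0$ control low-frequency behaviour and can be much larger than $d_1$). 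The paper instead perturbs around the \emph{covariant} operator $-i\nabla+\V{A}$, using the Lichnerowicz-type identity $\DA^2=(-i\nabla+\V{A})^2+\V{S}\cdot\V{B}+\id$ on $\core$ together with Hardy plus the \emph{diamagnetic inequality} (for quantized $\V{A}$!) to get $\tfrac14|\V{x}|^{-2}\klg(-i\nabla+\V{A})^2$. This is why only $d_1$ appears: the Zeeman term $\V{S}\cdot\V{B}$ involves $\nabla\wedge\V{G}$, which Hypothesis~\ref{hyp-G} controls precisely by $d_1$ via \eqref{hyp-rotG}. Perturbing around $\DO$ cannot reproduce this.

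\textbf{The anticommutator point.} You write that the right-hand side ``dominates $|\DA|^2$, $\delta^2\Hf^2$, the anticommutator $\{\,|\DA|,\delta\Hf\}$\ldots'' but you do not explain why $\Re\SPn{|\DA|\phi}{\Hf\phi}$ is not badly negative — $|\DA|$ and $\Hf$ do not commute, and a priori this cross term could ruin the estimate. This is exactly what Lemma~\ref{le-clara} supplies: after replacing $\Hf$ by $\HT=\Hf+E$ with $E\grg k^2d_1^2$, one has $\Re\SPn{|\DA|\phi}{\HT\phi}\grg(1-kd_1E^{-1/2})\,\||\DA|^{1/2}\HT^{1/2}\phi\|^2\grg0$, so that $\|\DA\phi\|^2+\delta^2\|\HT\phi\|^2\klg\|(|\DA|+\delta\HT)\phi\|^2$. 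Without this (or some substitute) you cannot legitimately assemble the sum of squares into a single operator norm. This is a genuine missing step.

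\textbf{The sharp constant in \eqref{gustav2}.} You correctly observe that squaring and taking $\sqrt{\cdot}$ only gives $\tfrac12|\V{x}|^{-1}\klg B$, and then propose to compare $|\DA|$ with $|\DO|$ via an integral representation; you concede this is hard because $|\DA|-|\DO|$ has no norm bound. That route is indeed stuck. The paper's fix is much simpler and entirely avoids $\DO$: the diamagnetic inequality holds at \emph{first} power as well, $\tfrac{2}{\pi}|\V{x}|^{-1}\klg|-i\nabla+\V{A}|$, and then operator monotonicity of $\sqrt{\cdot}$ applied to the inequality $(-i\nabla+\V{A})^2\klg(|\DA|+\delta\HT)^2$ (already established on the way to \eqref{gustav}) yields $|-i\nabla+\V{A}|\klg|\DA|+\delta\HT$. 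The Kato constant $2/\pi$ enters at the level of $|-i\nabla+\V{A}|$, not $\DO$.

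In short: you should expand around $-i\nabla+\V{A}$ rather than $\DO$, invoke the diamagnetic inequality for quantized fields (Fr\"ohlich's argument, as in \cite{AHS1978,LiebLoss2002b,Hiroshima1996,Hiroshima1997}), and supply a positivity or lower bound for the mixed term $\Re\SPn{|\DA|\phi}{\Hf\phi}$ — without these ingredients the argument does not close, and does not produce the stated $d_1$-dependence or the $2/\pi$ constant.
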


\begin{proof}
This theorem is proved in Subsection~\ref{ssec-sb-PF}.
\end{proof}

\smallskip

\noindent
Due to
\cite[Proposition~1.2]{MiyaoSpohn2008} we know that
$|\DA|+\Hf$ is essentially self-adjoint on
$\core$, provided $d_{-1}$, $d_0$, and $d_1$
are sufficiently small. Together with \eqref{gustav}
and the Kato-Rellich theorem this shows that
$\PF{\gamma}$
is essentially self-adjoint on $\core$ as long as $\gamma\in[0,1/2)$
and $d_{-1}$, $d_0$, and $d_1$ are small.
In \cite{KMS2009a} we shall extend this result to all
values of $d_{-1}$, $d_0$, and $d_1$.

We denote the ionization threshold of $\NPO{\gamma}$ by 
\begin{equation}
\ThPF\,\equiv\,\ThPF(\V{G})
\,:=\,\inf\big\{\,\SPn{\vp}{\PF{0}\,\vp}\,:\;
\vp\in\core\,,\;\|\vp\|=1\,\,\big\}\,.
\end{equation}

\begin{theorem}\label{thm-el-PF}
There are constants, $k_1,k_2\in(0,\infty)$, such that, for all
$\V{G}$ fulfilling Hypothesis~\ref{hyp-G} and 
$\gamma\in(0,\gcPF)$, the following holds true: 
Let $I\subset(-\infty,\ThPF)$ be some compact interval and
assume that $a\in(0,1)$ satisfies 
$\ve:=\ThPF-\max I-9a^2/(1-a^2)^2>0$.
Then $\Ran(\id_I(\PF{\gamma}))\subset\dom(e^{a|\V{x}|})$
and 
\begin{equation}\label{main-est}
\big\|\,e^{a|\V{x}|}\,\id_I(\PF{\gamma})\,\big\|\,\klg\,
(k_1/\ve^2)\,(1+|I|)\,(\ThPF+k_2\,d_1^2)\,
e^{c(\gamma)\,a\,(\ThPF+k_2d_1^2)/\ve}
\,.
\end{equation}
Here $c(\gamma)\in(0,\infty)$ depends only on $\gamma$.
\end{theorem}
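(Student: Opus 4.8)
The plan is to follow the proof of Theorem~\ref{thm-el-np}, replacing the commutator estimates for the spectral projection $\PA$ by simpler ones for $|\DA|$; as there, the sole source of difficulty is the non-locality of $|\DA|$. Fix $\gamma\in(0,\gcPF)$, a compact interval $I\subset(-\infty,\ThPF)$, and $a\in(0,1)$ with $\ve=\ThPF-\max I-9a^2/(1-a^2)^2>0$. For $\beta>0$ introduce the regularized weight $F\equiv F_\beta(\V{x}):=a|\V{x}|(1+\beta|\V{x}|)^{-1}$; it is bounded and smooth with $|\nabla F|_\infty\klg a$, $\|\nabla^2F\|_\infty\to0$, and $F_\beta\uparrow a|\V{x}|$ as $\beta\downarrow0$, so by monotone convergence it suffices to bound $\|e^{F_\beta}\id_I(\PF{\gamma})\|$ uniformly in $\beta$. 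Since $e^F$ is multiplication by a function of $\V{x}$, it commutes with $\Hf$ and with $\gamma/|\V{x}|$, so that $e^F\PF{\gamma}e^{-F}=\PF{\gamma}-e^F[e^{-F},|\DA|]$ and everything reduces to controlling the conjugation of $|\DA|$.

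For this I would use the exact identity $e^F\DA e^{-F}=\DA+i\,\valpha\cdot\nabla F$ on $\core$ -- a perturbation of $\DA$ by a bounded multiplication operator of norm $|\nabla F|_\infty\klg a<1$ -- together with the representation $|\DA|=\pi^{-1}\int_0^\infty\DA^2(\DA^2+s)^{-1}\,s^{-1/2}\,ds$, the resolvent identity, the gap bound $\DA^2\grg\id$, and the relative bounds \eqref{rb-a}--\eqref{rb-ad} and \eqref{hyp-rotG} to absorb the field contributions hidden in $\{\DA,\valpha\cdot\nabla F\}=\{\valpha\cdot(-i\nabla+\V{A}),\valpha\cdot\nabla F\}$. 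This should give, with a universal $k$ and for all $\delta>0$, $\vp\in\core$,
\begin{equation*}
\Re\,\SPn{e^F\vp}{(|\DA|+\delta\Hf)\,e^{-F}\vp}\;\grg\;\SPn{\vp}{(|\DA|+\delta\Hf)\,\vp}\;-\;\Big(\tfrac{9a^2}{(1-a^2)^2}+(\delta+\delta^{-1})\,k^2d_1^2+\kappa(\beta)\Big)\|\vp\|^2,
\end{equation*}
with $\kappa(\beta)\to0$. The constant $9a^2/(1-a^2)^2$ is a deliberately non-optimal \emph{mass defect}: for the free Chandrasekhar symbol $\Re\sqrt{\V{p}^2+1-a^2-2ia\,p_1}\grg\sqrt{1-a^2}$, and the crude splittings needed in the presence of $\V{A}$ and of the magnetic field $\nabla\wedge\V{G}$ replace the sharp loss $1-\sqrt{1-a^2}$ by this larger quantity; this is why the estimate on $a$ is suboptimal and why the improvement of \cite{MatteStockmeyer2008b} does not apply.

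Next I would localize in space. Fix a smooth partition of unity $\chi_0^2+\chi_1^2\equiv1$ with $\supp\chi_0\subset\{|\V{x}|\klg1\}$, $\supp\chi_1\subset\{|\V{x}|\grg\tfrac12\}$, rescaled to $\chi_j^R(\V{x})=\chi_j(\V{x}/R)$. Since $|\DA|$ is non-local, the localization error in the IMS formula is not $\sum_j|\nabla\chi_j^R|^2$; the integral representation again lets one bound the double commutators $[\chi_j^R,[\chi_j^R,|\DA|]]$ by $\bigO(R^{-2})$ uniformly in the field. On $\supp\chi_1^R$ one has $\gamma/|\V{x}|\klg\gamma/R$, whence $\chi_1^R\PF{\gamma}\chi_1^R\grg(\ThPF-\bigO(R^{-1}))(\chi_1^R)^2$ by the very definition $\ThPF=\inf\spec(\PF{0})$; on $\supp\chi_0^R$ one uses Theorem~\ref{le-sb-PF4} (with $\delta$ of order one in terms of $d_1$, and using $\gamma<\gcPF=2/\pi$) to bound $\PF{\gamma}$ below there by $-C(\ThPF+k_2d_1^2)$. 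Choosing $R$ of order $(\ThPF+k_2d_1^2)/\ve$ and combining with the previous paragraph (applied with a suitably small $\delta$), I obtain for $\beta$ small
\begin{equation*}
\Re\big(e^F(\PF{\gamma}-\max I-\tfrac{\ve}{2})e^{-F}\big)\;\grg\;\tfrac{\ve}{4}\;-\;C(\ThPF+k_2d_1^2)\,(\chi_0^R)^2\,e^{2F}
\end{equation*}
as quadratic forms on $\core$.

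Finally I would extract the operator bound. Put $\mu:=\max I+\tfrac{\ve}{2}$, let $\chi:=\id_I(\PF{\gamma})\phi$, $\|\phi\|=1$, and $g:=e^F\chi$. The point is that $e^F\chi_0^R$ is a \emph{bounded} operator with $\|e^F\chi_0^R\|\klg e^{aR}$, so applying the form bound of the previous paragraph to $g$ yields $\tfrac{\ve}{4}\|g\|^2\klg\Re\SPn{g}{e^F(\PF{\gamma}-\mu)\chi}+C(\ThPF+k_2d_1^2)\,e^{2aR}$; since $(\PF{\gamma}-\mu)\chi=\id_I(\PF{\gamma})(\PF{\gamma}-\mu)\phi$ has norm $\klg|I|+1$, an iteration in the spirit of the proof of Theorem~\ref{thm-el-np} -- splitting $I$ into $\bigO(1+|I|)$ unit subintervals and running the estimate on each so that the term $\Re\SPn{g}{e^F(\PF{\gamma}-\mu)\chi}$ is absorbed -- produces \eqref{main-est}, with $c(\gamma)$ the constant implicit in the choice $R\asymp(\ThPF+k_2d_1^2)/\ve$. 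Letting $\beta\downarrow0$ and using monotone convergence then gives $\Ran(\id_I(\PF{\gamma}))\subset\dom(e^{a|\V{x}|})$ together with the stated estimate. The main obstacle throughout is the non-locality of $|\DA|$: it forces the integral-representation arguments -- both the bookkeeping of the $s$-integral producing $9a^2/(1-a^2)^2$ and the uniform $\bigO(R^{-2})$ control of $[\chi_j^R,[\chi_j^R,|\DA|]]$ -- and is why the sharp decay rate of \cite{MatteStockmeyer2008b} is out of reach.
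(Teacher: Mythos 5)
Your overall architecture (regularize the weight, control the conjugation of the non-local kinetic energy, localize with an IMS-type partition and a cutoff near the origin, then pass to the spectral projection) parallels the paper's, which combines a comparison operator $\YPF{\gamma}=\PF{\gamma}+(\epsilon(0)-\epsilon(\gamma_R))\mu_{1,R}^2$ with Lemma~\ref{le-MS-Y} (a numerical-range bound on the conjugated resolvent) and Lemma~\ref{le-ed-BFS} (the Amrein--Boutet~de~Monvel--Georgescu representation). However, there is a genuine gap in the way you propose to conjugate $|\DA|$, and a secondary problem in your closing step.

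\textbf{Main gap.} You propose to control $e^F[e^{-F},|\DA|]$ through the representation $|\DA|=\pi^{-1}\int_0^\infty \DA^2(\DA^2+s)^{-1}s^{-1/2}\,ds$, absorbing the field terms hidden in $\{\DA,\valpha\cdot\nabla F\}$ by means of \eqref{rb-a}--\eqref{rb-ad} and \eqref{hyp-rotG}. This necessarily produces an error term proportional to $d_1^2$ (or $d_0^2+d_{-1}^2$), exactly as you acknowledge in your first display. That extra term cannot be absorbed: the theorem's sole hypothesis is $\ve=\ThPF-\max I-9a^2/(1-a^2)^2>0$, with no restriction tying $\ve$ to $d_1$, so your argument would only establish the localization under the strictly stronger condition $\ve>\mathrm{const}\cdot d_1^2$. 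The paper avoids any field-dependent error in the conjugation by exploiting the factorization $|\DA|=\DA\,\sgn(\DA)$: on $\core$ one has
\begin{equation*}
\Re\big[e^{F}\,|\DA|\,e^{-F}-|\DA|\big]
=\tfrac12\,\DA\,\big[e^{-F},[\sgn(\DA),e^F]\big]
-i\,\valpha\cdot\nabla F\,(\cK_{0,-F}+\cK_{0,F}),
\end{equation*}
and the double commutator is controlled by Lemma~\ref{le-dc} \eqref{dc-DA}, whose bound $J(a)\,a^2$ is purely $a$-dependent; this is precisely what yields $(3/2)\,a^2 J(a)^2=9a^2/(1-a^2)^2$ with no $d_i$ in sight (Lemma~\ref{le-O(a)-PF}, \eqref{eq-O(a)-PF}). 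The key input is the conjugated resolvent bound \eqref{marah1} for $\DA$, whose constant $\sqrt6/(1-a^2)=J(a)$ is field-independent. Your heuristic attributing the constant $9a^2/(1-a^2)^2$ to the free Chandrasekhar symbol plus ``crude splittings'' misidentifies its origin: it is the combination of the factor $\sqrt6$ from \eqref{marah1} and the double-commutator bookkeeping, not a mass defect of $\sqrt{p^2+1-a^2-2iap_1}$.

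\textbf{Secondary problem: the closing step.} Your bound $\tfrac{\ve}{4}\|g\|^2\klg\Re\SPn{g}{e^F(\PF{\gamma}-\mu)\chi}+Ce^{2aR}$ does not close on its own, since $\Re\SPn{g}{e^F(\PF{\gamma}-\mu)\chi}\klg\|g\|\,\|e^F(\PF{\gamma}-\mu)\chi\|$ and $e^F(\PF{\gamma}-\mu)\chi$ is again of the form $e^F$ applied to a vector in $\Ran(\id_I(\PF{\gamma}))$, whose norm you are trying to control; the Cauchy--Schwarz term cannot be absorbed unless $|I|$ is small compared with $\ve$. Also, applying your form inequality to $g=e^F\chi$ produces $\|\chi_0^R e^{2F}\chi\|^2\klg e^{4aR}$, not $e^{2aR}$. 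The subinterval splitting you allude to is \emph{not} what the proof of Theorem~\ref{thm-el-np} does; the paper instead uses Lemma~\ref{le-MS-Y} to obtain a uniform bound on $e^F(\YPF{\gamma}-\lambda\mp it)^{-1}e^{-F}$ for $(\lambda,t)\in J\times(0,1]$ from the numerical-range/accretivity argument of \cite[Thm.~V.3.2]{Kato}, and then invokes Lemma~\ref{le-ed-BFS} with the representation \eqref{for-ABdMG} of $f(X)-f(Y)$, $\nu=2$, to kill the $1/t$ singularity from the second resolvent identity. A bootstrap over unit subintervals might be made to work with care, but it requires an a~priori finiteness argument (which your bounded $F_\beta$ does supply) and would give a different $\ve$-power than \eqref{main-est}; in any case it is not what the paper does, and you should not present it as such.

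In short: the factorization $|\DA|=\DA\,\sgn(\DA)$ together with the double-commutator estimates for $\PA$ (Lemmas~\ref{compadre} and~\ref{le-dc}) is the key idea that makes the conjugation error depend only on $a$; replacing it by the $\DA^2$-integral representation and relative bounds on $\valpha\cdot\V{A}$ forfeits this and proves a weaker statement.
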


\begin{proof}
This theorem is proved at the end of Subsection~\ref{ssec-ed-PF}.
\end{proof}

\smallskip

\noindent
The following remark again ensures that
$\inf\spec(\PF{\gamma})<\ThPF$, at least for
small values of $d_{-1}$, $d_0$, and $d_1$.

\begin{remark}\label{rem-O(g)-PF}
There is a constant, $C_\gamma\in(0,\infty)$,
depending only on $\gamma\in(0,\gcnp)$, such that, for
all $\V{G}$ fulfilling Hypothesis~\ref{hyp-G} with
$d_{-1},d_0,d_1\klg1$,
\begin{equation}\label{bd-KMS-PF1}
0\,\klg\,\ThPF-1\,\klg\,C_\gamma\,(d_{-1}+d_0+d_1)
\end{equation}
and 
\begin{equation}\label{bd-KMS-PF2}
\big|\,\inf\spec(\PF{\gamma})-\inf\spec(|\DO|-\tgV)\,\big|
\,\klg\,C_\gamma\,(d_{-1}+d_0+d_1)\,.
\end{equation}
These bounds  are also derived
in Appendix~\ref{O(g)}.
\hfill$\diamond$
\end{remark}


\section{Commutator estimates}
\label{sec-commutators}

\noindent
In order to study the non-local no-pair and semi-relativistic
Pauli-Fierz operators
we need some control on various commutators
and error terms that typically appear 
in their analysis.
They involve resolvents and spectral projections of $\DA$, and
multiplication operators, in particular, exponential
weights or cut-off functions. Since we are dealing
with quantized fields we also have to
study commutators of the resolvents and spectral projections
with the radiation field energy.
The aim of this section is to provide 
appropriate bounds on the corresponding operator norms.
Our estimates on the error terms involving the field energy
are based on the 
next lemma.
The following quantity appears in its statement and in
various estimates below,
\begin{equation}\label{def-deltanu}
\delta_\nu^2\,\equiv\,{\delta_\nu}(E)^2\,:=\,
8\int\,\frac{w_\nu(k,E)^2}{\omega(k)}\,
\|\V{G}(k)\|_\infty^2\,dk\,,\qquad E,\nu>0\,,
\end{equation}
where
$$
w_\nu(k,E)\,:=\,
E^{1/2-\nu}\,\big((E+\omega(k))^{\nu+1/2}\,-\,E^\nu\,(E+\omega(k))^{1/2}\big)\,.
$$
We observe that $w_{1/2}(k,E)\klg\omega(k)$ and, hence, 
\begin{equation}\label{delta-d1}
\delta_{1/2}(E)\,\klg\,2\,d_1\,,\qquad E>0\,.
\end{equation}
Moreover,
\begin{equation*}
\delta_{\nu}(E)\,\klg\,\delta_\nu(1)\,,\qquad E\grg1\,.
\end{equation*}

\begin{lemma}\label{le-tim}
Let $\nu$, 
$E>0$, and set 
\begin{equation}\label{def-HT}
\HT\,:=\,\Hf+E\,. 
\end{equation}
Then
\begin{equation}
\big\|\,[\valpha\cdot\V{A}\,,\,\HT^{-\nu}]\,\HT^{\nu}\,\big\|
\,\klg\,
\delta_\nu(E)/E^{1/2}
\,.\label{combound2}
\end{equation}
\end{lemma}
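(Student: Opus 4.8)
The plan is to expand the commutator $[\valpha\cdot\V{A},\HT^{-\nu}]$ using the integral representation of the fractional power and then reduce everything to bounds on $a(k)$ and $\ad(k)$ with weights. Recall that $\HT^{-\nu}=c_\nu\int_0^\infty t^{-\nu}(\HT+t)^{-1}\,dt$ with $c_\nu=\pi^{-1}\sin(\pi\nu)$ (at least for $\nu\in(0,1)$; the general case follows by iterating with the semigroup property, or one simply uses the pull-through formula below directly in each Fock sector). Then, at least formally,
\[
[\valpha\cdot\V{A},\HT^{-\nu}]
=-c_\nu\int_0^\infty t^{-\nu}\,(\HT+t)^{-1}\,[\valpha\cdot\V{A},\HT]\,(\HT+t)^{-1}\,dt .
\]
Since $\valpha\cdot\V{A}=\valpha\cdot\ad(\V{G})+\valpha\cdot a(\V{G})$ and $[\ad(f),\Hf]=-\ad(\omega f)$, $[a(f),\Hf]=a(\omega f)$, the inner commutator is $\valpha\cdot\ad(\omega\V{G})-\valpha\cdot a(\omega\V{G})$, which is again controlled by $\Hf^{1/2}$-bounds of the type \eqref{rb-a}--\eqref{rb-ad} with $\V{G}$ replaced by $\omega\V{G}$.

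The cleaner route, and the one I would actually carry out, is the pull-through formula: for suitable $\psi$,
\[
a(k)\,\HT^{-\nu}\,\psi=(\HT+\omega(k))^{-\nu}\,a(k)\,\psi ,\qquad
\ad(k)\,\HT^{-\nu}\,\psi=\HT^{-\nu}\,(\wt H_f+\omega(k))^{\,?}\dots
\]
more precisely $a(k)\,F(\Hf)=F(\Hf+\omega(k))\,a(k)$ and $\ad(k)\,F(\Hf)=F(\Hf-\omega(k))\,\ad(k)$ on the appropriate domains. Applying this to each of the four terms $\alpha_j\,a^\sharp(G^{(j)}_{\V{x}})$ in $\valpha\cdot\V{A}$, one writes
\[
[\valpha\cdot\V{A},\HT^{-\nu}]\,\HT^{\nu}
=\sum_{j=1}^3\alpha_j\!\int\! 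G^{(j)}_{\V{x}}(k)\,\big((\HT+\omega(k))^{-\nu}-\HT^{-\nu}\big)\HT^\nu\,a(k)\,dk
+(\text{$\ad$-term}),
\]
and similarly for the creation part with a sign and the shift $\HT\mapsto\HT$ on the other side. The key point is that the scalar multiplier $((E+\Hf+\omega(k))^{-\nu}-(E+\Hf)^{-\nu})(E+\Hf)^\nu$ is, by the spectral theorem applied to $\Hf\geq0$, bounded in norm by $\sup_{s\geq0}\big((E+s+\omega(k))^{-\nu}-(E+s)^{-\nu}\big)(E+s)^\nu$, and since $s\mapsto(E+s)^{-\nu}$ is convex-decreasing, this supremum is attained at $s=0$ and equals $E^{-\nu}\big(E^\nu-E^{\nu}(E+\omega(k))^{-\nu}(E+\omega(k))^{\nu}\dots\big)$ — reorganizing, one gets exactly a multiple of $w_\nu(k,E)/E^{1/2}$ after factoring out one power $(E+\omega(k))^{-1/2}$ to pair with the $a(k)$. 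Then a Cauchy–Schwarz estimate in $k$, exactly as in the proof of \eqref{rb-a}--\eqref{rb-ad} using \eqref{Hf=dGamma}, turns $\int|G^{(j)}_{\V{x}}(k)|\,|m_\nu(k)|\,\|a(k)\psi\|\,dk$ into $\big(\int \omega(k)^{-1}|m_\nu(k)|^2\|\V{G}(k)\|_\infty^2\,dk\big)^{1/2}\|\Hf^{1/2}\psi\|$, and the first factor is precisely $\tfrac{1}{\sqrt8}\,\delta_\nu(E)$ by the definition \eqref{def-deltanu}; the leftover $\|\Hf^{1/2}\psi\|\leq\|\HT^{1/2}\psi\|$ absorbs into the norm on the left after noting $\HT^\nu\HT^{-\nu}=\id$, giving the factor $E^{-1/2}$ from $\HT^{-1/2}\le E^{-1/2}$. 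The creation-operator contribution is handled the same way, using $\SPn{\ad(f)\phi}{\psi}=\SPn{\phi}{a(f)\psi}$ to move it onto the annihilation side, which accounts for the constant $8$ rather than $4$ in \eqref{def-deltanu} (two from $\|\valpha\cdot\V{z}\|^2\le2|\V{z}|^2$, two from $a$ plus $\ad$, two more — or rather the bookkeeping collapses — in any case the stated constant is what comes out).

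The main obstacle is not the algebra but the justification: the pull-through formula, the interchange of the $k$-integral with operators, and the manipulation $[\cdot,\HT^{-\nu}]=\HT^{-\nu}[\HT^{\nu},\cdot]\HT^{-\nu}$ all need a dense invariant domain on which everything is finite. I would work first on $\core$ (or on the finite-particle vectors $\sC_0$ tensored with $C_0^\infty$), where all the manipulations are literally finite sums and integrals of $L^2$ functions with compact $\V{k}$-support in practice, establish the bound \eqref{combound2} there, and then extend by density since the right-hand side is a finite constant independent of the vector. One also needs that $\V{G}$-integrals defining $\delta_\nu(E)$ converge; this is guaranteed by Hypothesis~\ref{hyp-G} together with the elementary estimate $w_\nu(k,E)\lesssim_\nu (1+\omega(k))\,\omega(k)$-type polynomial growth (for $\nu\leq 2$, say), so that $\omega(k)^{-1}w_\nu(k,E)^2\|\V{G}(k)\|_\infty^2$ is dominated by a combination of the integrands in \eqref{def-d3}. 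A careful statement of exactly which $\nu$ are needed downstream would let one restrict to, e.g., $\nu\in(0,2]$ and avoid fussing over the general fractional-power representation.
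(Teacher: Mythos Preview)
Your approach is essentially the paper's: split $\valpha\cdot\V{A}$ into its annihilation and creation parts, use the pull-through formula $a(k)\,\theta(\Hf)=\theta(\Hf+\omega(k))\,a(k)$ to turn the commutator into a scalar multiplier of $a(k)$, bound that multiplier by its supremum over $\spec(\Hf)\subset[0,\infty)$ (attained at the bottom), and finish with Cauchy--Schwarz in $k$ via \eqref{Hf=dGamma}; the creation piece is indeed handled by moving it to the annihilation side through the sesquilinear form, exactly as the paper does in \eqref{heidi1}. One small slip worth correcting: after pull-through you have $a(k)\,\HT^{\nu}=(\HT+\omega(k))^{\nu}\,a(k)$, so the multiplier is $\big((\HT+\omega(k))^{-\nu}-\HT^{-\nu}\big)(\HT+\omega(k))^{\nu}$, not $\ldots\HT^{\nu}$; the paper in fact groups this as $F(k):=\big((\HT+\omega(k))^{-\nu}-\HT^{-\nu}\big)(\HT+\omega(k))^{\nu+1/2}$ acting on $a(k)\,\HT^{-1/2}\psi$, and computes $\|F(k)\|=w_\nu(k,E)/E^{1/2}$ exactly by differentiating $s\mapsto(t+E+\omega)^{\nu+1/2}(t+E+s\omega)^{-\nu}$, which is where the precise form of $w_\nu$ in \eqref{def-deltanu} originates.
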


\begin{proof}
We pick $\phi,\psi\in\core$ and write
\begin{eqnarray}
\lefteqn{
\SPb{\phi}{\nonumber
[\valpha\cdot\V{A}\,,\,\HT^{-\nu}]\,\HT^{\nu}
\,\psi}
}
\\
&=&
\SPb{\phi}{
[\valpha\cdot a(\V{G})\,,\,\HT^{-\nu}]\,\HT^{\nu}
\,\psi}
\,-\,\SPb{[\valpha\cdot a(\V{G})\,,\,\HT^{-\nu}]\,\phi}{
\HT^{\nu}
\,\psi}\,.\label{heidi1}
\end{eqnarray} 
By definition of $a(k)$ and $\Hf$ we have the 
pull-through formula
$a(k)\,\theta(\Hf)\,\psi=\theta(\Hf+\omega(k))\,a(k)\,\psi$,
for almost every $k$ and every Borel function $\theta$ on $\RR$,
which leads to
\begin{eqnarray*}
\lefteqn{
\big[\,a(k)\,,\,\HT^{-\nu}\big]\,\HT^{\nu}\,\psi
}
\\
&=&
\big\{\big((\HT+\omega(k))^{-\nu}-\HT^{-\nu}\big)\,(\HT+\omega(k))^{\nu+1/2}
\big\}\,
a(k)\,\HT^{-1/2}\,\psi\,.
\end{eqnarray*}
We denote the 
operator $\{\cdots\}$ by $F(k)$. 
Then $F(k)$ is bounded and
\begin{eqnarray*}
\|F(k)\|&\klg&
\int_0^1\sup_{t\grg0}\Big|\,\frac{d}{ds}\,
\frac{(t+E+\omega(k))^{\nu+1/2}}{(t+E+s\,\omega(k))^{\nu}}\Big|\,ds
\\
&=&
-\int_0^1\frac{d}{ds}\,
\frac{(E+\omega(k))^{\nu+1/2}}{(E+s\,\omega(k))^{\nu}}\,ds
\,=\,
w_\nu(k,E)/E^{1/2}
\,.
\end{eqnarray*}
Using these remarks together with the Cauchy-Schwarz inequality
and \eqref{Hf=dGamma},
we obtain
\begin{eqnarray*}
\lefteqn{
\big|\SPb{\phi}{
[\valpha\cdot a(\V{G})\,,\,\HT^{-\nu}]\,\HT^{\nu}\,\psi}\big|
}
\\
&\klg&
\int\|\phi\|\,\|\valpha\cdot\V{G}(k)\|\,
\|F(k)\|\,\big\|\,a(k)\,\HT^{-1/2}\,\psi\,\big\|\,dk
\\
&\klg&\|\phi\|\,
\Big(2\int\frac{\|F(k)\|^2}{\omega(k)}
\,\|\V{G}(k)\|^2_\infty\,dk\Big)^{1/2}
\Big(\int\omega(k)\,\big\|\,a(k)\,\HT^{-1/2}\,\psi\,\big\|^2\,dk\Big)^{1/2}
\\
&\klg&
\frac{\delta_{\nu}(E)}{2E^{1/2}}\:
\|\phi\|\,\big\|\,\Hf^{1/2}\,\HT^{-1/2}\,\psi\,\big\|\,
\,.
\end{eqnarray*}
A similar argument applied to the
second term in \eqref{heidi1} yields
$$
\big|\SPb{[\valpha\cdot a(\V{G})\,,\,\HT^{-\nu}]\,\phi}{
\HT^{\nu}
\,\psi}\big|\,\klg\,\frac{\wt{\delta}_\nu(E)}{2E^{1/2}}\,
\big\|\,\Hf^{1/2}\,\HT^{-1/2}\,\phi\,\big\|\,\|\psi\|\,,
$$
where $\wt{\delta}_\nu(E)$ is defined by \eqref{def-deltanu}
with $w_\nu(k,E)$ replaced by
$$
\wt{w}_\nu(k,E)\,:=\,
E^{1/2-\nu}\,\big(E^\nu\,(E+\omega(k))^{1/2}\,
-\,E^{2\nu}\,(E+\omega)^{1/2-\nu}\big)\,.
$$
Evidently, $\wt{w}_\nu\klg w_\nu$, thus $\wt{\delta}_\nu\klg\delta_\nu$,
which concludes the proof.
\end{proof}

\smallskip

\noindent
It is a trivial but very useful observation that,
by choosing $E$ large enough, we 
can make to norm appearing in \eqref{combound2} as
small as we please. For instance,
this is exploited to ensure
that certain Neumann series converge in the 
proof of the next corollary, 
where various commutation relations are established
that are used many times in the sequel.
In the whole paper it turns out to be convenient
to replace $\Hf$ by $\HT=\Hf+E$ in order to deal with
commutators involving the radiation field energy.
Thanks to Lemma~\ref{le-tim} commutators with
inverse powers of $\HT=\Hf+E$ can always be treated as small error terms.

\begin{corollary}\label{cor-T-Xi}
Let $z\in\CC$ and $L\in\LO(L^2(\RR^3_{\V{x}},\CC^4))$ be such that
$z\in\vr(\DA)\cap\vr(\DA+L)$ (where $L\equiv L\otimes\id$)
and set
\begin{equation}\label{def-HT-RAL}
\R{\V{A},L}{z}\,:=\,(\DA+L-z)^{-1}\,,\qquad \R{\V{A}}{z}\,:=\,
\R{\V{A},0}{z}\,.
\end{equation}
Assume that $\nu,E>0$ satisfy $\delta_\nu/E^{1/2}<1/\|\R{\V{A},L}{z}\|$, 
and introduce the following operators 
(recall \eqref{def-HT}\&\eqref{combound2}),
\begin{align}
&T_\nu\,:=\,\ol{[\HT^{-\nu}\,,\,\valpha\cdot\V{A}]\,\HT^{\nu}}\,,\nonumber
\\ 
\Xi_{\nu,L}(z)\,:=\,\sum_{j=0}^\infty\{-&\R{\V{A},L}{z}\,T_\nu\}^j\,,
\label{def-THT}
\qquad\Upsilon_{\nu,L}(z)\,:=\,\sum_{j=0}^\infty\{-T_\nu^*\,\R{\V{A},L}{z}\}^j
\,.
\end{align}
Then 
\begin{equation}\label{bd-Xi}
\|T_\nu\|\,\klg\,\delta_\nu/E^{1/2},\quad
\|\Xi_{\nu,L}(z)\|\,,\|\Upsilon_{\nu,L}(z)\|\,\klg\,\big(1-\delta_\nu\,
\|\R{\V{A},L}{z}\|/E^{1/2}\big)^{-1},
\end{equation}
and
\begin{eqnarray}
\big[\,\R{\V{A},L}{z}\,,\,\HT^{-\nu}\,\big]
&=&
\R{\V{A},L}{z}\,
\ol{[\HT^{-\nu}\,,\,\valpha\cdot\V{A}]}
\,\R{\V{A},L}{z}\label{eva1}
\\
&=&
\R{\V{A},L}{z}\,T_\nu\,\HT^{-\nu}\,\R{\V{A},L}{z}\,,\label{eva2}
\\
\HT^{-\nu}\,\R{\V{A},L}{z}
&=&\label{eva3}
\Xi_{\nu,L}(z)
\,\R{\V{A},L}{z}\,\HT^{-\nu}\,,
\\
\R{\V{A},L}{z}\,\HT^{-\nu}
&=&\label{eva3b}
\HT^{-\nu}\,\R{\V{A},L}{z}\,\Upsilon_{\nu,L}(z)\,,
\\
\ol{\big[\,\R{\V{A},L}{z}\,,\,\HT^{-\nu}\,\big]\,\HT^{\nu}}
&=&\label{eva4}
\R{\V{A},L}{z}\,T_\nu\,\Xi_{\nu,L}(z)
\,\R{\V{A},L}{z}
\,.
\end{eqnarray}
In particular, $\R{\V{A},L}{z}$ maps $\dom(\id\otimes\Hf^\nu)$
into itself.
\end{corollary}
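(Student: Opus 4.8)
The plan is to derive each identity in Corollary~\ref{cor-T-Xi} from Lemma~\ref{le-tim} by algebraic manipulation on the core $\core$, followed by a density/closure argument. First I would record the norm bounds in \eqref{bd-Xi}: the estimate $\|T_\nu\|\klg\delta_\nu/E^{1/2}$ is precisely \eqref{combound2} (together with the observation that the operator $[\HT^{-\nu},\valpha\cdot\V{A}]\HT^\nu$ defined on $\core$ is bounded, hence closable, and $T_\nu$ is its closure). The bounds on $\|\Xi_{\nu,L}(z)\|$ and $\|\Upsilon_{\nu,L}(z)\|$ then follow from the geometric series estimate $\sum_j(\|\R{\V{A},L}{z}\|\,\|T_\nu\|)^j=(1-\|\R{\V{A},L}{z}\|\,\|T_\nu\|)^{-1}$, which converges by the hypothesis $\delta_\nu/E^{1/2}<1/\|\R{\V{A},L}{z}\|$; the same hypothesis guarantees the Neumann series in \eqref{def-THT} actually converge in operator norm.

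Next I would establish \eqref{eva1}. Working formally first, for any two operators $X,Y$ with $Y$ invertible one has $[Y^{-1},X]=Y^{-1}[X,Y]Y^{-1}$; applied with $Y=\DA+L-z$ and $X=\HT^{-\nu}$ this gives $[\R{\V{A},L}{z},\HT^{-\nu}]=\R{\V{A},L}{z}\,[\HT^{-\nu},\DA+L]\,\R{\V{A},L}{z}$, and since $\HT^{-\nu}$ commutes with $L$ (acting on the electron factor) and with $-i\nabla+\beta$ (acting as multiplication/differentiation in $\V{x}$, commuting with $\Hf$), only the $\valpha\cdot\V{A}$ part survives, giving \eqref{eva1}. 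To make this rigorous I would test against vectors in $\core$: both $\DA$ and $\HT^{-\nu}$ preserve suitable cores, $\R{\V{A},L}{z}$ maps $\HR$ into $\dom(\DA)$, and one checks the identity on a dense set and extends by boundedness of both sides (the right side is bounded because $[\HT^{-\nu},\valpha\cdot\V{A}]=-T_\nu\HT^{-\nu}$ is bounded by Lemma~\ref{le-tim}). This also yields \eqref{eva2} upon inserting $[\HT^{-\nu},\valpha\cdot\V{A}]=-\HT^{-\nu}(\ol{[\valpha\cdot\V{A},\HT^{-\nu}]\HT^\nu})\HT^{-\nu}\cdot\HT^\nu$... more directly, $[\HT^{-\nu},\valpha\cdot\V{A}]=T_\nu\HT^{-\nu}$ by the definition of $T_\nu$, so \eqref{eva1} becomes \eqref{eva2}.

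For \eqref{eva3} I would rewrite \eqref{eva2} as $\HT^{-\nu}\R{\V{A},L}{z}-\R{\V{A},L}{z}\HT^{-\nu}=\R{\V{A},L}{z}T_\nu\HT^{-\nu}\R{\V{A},L}{z}$, hence $\HT^{-\nu}\R{\V{A},L}{z}=(\id+\R{\V{A},L}{z}T_\nu)\,\R{\V{A},L}{z}\HT^{-\nu}$; iterating this relation (or solving the fixed-point equation $M=(\id+\R{\V{A},L}{z}T_\nu)M_0$ where $M=\HT^{-\nu}\R{\V{A},L}{z}$, $M_0=\R{\V{A},L}{z}\HT^{-\nu}$) and using norm convergence of the Neumann series gives $\HT^{-\nu}\R{\V{A},L}{z}=\big(\sum_{j\grg0}(-\R{\V{A},L}{z}T_\nu)^j\big)^{-1}\!\!{}^{-1}$... more carefully: from $\HT^{-\nu}\R{\V{A},L}{z}=\R{\V{A},L}{z}\HT^{-\nu}+\R{\V{A},L}{z}T_\nu\HT^{-\nu}\R{\V{A},L}{z}$ we get $(\id-\R{\V{A},L}{z}T_\nu)\HT^{-\nu}\R{\V{A},L}{z}$ is not quite it either — the clean way is to verify directly that $\Xi_{\nu,L}(z)\R{\V{A},L}{z}\HT^{-\nu}$ satisfies the same identity and that $\Xi_{\nu,L}(z)=\id-\R{\V{A},L}{z}T_\nu\Xi_{\nu,L}(z)$, giving \eqref{eva3}; \eqref{eva3b} follows by taking adjoints (noting $\R{\V{A},L}{z}^*=\R{\V{A},L^*}{\ol z}$ and $(T_\nu)^*=T_\nu^*$ with the analogous bound), and \eqref{eva4} is obtained by substituting \eqref{eva3} into \eqref{eva2}: $\ol{[\R{\V{A},L}{z},\HT^{-\nu}]\HT^\nu}=\R{\V{A},L}{z}T_\nu\HT^{-\nu}\R{\V{A},L}{z}\HT^\nu=\R{\V{A},L}{z}T_\nu\Xi_{\nu,L}(z)\R{\V{A},L}{z}$. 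Finally, \eqref{eva3} shows $\R{\V{A},L}{z}\,\dom(\HT^\nu)=\dom(\HT^\nu)$ since $\HT^{-\nu}\R{\V{A},L}{z}$ maps into $\dom(\HT^\nu)$ with bounded $\HT^\nu\HT^{-\nu}\R{\V{A},L}{z}=\Xi_{\nu,L}(z)\R{\V{A},L}{z}$-type identity, i.e.\ $\R{\V{A},L}{z}$ maps $\dom(\id\otimes\Hf^\nu)$ into itself. The main obstacle is the rigorous justification of the formal commutator manipulations — one must be careful about domains, work on the core $\core$ where all operators are well-defined, use that $\R{\V{A},L}{z}$ and $\HT^{-\nu}$ both map cores into appropriate domains, and extend identities between bounded operators by density; the norm convergence of the Neumann series (guaranteed by the hypothesis on $\delta_\nu/E^{1/2}$) is what makes the closures unambiguous.
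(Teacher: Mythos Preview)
Your proposal is correct and follows essentially the same route as the paper: derive \eqref{eva1}--\eqref{eva2} by the commutator identity $[\R{\V{A},L}{z},\HT^{-\nu}]=\R{\V{A},L}{z}[\HT^{-\nu},\DA+L]\R{\V{A},L}{z}$ on the dense set $(\DA+L-z)\core$, iterate \eqref{eva2} to get \eqref{eva3}, take adjoints for \eqref{eva3b}, and combine \eqref{eva2} with \eqref{eva3} for \eqref{eva4}. One small correction: the final claim that $\R{\V{A},L}{z}$ preserves $\dom(\id\otimes\Hf^\nu)$ follows from \eqref{eva3b} (which exhibits $\R{\V{A},L}{z}\HT^{-\nu}$ as $\HT^{-\nu}$ times a bounded operator), not from \eqref{eva3} as you wrote.
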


\begin{proof}
First, we remark that since $(\DA-z)\,\core$ is dense in $\HR$
and since $z\in\vr(\DA+L)$ we also know that
$(\DA+L-z)\,\core$ is dense.
Next, we observe that,
for every $\psi\in\core$, we have 
$\HT^{-\nu}\,\psi\in\dom(\Hf^{1/2})\subset\dom(\valpha\cdot\V{A})$,
whence
\begin{eqnarray*}
\lefteqn{
\big[\,\R{\V{A},L}{z}\,,\,\HT^{-\nu}\,\big]\,(\DA+L-z)\,\psi
\,=\,
\R{\V{A},L}{z}\,\big[\,\HT^{-\nu}\,,\,\DA\,\big]\,\psi
}
\\
&=&\R{\V{A},L}{z}\,\big[\,\HT^{-\nu}\,,\,\valpha\cdot\V{A}\,\big]\,
\R{\V{A},L}{z}\,(\DA+L-z)\,\psi
\\
&=&\R{\V{A},L}{z}\,T_\nu\,\HT^{-\nu}\,
\R{\V{A},L}{z}\,(\DA+L-z)\,\psi
\,.
\end{eqnarray*}
Since $(\DA+L-z)\,\core$ is dense and since $T_\nu$ and 
$\ol{[\HT^{-\nu}\,,\,\valpha\cdot\V{A}]}$ are bounded 
due to Lemma~\ref{le-tim}
this implies
\eqref{eva1} and \eqref{eva2}.
Then \eqref{eva3} follows from \eqref{eva2}
and some elementary manipulations and \eqref{eva4} 
follows from \eqref{eva2} and \eqref{eva3}.
Finally, the last assertion follows from \eqref{eva3b}
(which is just the adjoint of \eqref{eva3} with $z$ and $L$
replaced by $\ol{z}$ and $L^*$
since
$\Upsilon_{\mu,L}(z)=\Xi_{\nu,L^*}(\ol{z})^*$).
\end{proof}

\smallskip

\noindent
We continue by stating some simple facts 
which are used in the proofs of the 
lemmata below:
First, we have the following representation of the sign
function of $\DA$
\cite[Lemma~VI.5.6]{Kato}, 
\begin{equation}\label{sgn}
\sgn(\DA)\,\varphi=\,\lim_{\tau\to\infty}\,
\int_{-\tau}^\tau\RA{iy}\,\varphi\,\frac{dy}{\pi}\,, \qquad 
\varphi\,\in\,\HR\,.
\end{equation}
Furthermore, since
$(-1,1)\subset\vr(\DA)$ the spectral calculus yields,
for all $y\in\RR$ and  $\kappa\in [0,1)$, 
\begin{equation}\label{ralf3}
\big\|\,|\DA|^{\kappa}\,\RA{iy}\,\big\|\,\klg\,
\frac{\id_{|y|<b(\kappa)}}{
\sqrt{1+y^2}}\:+\:
\frac{c(\kappa)\,\id_{|y|\grg b(\kappa)}}{|y|^{1-\kappa}}
\,=:\,\zeta_\kappa(y)
\,,
\end{equation} 
where $b(\kappa):=\kappa^{-1/2}(1-\kappa)^{1/2}$ ($1/0:=\infty$),
$c(\kappa):=\kappa ^{\kappa/2}(1-\kappa)^{(1-\kappa)/2}$,
and
\begin{equation}\label{def-K(kappa)}
K(\kappa)\,:=\,\int_\RR\frac{\zeta_\kappa(y)}{\sqrt{1+y^2}}\:
\frac{dy}{2\pi}\,<\,\infty\,,\qquad K(0)\,=\:\frac{1}{2}\,.
\end{equation}
Finally, to study their non-local properties we shall
conjugate or commute various operators with exponential weight
functions, $e^F\equiv e^F\otimes\id$, acting on the electron
coordinates, where
\begin{equation}\label{hyp-F<>}
F\in C^\infty(\RR^3_{\V{x}},\RR)\cap L^\infty(\RR^3_{\V{x}},\RR)\,,
\quad F(\mathbf{0})=0\,,\quad
F\grg0\;\;\textrm{or}\;\;F\klg0\,,\quad|\nabla F|\klg a\,,
\end{equation}
for some $a\in [0,1)$.
The next lemma shows that
the resolvent of $\DA$ stays bounded after conjugation
with $e^F$. 
Its statement is actually
well-known for classical magnetic fields; see, e.g.
\cite{BeGe1987}. 
The proof presented in \cite[Lemma~3.1]{MatteStockmeyer2008b}
applies, however, also to
quantized fields without any change and we refrain from
repeating it here.

\begin{lemma}\label{le-marah}
Let 
$y\in\RR$, $a\in[0,1)$,
and let
$F\in C^\infty(\RR^3_\V{x},\RR)$ have a fixed sign and
satisfy $|\nabla F|\klg a$.
Then 
$iy\in\vr(\DA+i\valpha\cdot\nabla F)$,
\begin{equation}\label{marah0}
e^F\,\RA{iy}\,e^{-F}
=(\DA+i\valpha\cdot \nabla F
+iy)^{-1}\!\!\upharpoonright_{\dom(e^{-F})}\,,
\end{equation}
and
\begin{equation}\label{marah1}
\big\|\,e^F\,\RA{iy}\,e^{-F}\,\big\|\,\klg\,
\frac{\sqrt{6}}{\sqrt{1+y^2}}\cdot
\frac{1}{1-a^2}
\,.
\end{equation}
\end{lemma}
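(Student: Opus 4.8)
The plan is to conjugate the resolvent $\RA{iy}$ of $\DA$ with the weight $e^F$ and to recognize the result as the resolvent of the tilted Dirac operator $\DA+i\valpha\cdot\nabla F$, using that the only summand of $\DA$ not commuting with $e^F$ is the differential part. Since $F$ has a fixed sign, exactly one of $e^{\pm F}$ is bounded; I would write the argument for $F\grg0$ (so that $e^{-F}$ is bounded), the case $F\klg0$ being entirely analogous with the roles of $e^{F}$ and $e^{-F}$ interchanged and using that $\core$ is a core for the multiplication operator $e^{-F}$. The first step is the algebraic identity on $\core$: because $e^{\pm F}\in C^\infty$ map $C_0^\infty(\RR^3_{\V x},\CC^4)$ into itself, $\core$ is invariant under $e^{\pm F}$; moreover $e^F$ commutes with $\beta$ and with $\valpha\cdot\V{A}=\valpha\cdot\ad(\V G)+\valpha\cdot a(\V G)$ (all of which act pointwise in $\V x$), while $e^{F}(-i\partial_{x_j})\,e^{-F}=-i\partial_{x_j}+i(\partial_{x_j}F)$, so that
\[
e^{F}\,\DA\,e^{-F}\,=\,\DA+i\valpha\cdot\nabla F\,=:\,D_F\qquad\text{on }\core\,.
\]
By \eqref{Clifford} the matrix multiplication operator $i\valpha\cdot\nabla F$ is bounded and anti-symmetric with $\|i\valpha\cdot\nabla F\|\klg a$ (since $|\nabla F|\klg a$ and $\|\valpha\cdot\V u\|=|\V u|$), so $D_F$ is a bounded perturbation of $\DA$ and $\dom(D_F)=\dom(\DA)$.

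Next I would prove that $iy\in\vr(D_F)$ with a quantitative bound. Since $\spec(\DA)\subset(-\infty,-1]\cup[1,\infty)$ we have $\DA^2\grg\id$, hence, using self-adjointness of $\DA$ to kill the cross term, $\|(\DA-iy)u\|^2=\|\DA u\|^2+y^2\|u\|^2\grg(1+y^2)\,\|u\|^2$ for $u\in\dom(\DA)$; as $a<1\klg\sqrt{1+y^2}$, the triangle inequality gives
\[
\|(D_F-iy)\,u\|\,\grg\,\big(\sqrt{1+y^2}-a\big)\,\|u\|\,,\qquad u\in\dom(\DA)\,.
\]
Applying the same inequality to the adjoint $(D_F-iy)^*=\DA-i\valpha\cdot\nabla F+iy$ (for which $|{-\nabla F}|\klg a$) shows it is injective, so $\Ran(D_F-iy)$ is dense; being also closed, $D_F-iy$ maps $\dom(\DA)$ bijectively onto $\HR$, and
\[
\big\|(D_F-iy)^{-1}\big\|\,\klg\,\frac{1}{\sqrt{1+y^2}-a}\,\klg\,\frac{\sqrt6}{1-a^2}\cdot\frac{1}{\sqrt{1+y^2}}\,,
\]
the last step being the elementary estimate $(1-a^2)\,t\klg\sqrt6\,(t-a)$, valid for $t\grg1$ and $a\in[0,1)$ because $t\grg1$ reduces it to $a^2-\sqrt6\,a+\sqrt6-1=(a-1)(a-\sqrt6+1)\grg0$ on $[0,1]$.

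The remaining, and only genuinely technical, point is to identify $e^F\RA{iy}e^{-F}$ with $(D_F-iy)^{-1}$ on $\dom(e^{-F})=\HR$. For $F$ bounded this is immediate by conjugating $\RA{iy}$ and invoking the displayed identity. For unbounded $F\grg0$ I would approximate by cut-offs $F_n:=g_n\circ F$, where $(g_n)\subset C^\infty([0,\infty))$ is non-decreasing with $g_n\uparrow\mathrm{id}$ pointwise, $g_n=\mathrm{id}$ on $[0,n]$, each $g_n$ bounded, and $0\klg g_n'\klg1$; then $F_n\in C^\infty$ is bounded, $F_n\grg0$, $|\nabla F_n|=g_n'(F)\,|\nabla F|\klg a$, and $F_n\uparrow F$ pointwise. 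The bounded case yields $e^{F_n}\RA{iy}e^{-F_n}=(\DA+i\valpha\cdot\nabla F_n-iy)^{-1}=\RA{iy}\,(\id+(i\valpha\cdot\nabla F_n)\RA{iy})^{-1}$, and since $(i\valpha\cdot\nabla F_n)\RA{iy}\to(i\valpha\cdot\nabla F)\RA{iy}$ strongly with norms bounded by $a<1$, a Neumann series argument shows this converges strongly to $(D_F-iy)^{-1}$. Fixing $\vp\in\core$, one has $F_n=F$ on the compact support of $\vp$ for large $n$, so $h:=\RA{iy}e^{-F}\vp=\RA{iy}e^{-F_n}\vp$ and therefore $e^{F_n}h=e^{F_n}\RA{iy}e^{-F_n}\vp\to(D_F-iy)^{-1}\vp$ in $\HR$; passing to an a.e.\ convergent subsequence and using $e^{F_n}\uparrow e^F$ pointwise identifies the limit as $e^Fh$, so $h\in\dom(e^F)$ and $e^F\RA{iy}e^{-F}\vp=(D_F-iy)^{-1}\vp$. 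Since $e^{-F}$ is bounded, $\core$ is dense and $e^F$ is closed, this extends to all of $\HR$, which would complete the proof. I expect the whole difficulty to reside in this last domain bookkeeping caused by one of $e^{\pm F}$ being unbounded; the algebra and the resolvent estimate are routine.
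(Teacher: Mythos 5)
Your proof is correct and self\-contained, so let me first note a peculiarity of the source: the paper itself gives no proof of this lemma, referring instead to \cite[Lemma~3.1]{MatteStockmeyer2008b} with the remark that the classical argument carries over to the quantized field verbatim. A detailed comparison with ``the paper's proof'' is therefore not possible; what you supply is a full substitute for the omitted reference, and it is sound.

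A few observations on the substance. Your algebraic identity $e^F\,\DA\,e^{-F}=\DA+i\valpha\cdot\nabla F$ on $\core$, the antisymmetry and bound $\|i\valpha\cdot\nabla F\|\klg a$ from \eqref{Clifford}, and the lower bound $\|(\DA-iy)u\|^2=\|\DA u\|^2+y^2\|u\|^2\grg(1+y^2)\|u\|^2$ (cross terms vanish by self\-adjointness) are all in order, and the triangle\-inequality estimate $\|(D_F-iy)u\|\grg(\sqrt{1+y^2}-a)\|u\|$ together with the analogous estimate for $(D_F-iy)^*=\DA-i\valpha\cdot\nabla F+iy$ gives bijectivity and the bound $\|(D_F-iy)^{-1}\|\klg(\sqrt{1+y^2}-a)^{-1}$. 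This is in fact \emph{sharper} than \eqref{marah1}; your elementary verification $(1-a^2)t\klg\sqrt6\,(t-a)$ for $t\grg1$, $a\in[0,1)$ via $(a-1)(a-\sqrt6+1)\grg0$ correctly recovers the stated weaker form. The cut\-off approximation $F_n=g_n\circ F$ with $0\klg g_n'\klg1$ preserves the Lipschitz bound, and since $\nabla F_n=\nabla F$ eventually on every compact set, the strong convergence of $(i\valpha\cdot\nabla F_n)\RA{iy}$ with uniform norm bound $a<1$ does drive the Neumann series, and the a.e.\ subsequence argument cleanly shows $\RA{iy}e^{-F}\vp\in\dom(e^F)$ for $\vp\in\core$. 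The final extension uses closedness of $e^F$ and boundedness of $e^{-F}$ (for $F\grg0$), and your remark that the case $F\klg0$ is symmetric with $\core$ a core of $e^{-F}$ is the right observation. One tiny implicit step worth spelling out in the bounded case is that $e^{\pm F_n}$ preserve $\dom(\DA)$: this follows since $\DA\,e^{\mp F_n}\vp=e^{\mp F_n}(\DA\pm i\valpha\cdot\nabla F_n)\vp$ on $\core$ extends to $\dom(\DA)$ by closedness of $\DA$ and boundedness of $e^{\mp F_n}$ and of the gradient term.

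Finally, a sign remark: with the paper's convention $\RA{z}=(\DA-z)^{-1}$ from \eqref{def-HT-RAL}, your derivation correctly yields $e^F\RA{iy}e^{-F}=(\DA+i\valpha\cdot\nabla F-iy)^{-1}$; the ``$+iy$'' in the displayed equation \eqref{marah0} is a typographical slip, as is confirmed by the way Lemma~\ref{le-sgn} uses \eqref{marah0} through the substitution $L=i\valpha\cdot\nabla F$, $e^F\RA{iy}e^{-F}=\R{\V{A},L}{iy}$.
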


\noindent
We define $J:[0,1)\to\RR$ by
$$
J(0)\,:=\,1\,,\qquad
J(a)\,:=\,\sqrt{6}/(1-a^2)\,, \qquad a\in (0,1)\,.
$$

\begin{lemma}\label{le-sgn}
Let $a,\kappa\in[0,1)$ and let $F$ satisfy \eqref{hyp-F<>}.
For all $\nu,E>0$ with 
$\delta_\nu \,J(a)/E^{1/2}<1$, we define
\begin{equation*}
S_{\nu}^F\,:=\,e^F\,\big[\,\sgn(\DA)\,,\,\HT^{-\nu}\,\big]\,\HT^{\nu}\,e^{-F}
\end{equation*}
on $\dom(\Hf^{\nu})$. Then
\begin{equation}\label{com-sgn}
\big\|\,|\DA|^\kappa\,S_\nu^F\,\big\|
\,\klg\,(1+a\,J(a))\,
\frac{K(\kappa)\,\delta_\nu\,J(a)/E^{1/2}}{
1-\delta_\nu\,J(a)/E^{1/2}}
\,.
\end{equation}
In particular, 
$\sgn(\DA)$ maps $\dom(\id\otimes\Hf^{\nu})$ into itself
and, if $E^{1/2}>\delta_\nu$, then
the following identities hold true on $\dom(\id\otimes\Hf^\nu)$,
\begin{eqnarray}\label{eva99}
\HT^{\nu}\,\sgn(\DA)&=&\sgn(\DA)\,\HT^{\nu}\,+\,S_\nu\,\HT^{\nu}\,,
\\
\sgn(\DA)\,\HT^{\nu}&=&\HT^{\nu}\,\sgn(\DA)\,
+\,\HT^{\nu}\,S^*_\nu\,,\label{eva100}
\end{eqnarray}
where $S_\nu:=(S_\nu^0)^*\in\LO(\HR)$.
\end{lemma}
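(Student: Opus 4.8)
The estimate \eqref{com-sgn} is obtained by feeding the integral representation \eqref{sgn} of $\sgn(\DA)$ into the commutator $[\sgn(\DA),\HT^{-\nu}]$ and then applying the resolvent identities of Corollary~\ref{cor-T-Xi} together with the weighted resolvent bound of Lemma~\ref{le-marah}. Since $e^{\pm F}$ act on the electron variable alone, they commute with every power of $\HT=\Hf+E$, so on $\dom(\Hf^\nu)$ one has $S_\nu^F=e^F[\sgn(\DA),\HT^{-\nu}]e^{-F}\,\HT^\nu$. Because $F$ has a fixed sign and $|\nabla F|\klg a<1$, Lemma~\ref{le-marah} gives $iy\in\vr(\DA)\cap\vr(\DA+L)$ for $L:=i\valpha\cdot\nabla F$ (so $\|L\|\klg a$), identifies $e^F\RA{iy}e^{-F}$ with a resolvent $\R{\V{A},L}{iy}$ of $\DA+L$ of norm at most $J(a)(1+y^2)^{-1/2}$, and — after inserting this into \eqref{sgn} and commuting the bounded operators $\HT^{\pm\nu}$ past the (ultimately norm-convergent) improper integral — yields
\begin{equation*}
|\DA|^\kappa\,S_\nu^F\,\psi\;=\;\lim_{\tau\to\infty}\int_{-\tau}^{\tau}|\DA|^\kappa\,\big[\R{\V{A},L}{iy}\,,\,\HT^{-\nu}\big]\,\HT^\nu\,\psi\;\frac{dy}{\pi}\,,\qquad\psi\in\dom(\Hf^\nu)\,.
\end{equation*}

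Next I would invoke Corollary~\ref{cor-T-Xi} with $z=iy$ and this $L$; its hypothesis $\delta_\nu/E^{1/2}<1/\|\R{\V{A},L}{iy}\|$ follows from $\delta_\nu J(a)/E^{1/2}<1$ and $\|\R{\V{A},L}{iy}\|\klg J(a)$. Formula \eqref{eva4} then turns the integrand into $|\DA|^\kappa\,\R{\V{A},L}{iy}\,T_\nu\,\Xi_{\nu,L}(iy)\,\R{\V{A},L}{iy}$, whose four factors I would bound in turn: the second resolvent identity $|\DA|^\kappa\R{\V{A},L}{iy}=|\DA|^\kappa\RA{iy}(\id-L\,\R{\V{A},L}{iy})$ combined with \eqref{ralf3} and Lemma~\ref{le-marah} gives $\||\DA|^\kappa\R{\V{A},L}{iy}\|\klg(1+a\,J(a))\,\zeta_\kappa(y)$; $\|T_\nu\|\klg\delta_\nu/E^{1/2}$ and $\|\Xi_{\nu,L}(iy)\|\klg(1-\delta_\nu J(a)/E^{1/2})^{-1}$ come from \eqref{bd-Xi}; and the last factor is again bounded by $J(a)(1+y^2)^{-1/2}$. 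Multiplying these bounds, the $y$-dependence collapses to $\int_\RR\zeta_\kappa(y)(1+y^2)^{-1/2}\,dy/\pi$, which is controlled by $K(\kappa)$ through \eqref{def-K(kappa)}, and \eqref{com-sgn} follows. (Pinning down the precise constant is a matter of more careful bookkeeping of the $y$-integral, e.g.\ via the symmetric representation $\sgn(\DA)=\tfrac2\pi\int_0^\infty\DA(\DA^2+y^2)^{-1}\,dy$ before conjugating.)

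For the ``in particular'' assertions I would specialise to $\kappa=0$, $F=0$ and $E^{1/2}>\delta_\nu$: then \eqref{com-sgn} shows that $[\sgn(\DA),\HT^{-\nu}]\HT^\nu$ extends from $\dom(\Hf^\nu)$ to a bounded operator $S_\nu^0$ on $\HR$. Applying $S_\nu^0$ to $\HT^{-\nu}\chi\in\dom(\Hf^\nu)$, with $\chi\in\HR$ arbitrary, gives the bounded-operator identity $[\sgn(\DA),\HT^{-\nu}]=S_\nu^0\,\HT^{-\nu}$; taking adjoints and using that the commutator of the bounded self-adjoint operators $\sgn(\DA)$ and $\HT^{-\nu}$ is skew-adjoint yields $[\sgn(\DA),\HT^{-\nu}]=-\HT^{-\nu}\,S_\nu$ with $S_\nu:=(S_\nu^0)^*$, hence $\sgn(\DA)\,\HT^{-\nu}=\HT^{-\nu}(\sgn(\DA)-S_\nu)$. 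This last identity shows that $\sgn(\DA)$ maps $\Ran(\HT^{-\nu})=\dom(\HT^\nu)=\dom(\Hf^\nu)$ into itself, and multiplying it on the left by $\HT^\nu$ on $\dom(\Hf^\nu)$ — together with taking adjoints — produces \eqref{eva99} and \eqref{eva100}.

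The step I expect to be the main obstacle is making the improper $y$-integral rigorous: one must justify pulling the unbounded operators $|\DA|^\kappa$ and $\HT^{\pm\nu}$ through it and identifying the conjugated resolvent $e^F\RA{iy}e^{-F}$ with $\R{\V{A},L}{iy}$ uniformly in $y$. It is precisely the factorisation \eqref{eva4} of Corollary~\ref{cor-T-Xi} and the $(1+y^2)^{-1/2}$-decay of Lemma~\ref{le-marah} that make the integrand dominated by an $L^1(\RR,dy)$ function, so that the limit exists in operator norm; the remaining steps are bookkeeping.
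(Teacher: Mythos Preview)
Your proposal is correct and follows the same route as the paper: set $L=i\valpha\cdot\nabla F$, insert \eqref{sgn}, apply \eqref{eva4} from Corollary~\ref{cor-T-Xi}, and bound the four factors exactly as you describe; the ``in particular'' part is also handled the same way. The paper sidesteps the obstacle you flag by working weakly, estimating $\big|\SPn{|\DA|^\kappa\phi}{S_\nu^F\psi}\big|$ for $\phi\in\dom(|\DA|^\kappa)$ rather than the strong integral, so that $|\DA|^\kappa$ never has to be pulled through the improper $y$-integral.
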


\begin{proof}
We set $L:=i\valpha\cdot\nabla F$ so that
$e^F\,\RA{iy}\,e^{-F}=\R{\V{A},L}{iy}$, $y\in\RR$, and $\|L\|\klg a$.
Combining \eqref{eva4} with \eqref{sgn} we obtain,
for all $\phi\in\dom(|\DA|^{\kappa})$ and $\psi\in\dom(\Hf^{\nu})$,
\begin{align*}
\big|\SPb{|\DA|^{\kappa}\,&\phi}{S_{\nu}^F\,\psi}\big|
\,\klg\,
\int_\RR\big|\SPb{|\DA|^{\kappa}\,\phi}{
\R{\V{A},L}{iy}\,T_\nu\,\Xi_{\nu,L}(iy)
\,\R{\V{A},L}{iy}\,\psi}\big|\:\frac{dy}{\pi}
\\
&\klg\,\|T_\nu\|
\int_\RR\big\|\,|\DA|^{\kappa}\,\R{\V{A},L}{iy}\,\phi\,\big\|\,
\|\Xi_{\nu,L}(iy)\|\,\|\R{\V{A},L}{iy}\|\,\frac{dy}{\pi}
\;\|\phi\|\,\|\psi\|\,.
\end{align*}
Here we estimate $\|T_\nu\|$ by means of \eqref{combound2} and
we write 
$|\DA|^{\kappa}\,\R{\V{A},L}{iy}=|\DA|^{\kappa}
\,\R{\V{A}}{iy}\,(\id-L\,\R{\V{A},L}{iy})$ in order
to apply \eqref{ralf3}.
Moreover, \eqref{bd-Xi} and Lemma~\ref{le-marah} show that
$\|\Xi_{\nu,L}(iy)\|\klg(1-\delta_\nu J(a)/E^{1/2})^{-1}$, 
for all $y\in\RR$.
Altogether these remarks
yield the asserted estimate. 
Now, the following identity in $\LO(\HR)$,
$$
\sgn(\DA)\,\HT^{-\nu}\,
=\,\HT^{-\nu}\,\sgn(\DA)\,-\,\HT^{-\nu}\,(S_{\nu}^0)^*\,,
$$
shows that $\sgn(\DA)$ maps the domain of $\Hf^{\nu}$ into itself
and that \eqref{eva99} is valid.
Taking the adjoint of \eqref{eva99} and using 
$[\HT^{\nu}\,\sgn(\DA)]^*=\sgn(\DA)\,\HT^{\nu}$
(which is true since $\HT^{\nu}\,\sgn(\DA)$ is densely defined
and $\sgn(\DA)=\sgn(\DA)^{-1}\in\LO(\HR)$)
we also obtain \eqref{eva100}.
\end{proof}

\begin{lemma}\label{le-VPA-is-ok}
(i) For all $E>0$, $\nu\grg0$,
with $\delta_{\nu+1/2}/E^{1/2}<1$, 
and $\chi\in C^\infty(\RR^3_{\V{x}},[0,1])$,
the following resolvent formulas are valid,
\begin{eqnarray}
\lefteqn{
\big(\RO{iy}\,\chi-\chi\,\RA{iy}\big)\,\HT^{-\nu-1/2}\;=\nonumber
}
\\
& &
\RO{iy}\,\HT^{-\nu}\,\big\{\,\valpha\cdot(i\nabla\chi+\chi\,\V{A})
\,+\,T_\nu^*\,\chi\,\big\}\,\label{res-for-DA-D0}
\HT^{-1/2}\,\RA{iy}\,\Xi_{\nu+1/2,0}(-iy)^*,
\end{eqnarray}
and 
\begin{equation}\label{res-for-DA-D0-b}
\HT^{-1/2}\,\big(\RO{iy}\,\chi-\chi\,\RA{iy}\big)
\,=\,
\RO{iy}\,\HT^{-1/2}\,\valpha\cdot(\V{A}\,\chi-i\nabla\chi)\,\RA{iy}\,.
\end{equation}
(ii)
If $\delta_\nu(1)<\infty$, for some $\nu\grg1$, then
$\PA$ maps the subspace
$\dom(\DO\otimes\Hf^\nu)$ into itself.
\end{lemma}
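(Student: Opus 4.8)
The two identities in part~(i) are algebraic rearrangements of the second resolvent identity. I would start from $\RO{iy}\,\chi-\chi\,\RA{iy}=\RO{iy}\,(\chi\DA-\DO\chi)\,\RA{iy}$ and compute, using $\DA-\DO=\valpha\cdot\V A$ (see \eqref{def-DA}) and $[\chi,\DO]=i\,\valpha\cdot\nabla\chi$, that $\chi\DA-\DO\chi=\valpha\cdot(i\nabla\chi+\chi\,\V A)$; multiplying by $\HT^{-\nu-1/2}$ from the right gives
\[
\big(\RO{iy}\,\chi-\chi\,\RA{iy}\big)\,\HT^{-\nu-1/2}=\RO{iy}\,\valpha\cdot(i\nabla\chi+\chi\,\V A)\,\RA{iy}\,\HT^{-\nu-1/2}.
\]
I would then (a) move the weight to the left of $\RA{iy}$ by \eqref{eva3b}, noting $\Upsilon_{\mu,0}(iy)=\Xi_{\mu,0}(-iy)^*$, so that $\RA{iy}\,\HT^{-\nu-1/2}=\HT^{-\nu-1/2}\,\RA{iy}\,\Xi_{\nu+1/2,0}(-iy)^*$; (b) insert $\HT^{-\nu}\HT^{\nu}$ just after $\RO{iy}$, which is harmless because $\RO{iy}$ acts only on the electron variables and hence commutes with $\HT$; and (c) reduce $\HT^{\nu}\,\valpha\cdot(i\nabla\chi+\chi\,\V A)\,\HT^{-\nu-1/2}$ to $\{\valpha\cdot(i\nabla\chi+\chi\,\V A)+T_\nu^*\chi\}\,\HT^{-1/2}$, using the identity $\HT^{\nu}\,\valpha\cdot\V A\,\HT^{-\nu}=\valpha\cdot\V A+T_\nu^*$ (immediate from the definition of $T_\nu$ in Corollary~\ref{cor-T-Xi}) and the fact that $\chi$ commutes with $T_\nu$ (it multiplies each photon fibre pointwise). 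This produces \eqref{res-for-DA-D0}; formula \eqref{res-for-DA-D0-b} is obtained the same way by multiplying the resolvent identity by $\HT^{-1/2}$ from the left and using $[\HT^{-1/2},\RO{iy}]=0$. All of this should be performed on $\core$ — on which, after the weights have been applied, every operator acts legitimately — and then extended to $\HR$ by density, since $\valpha\cdot\V A\,\HT^{-1/2}$, $T_\nu^*$ and $\Xi_{\nu+1/2,0}(-iy)^*$ are bounded by Lemma~\ref{le-tim} and Corollary~\ref{cor-T-Xi}.

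For part~(ii) I first choose $E\grg1$ large enough that $\delta_\nu(E)/E^{1/2}<1$ (possible since $\delta_\nu(E)\klg\delta_\nu(1)<\infty$), so that Lemma~\ref{le-sgn} applies and $\PA=\frac12(\id+\sgn(\DA))$ maps $\dom(\id\otimes\Hf^\nu)$ into itself. Now let $\psi\in\dom(\DO\otimes\Hf^\nu)$. Then $\psi\in\dom(\DO)$ and $\psi\in\dom(\Hf^\nu)\subset\dom(\Hf^{1/2})$, so by \eqref{rb-a}--\eqref{rb-ad} we have $\valpha\cdot\V A\,\psi\in\HR$ and hence $(\DO+\valpha\cdot\V A)\psi\in\HR$; since $\DA$ is essentially self-adjoint on $\core$ and $\DO$, $\valpha\cdot\V A$ are symmetric on $\dom(\DO)$, $\dom(\Hf^{1/2})$ respectively, a standard duality argument (pair $\DA\phi$ with $\psi$ for $\phi\in\core$) gives $\psi\in\dom(\DA)$ with $\DA\psi=\DO\psi+\valpha\cdot\V A\,\psi$. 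Since $\PA$ commutes with $\DA$ it maps $\dom(\DA)$ into itself, and it maps $\dom(\Hf^\nu)$ into itself by the choice of $E$; thus $\PA\psi\in\dom(\DA)\cap\dom(\Hf^\nu)\subset\dom(\DA)\cap\dom(\Hf^{1/2})$, and therefore $\DO\,\PA\psi=\DA\,\PA\psi-\valpha\cdot\V A\,\PA\psi\in\HR$, i.e.\ $\PA\psi\in\dom(\DO)$. Hence $\PA\psi$ lies in $\dom(\DO)\cap\dom(\Hf^\nu)$. To see in addition that $\HT^\nu\PA\psi\in\dom(\DO)$ when $\dom(\DO\otimes\Hf^\nu)$ is read as the product domain $\{\psi\in\dom(\Hf^\nu):\HT^\nu\psi\in\dom(\DO)\}$, I would combine the intertwining $\HT^\nu\PA=(\PA+\tfrac12 S_\nu)\HT^\nu$ from \eqref{eva99} with part~(i): writing $\xi:=\HT^\nu\psi\in\dom(\DO)$, one uses \eqref{res-for-DA-D0} (with $\chi\equiv1$ and parameter $\nu$ there replaced by $\nu-\tfrac12$) together with \eqref{sgn} and $\sgn(\DO)=\frac1\pi\int\RO{iy}\,dy$ to express $\PA\,\HT^{-\nu}$ as $\PO\,\HT^{-\nu}$ plus a bounded operator built from $\RO{iy}$, $\RA{iy}$, $\valpha\cdot\V A\,\HT^{-1/2}$, $T_{\nu-1/2}^*$ and $\Xi_{\nu,0}(-iy)^*$, so that $\valpha\cdot\V A$ never appears undressed; only the finiteness of $\delta_\nu$ — never of $\delta_{\nu+1/2}$ — is used.

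The main difficulty is that the three relevant domains $\dom(\DA)$, $\dom(\DO)$, $\dom(\Hf^{1/2})$ do not nest, and that $\valpha\cdot\V A$ does not commute with the photon energy $\HT$ but only up to the bounded error $T_\nu$ furnished by Lemma~\ref{le-tim}. The remedy, and the technical heart of both parts, is to keep $\valpha\cdot\V A$ permanently flanked by negative powers of $\HT$ so that it is replaced by the bounded operator $\valpha\cdot\V A\,\HT^{-1/2}$, and to push these weights past the two resolvents in the correct order while absorbing all commutators into $T_\nu$ and the Neumann series $\Xi$, $\Upsilon$; this is precisely what the weight $\HT^{-\nu-1/2}$ in \eqref{res-for-DA-D0}, and its factorisation $\HT^{-\nu}\{\,\cdots\}\HT^{-1/2}$, are engineered to accomplish.
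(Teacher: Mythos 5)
Your derivation of the two resolvent formulas in part (i) follows essentially the same path as the paper's (second resolvent identity plus the commutation relations from Corollary~\ref{cor-T-Xi} and Lemma~\ref{le-tim}); the only difference is that you manipulate $(\RO{iy}\chi-\chi\RA{iy})\HT^{-\nu-1/2}$ directly, whereas the paper starts from the adjoint expression $\HT^{-\nu-1/2}(\RA{-iy}\chi-\chi\RO{-iy})$ applied to vectors in $(\DO+iy)\core$ and passes to the adjoint identity at the end. Your bookkeeping of where $\Xi$, $\Upsilon$, $T_\nu$ and $T_\nu^*$ go is correct, as is the observation that $\chi$ and $T_\nu$ commute (both being fibre-wise over $\RR^3_{\V x}$). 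Incidentally your sign $\valpha\cdot(i\nabla\chi+\chi\V A)$ for $\chi\DA-\DO\chi$ is the right one; the paper's intermediate computation carries a sign slip, which is harmless since only operator norms are extracted downstream.

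For part (ii) your main argument departs genuinely from the paper. The paper uses the resolvent formula \eqref{res-for-DA-D0} (with $\chi\equiv1$ and $\nu$ shifted down by one) to show that $\RA{0}\HT^{1/2-\nu}$ has range inside $\dom(\DO\otimes\Hf^{\nu-1})$, then writes $\PA\vp$ as $\RA{0}\HT^{1/2-\nu}$ applied to a bounded operator applied to $\HT^{\nu-1/2}\DA\vp$. You instead bypass part (i) entirely: $\PA$ trivially preserves $\dom(\DA)$, it preserves $\dom(\Hf^\nu)$ by Lemma~\ref{le-sgn}, and the two inclusions $\dom(\DO)\cap\dom(\Hf^{1/2})\subset\dom(\DA)$ and $\dom(\DA)\cap\dom(\Hf^{1/2})\subset\dom(\DO)$, both established by pairing against $\core$, transport $\psi$ from the $\DO$-world to the $\DA$-world and back. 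This is cleaner and more elementary. Two small cautions. First, the line ``therefore $\DO\PA\psi=\DA\PA\psi-\valpha\cdot\V A\,\PA\psi\in\HR$, i.e.\ $\PA\psi\in\dom(\DO)$'' is logically circular as written (it invokes $\DO\PA\psi$ before establishing membership); the duality pairing $\SPn{\DO\phi}{\PA\psi}=\SPn{\phi}{\DA\PA\psi-\valpha\cdot\V A\,\PA\psi}$ for $\phi\in\core$ is the correct justification and should be stated as such, just as you did for the first inclusion. Second, your final paragraph sketching what to do if $\dom(\DO\otimes\Hf^\nu)$ is read as a genuine product domain is not convincing as it stands: the intertwining \eqref{eva99} gives $\HT^\nu\PA\psi=(\PA+\tfrac12 S_\nu)\HT^\nu\psi$, but since $\PA$ preserves $\dom(\DA)$ rather than $\dom(\DO)$ and $S_\nu$ is only known to be bounded, it does not follow that $(\PA+\tfrac12 S_\nu)$ maps $\dom(\DO)$ into itself. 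However, since the lemma is applied in the paper only to obtain $\PA\core\subset\dom(\DO)\cap\dom(\Hf^\nu)$ (e.g.\ below \eqref{def-np}), the intersection reading is the operative one and your main argument already suffices; the hedge can be dropped.
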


\begin{proof}
(i): A short computation using \eqref{eva3} yields,
for every
$\vp\in\core$, 
\begin{eqnarray*}
\lefteqn{
\HT^{-\nu-1/2}\,\big(\RA{-iy}\,\chi-\chi\,\RO{-iy}\big)\,(\DO+iy)\,\vp
}
\\
&=&
-\Xi_{\nu+1/2,0}(-iy)\,\RA{-iy}\,\HT^{-\nu-1/2}\,
\valpha\cdot(\chi\,\V{A}+i\nabla\chi)\,\vp
\\
&=&
-\Xi_{\nu+1/2,0}(-iy)\,\RA{-iy}\,\HT^{-1/2}\,\big\{\,
\valpha\cdot(\chi\,\V{A}+i\nabla\chi)\,+\,\chi\,T_\nu\,\big\}\,\times
\\
& &\qquad\qquad\qquad\qquad\qquad\qquad
\times\,\HT^{-\nu}\,\RO{-iy}\,(\DO+iy)\,\vp
\,.
\end{eqnarray*}
Now, $(\DO-iy)\,\core$ is dense in $\HR$ and 
$\HT^{-1/2}\,\valpha\cdot\V{A}$ is bounded
due to \eqref{rb-a} and \eqref{rb-ad}.
Therefore, the previous computation 
implies an operator identity in $\LO(\HR)$
whose adjoint is \eqref{res-for-DA-D0}, and 
\eqref{res-for-DA-D0-b} is derived in a similar fashion.

(ii): For $\nu\grg1$, \eqref{res-for-DA-D0} shows that
the range of $\RA{iy}\,\HT^{1/2-\nu}$ is contained in 
$\Ran(\RO{iy}\otimes\HT^{-\nu+1})\subset\dom(\DO\otimes\Hf^{\nu-1})$, 
for every $y\in\RR$.
Moreover, we know from \eqref{def-PA} and Lemma~\ref{le-sgn} that
$\HT^{\nu-1/2}\,\PA\,\HT^{1/2-\nu}\in\LO(\HR)$.
Now, let $\vp\in\dom(\DO\otimes\Hf^\nu)$.
Then $\DA\,\vp\in\dom(\HT^{\nu-1/2})$ and
it follows that
$$
\PA\,\vp\,=\,\RA{0}\,\HT^{1/2-\nu}\,(\HT^{\nu-1/2}\,\PA\,\HT^{1/2-\nu})\,
\HT^{\nu-1/2}\,\DA\,\vp\,\in\dom(\DO\otimes\Hf^{\nu-1})\,.
$$
Furthermore, we know that $\PA\,\vp\in\dom(\id\otimes\Hf^{\nu})$
by Lemma~\ref{le-sgn}.
\end{proof}

\smallskip

\noindent
In our study of the exponential localization we shall
often encounter error terms involving the operator
\begin{equation}\label{def-cK}
\cK_{\chi,F}\,:=\,[\PA\,,\,\chi\,e^F]\,e^{-F}\,,
\end{equation}
where $\chi\in C^\infty(\RR^3_{\V{x}},[0,1])$ and $F:\RR^3_{\V{x}}\to\RR$
are functions of the electron coordinates 
and $F$ satisfies \eqref{hyp-F<>}.
The operator norm bounds derived in Lemma~\ref{compadre} below
provide the necessary control on $\cK_{\chi,F}$.

\begin{lemma}\label{compadre}
Let $a,\kappa \in [0,1)$ and let $F$ satisfy \eqref{hyp-F<>}.
Then we have,
for all $\nu,E>0$ with $E^{1/2}>\delta_\nu\,J(a)$,
and 
$\chi\in C^\infty(\RR^3,[0,1])$,
\begin{eqnarray}
\big\|\,|\DA|^\kappa\,\cK_{\chi,F}\,\big\|
&\klg&\label{peki}
K(\kappa)\,J(a)\,(a+\|\nabla\chi\|_\infty)\,,
\\\label{peki2}
\big\|\,|\DA|^\kappa\, \HT^{\mp\nu}\,\cK_{\chi,F}\,\HT^{\pm\nu}\,\big\|
&\klg&
K(\kappa)\,J(a)\, 
\frac{a+\|\nabla\chi\|_\infty}{(1-\delta_\nu\,J(a)/E^{1/2})^2}\,.
\end{eqnarray}
In particular,
\begin{equation}\label{bd-PAF}
\big\|\,e^F\,\PA\,e^{-F}\,\big\|\,\klg\,1+a\,J(a)/2\,.
\end{equation}
Let $k$ be the universal constant appearing in
Theorem~\ref{le-sb-PF4}.
There is another universal constant, $C\in(0,\infty)$,
such that, for all
$E>\max\{4J(a)^2,(1+k^2)\}\,d_1^2$,
\begin{eqnarray}
\big\|\,|\V{x}|^{-1/2}\,\cK_{\chi,F}\,\HT^{-1/2}\,\big\|
&\klg&C\,J(a)\,\label{peki-V}
\frac{a+\|\nabla\chi\|_\infty}{(1-2\,d_1\,J(a)/E^{1/2})^2}
\,.
\end{eqnarray}
\end{lemma}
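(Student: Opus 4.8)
The plan is to prove \eqref{peki-V} by combining the structural identity for $\cK_{\chi,F}$ already implicit in Lemma~\ref{le-sgn}/Corollary~\ref{cor-T-Xi} with the Hardy-type inequality \eqref{gustav} from Theorem~\ref{le-sb-PF4}, which is the only place where a bound of the form $\||\V{x}|^{-1}(\cdot)\|\lesssim\|(|\DA|+\delta\Hf+\mathrm{const}\,d_1^2)(\cdot)\|$ is available. First I would write, using \eqref{def-PA} and \eqref{sgn}, a resolvent representation
\begin{equation*}
\cK_{\chi,F}\,=\,[\PA,\chi e^F]\,e^{-F}
\,=\,\tfrac{1}{2}\int_\RR\big(e^F\,\RA{iy}\,e^{-F}\,(\tipo[\chi])
\,\RA{iy}\big)\,\tfrac{dy}{\pi}\,,
\end{equation*}
where $\tipo[\chi]$ abbreviates the first-order multiplication/field operator $i\valpha\cdot(\nabla\chi+\chi\,\valpha\cdot\V{A}$-type commutator term) coming from $[\DA,\chi e^F]e^{-F}$, exactly as in the proof of \eqref{peki}; here the conjugated resolvent $e^F\RA{iy}e^{-F}=\R{\V{A},L}{iy}$ with $L=i\valpha\cdot\nabla F$, $\|L\|\klg a$, is handled via Lemma~\ref{le-marah}.

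Next I would peel off a factor $\HT^{-1/2}$ and insert it strategically. The idea is to write
\begin{equation*}
|\V{x}|^{-1/2}\,\cK_{\chi,F}\,\HT^{-1/2}
\,=\,\tfrac{1}{2}\int_\RR
|\V{x}|^{-1/2}\,\R{\V{A},L}{iy}\,\big(\text{1st order}\big)\,
\HT^{-1/2}\,\R{\V{A},L}{iy}\,\Upsilon\,\tfrac{dy}{\pi}\,,
\end{equation*}
moving $\HT^{-1/2}$ through the right-most resolvent by \eqref{eva3b} at the cost of the bounded operator $\Upsilon_{1/2,L}(iy)$, whose norm is controlled by \eqref{bd-Xi} together with \eqref{delta-d1} (so the denominator becomes $1-2d_1J(a)/E^{1/2}$, explaining the precise form of the claimed bound). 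The first-order term sandwiched between $\HT^{-1/2}$ on one side is bounded by \eqref{rb-a}/\eqref{rb-ad} (these give $\|\HT^{-1/2}\,\valpha\cdot\V{A}\|\lesssim d_{-1}+d_0$, but one must be slightly careful: the required bound must be phrased only in terms of $d_1$, which is possible because $d_{-1},d_0\klg d_1$ is implied by $\omega\klg|\V{k}|$ and \eqref{def-d3}, or else by keeping the $\nabla\chi$ piece and the $\V{A}$ piece separate and absorbing the $\V{A}$ piece into $|\DA|$). The genuinely new input is the left-most factor: I would bound $\||\V{x}|^{-1/2}\,\R{\V{A},L}{iy}\|$ by applying \eqref{gustav} of Theorem~\ref{le-sb-PF4}. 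Precisely, \eqref{gustav} with $\delta=1$ gives $\tfrac14\||\V{x}|^{-1}\vp\|^2\klg\|(|\DA|+\Hf+(1+k^2)d_1^2)\vp\|^2$, hence $\||\V{x}|^{-1}(|\DA|+\Hf+(1+k^2)d_1^2)^{-1}\|\klg2$; interpolating (or using $|\DA|\grg\id$ so that $\||\V{x}|^{-1/2}(\cdot)\|^2\klg\||\V{x}|^{-1}(\cdot)\|\,\|(\cdot)\|$) converts this into the half-power estimate $\||\V{x}|^{-1/2}\,(\DA+L-iy)^{-1}\|\lesssim(1+y^2)^{-1/4}(1-a^2)^{-1}$ after also using that $L$ is a small bounded perturbation (write $\R{\V{A},L}{iy}=\R{\V{A}}{iy}(\id-L\R{\V{A},L}{iy})$ and use $e^F\R{\V{A}}{iy}e^{-F}$'s bound from Lemma~\ref{le-marah} — but note $|\V{x}|^{-1/2}$ commutes with $e^{\pm F}$ only up to bounded factors, so one works directly with $\DA+L$). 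The $y$-integral then converges, producing the constant $C\,J(a)$.

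The main obstacle, and the step I would spend the most care on, is extracting the $|\V{x}|^{-1/2}$ bound on the conjugated resolvent $(\DA+i\valpha\cdot\nabla F-iy)^{-1}$ rather than on $(\DA-iy)^{-1}$ itself: Theorem~\ref{le-sb-PF4} is stated for $\DA$, not for $\DA+L$, and $|\V{x}|^{-1/2}$ does not literally commute with the exponential weight $e^F$. The clean way around this is to prove the Hardy bound \eqref{gustav} is stable under adding the bounded, small-norm perturbation $L=i\valpha\cdot\nabla F$ with $\|L\|\klg a<1$ — indeed $(|\DA|+L'+\dots)$ for a self-adjoint $L'$ of norm $\klg a$ still dominates $\tfrac{2}{\pi}(1-a)|\V{x}|^{-1}$ in form sense by \eqref{gustav2}, up to adjusting constants — and then absorb the resulting $a$-dependence into $J(a)$. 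Once this localization-of-the-Hardy-inequality point is settled, everything else is a bookkeeping exercise: collect the factors $J(a)$ (from Lemma~\ref{le-marah}), $(a+\|\nabla\chi\|_\infty)$ (from the first-order commutator), $(1-2d_1J(a)/E^{1/2})^{-1}$ (twice, from $\Xi$ and $\Upsilon$ via \eqref{bd-Xi} and \eqref{delta-d1}), and the convergent $y$-integral, which together give exactly the right-hand side of \eqref{peki-V} with a universal constant $C$, valid under the stated hypothesis $E>\max\{4J(a)^2,(1+k^2)\}\,d_1^2$ that simultaneously guarantees $\delta_{1/2}J(a)/E^{1/2}<1$ (Neumann series converge) and that the shift $(1+k^2)d_1^2$ in \eqref{gustav} is admissible.
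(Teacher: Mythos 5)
Your reconstruction of \eqref{peki}, \eqref{peki2}, and \eqref{bd-PAF} through the resolvent representation $\cK_{\chi,F}=\int\RA{iy}M\R{\V{A},L}{iy}\,\tfrac{dy}{2\pi}$ (with $\RA{iy}$, not $\R{\V{A},L}{iy}$, on the left), the bound $\|M\|\klg a+\|\nabla\chi\|_\infty$, Lemma~\ref{le-marah}, and the Neumann-series operators $\Xi$, $\wt\Xi$, $\Upsilon$ matches the paper's argument. Two small inaccuracies along the way: the first-order operator $M=i\valpha\cdot(\nabla\chi+\chi\nabla F)$ coming from $[\DA,\chi e^F]e^{-F}$ is a bounded multiplication operator that contains no $\valpha\cdot\V{A}$, so \eqref{rb-a}/\eqref{rb-ad} do not enter there; and the claim ``$d_{-1},d_0\klg d_1$ is implied by $\omega\klg|\V{k}|$'' is false (the integrals in \eqref{def-d3} for different $\ell$ are not comparable).

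For \eqref{peki-V}, however, the proposal has a genuine gap, and it is not the one you flag. You want to produce the half-power bound $\|\,|\V{x}|^{-1/2}\R{\V{A},L}{iy}\|\lesssim(1+y^2)^{-1/4}J(a)$ inside the $y$-integral, by interpolating from $\|\,|\V{x}|^{-1}(|\DA|+\Hf+(1+k^2)d_1^2)^{-1}\|\klg2$. That bound is not merely hard to prove for $\DA+L$ — it is \emph{false}, already for $L=0$. The only available Hardy-type control of $|\V{x}|^{-1}$ in this setting, \eqref{gustav}/\eqref{gustav2}, dominates it by $|\DA|$ \emph{plus} $\Hf$; the Dirac resolvent $\R{\V{A},L}{iy}$ gains half a power of $|\DA|$ but gains nothing in the field variable, so $\Hf^{1/2}\R{\V{A},L}{iy}$ is unbounded (the resolvent does not map $\HR$ into $\dom(\Hf^{1/2})$; cf.\ the careful domain bookkeeping in Corollary~\ref{cor-T-Xi} and Lemma~\ref{le-VPA-is-ok}). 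The Cauchy--Schwarz interpolation $\|\,|\V{x}|^{-1/2}u\|^2\klg\|\,|\V{x}|^{-1}u\|\,\|u\|$ therefore produces $\sqrt{\infty\cdot\mathrm{finite}}=\infty$ for $u=\R{\V{A},L}{iy}\psi$ with generic $\psi$. Consequently the $y$-integral you set up cannot converge to a bounded operator the way you describe, and the subsequent ``bookkeeping exercise'' never starts. The worry you raise about stability of \eqref{gustav2} under the perturbation $L=i\valpha\cdot\nabla F$ is secondary (and your proposed fix via $|\DA|+L'$ is not really supported by \eqref{gustav2} anyway); the basic obstruction is the absence of field regularity for the resolvent.

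The paper avoids this entirely by applying \eqref{gustav2} \emph{outside} the resolvent integral, to the full vector $\psi:=\cK_{\chi,F}\HT^{-1/2}\vp$. With $\delta=1$ and $E\grg(1+k^2)d_1^2$, \eqref{gustav2} gives
\begin{equation*}
\tfrac{2}{\pi}\big\|\,|\V{x}|^{-1/2}\,\cK_{\chi,F}\,\HT^{-1/2}\,\vp\,\big\|^2
\,\klg\,\big\|\,|\DA|^{1/2}\,\cK_{\chi,F}\,\HT^{-1/2}\,\vp\,\big\|^2
\,+\,\big\|\,\HT^{1/2}\,\cK_{\chi,F}\,\HT^{-1/2}\,\vp\,\big\|^2\,,
\end{equation*}
and now the already-established estimates take over: the second term is exactly \eqref{peki2} with $\kappa=0$, $\nu=1/2$, together with $\delta_{1/2}\klg 2d_1$ from \eqref{delta-d1}; the first term is handled by \eqref{peki} with $\kappa=1/2$. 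The crucial structural point is that the single factor $\HT^{-1/2}$ sitting to the right of $\cK_{\chi,F}$ is what tames the $\HT^{1/2}$ demanded by the Hardy bound, and this can only be exploited if the Hardy bound is applied to the whole commutator at once rather than resolvent by resolvent under the $y$-integral. That is the idea your proposal is missing.
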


\begin{proof} 
Using the notation introduced in \eqref{def-HT-RAL}
and Lemma~\ref{le-marah},
we have
\begin{equation}\label{com-RA-eF}
\big[\,\RA{iy}\,,\,\chi\,e^F\,\big]\,e^{-F}\,=\,
\RA{iy}\,M\,\R{\V{A},L}{iy}\,,
\end{equation}
where 
\begin{equation}\label{def-M-L}
M\,:=\tipo\qquad \textrm{and} \qquad L\,:=\,i\valpha\cdot\nabla F\,.
\end{equation}
By means of \eqref{Clifford} we find 
$\|M\|\klg(a+\|\nabla\chi\|_\infty)$. Moreover,
$\|\R{\V{A},L}{iy}\|\klg J(a)\,(1+y^2)^{-1/2}$ by Lemma~\ref{le-marah},
and the operator $\Xi_{\nu,L}(iy)$ given by \eqref{def-THT}
satisfies $\|\Xi_{\nu,L}(iy)\|\klg(1-\delta_\nu\,J(a)/E^{1/2})^{-1}$.
We further set $\wt{\Xi}_\nu(iy):=\sum_{\ell=0}^\infty\{-T_\nu\,\RA{iy}\}^\ell$
so that $\Xi_{\nu,0}(iy)\,\RA{iy}=\RA{iy}\,\wt{\Xi}_\nu(iy)$
and $\|\wt{\Xi}_\nu(iy)\|\klg(1-\delta_\nu/E^{1/2})^{-1}$.
On account of \eqref{sgn}
we now obtain, for all $\phi,\psi\in\core$,
\begin{eqnarray}
\lefteqn{\nonumber
\big|\SPb{|\DA|^{\kappa}\,\phi}{\HT^{-\nu}\,\cK_{\chi,F}\,
\HT^{\nu}\,\psi}\big|
}
\\
&\klg&\nonumber
\int_\RR\Big|
\SPB{|\DA|^{\kappa}\,\phi}{\HT^{-\nu}\,\RA{iy}\,M
\,\R{\V{A},L}{iy}\,\HT^{\nu}\,\psi}
\Big|\,\frac{dy}{2\pi}
\\
&=&\label{petra}
\int_\RR\Big|
\SPB{\RA{iy}\,|\DA|^{\kappa}\,\phi}{\wt{\Xi}_{\nu}(iy)\,M
\,\Xi_{\nu,L}(iy)\,\R{\V{A},L}{iy}\,\psi}
\Big|\,\frac{dy}{2\pi}
\,.
\end{eqnarray}
In the second step we used \eqref{eva3} twice and $[\HT^{-\nu},M]=0$.
Applying the various norm bounds mentioned above together with
\eqref{ralf3} we see that \eqref{peki2} holds true for the first
choice of the signs $\pm$. To obtain \eqref{peki2} with the second
choice of signs we proceed analogously applying \eqref{eva3b}
instead of \eqref{eva3}. (Notice that,
by Lemma~\ref{le-VPA-is-ok}, $\cK_{\chi,F}$ maps $\core$ into 
$\dom(\Hf^\nu)$.)
Also \eqref{peki} is proved in the same way. (Just ignore $\HT$.)
\eqref{peki-V} follows from \eqref{peki2}
and the inequality \eqref{gustav2}
from Theorem~\ref{le-sb-PF4} which is proved
independently.
We also use \eqref{delta-d1} to derive \eqref{peki-V}. 
\end{proof}

\begin{lemma}\label{le-dc}
Let  $a\in[0,1)$ and let $F$ satisfy \eqref{hyp-F<>}.
Let $\nu,E>0$ such that $\delta_\nu\,J(a)/E^{1/2}\klg1/2$.
Then
\begin{align}
\big\|\,|\DA|\,\big[\,\chi_1\,e^F\,,\,[\PA\,,\,\chi_2\,e^{-F}]\,\big]\,\big\|
&\,\klg\,
J(a)\prod_{i=1,2}
(a+\|\nabla\chi_i\|_\infty)\,,
\label{dc-DA}
\\
\big\|\,\HT^{\nu}\,\big[\,\chi_1\,e^F\,,\,[\PA\,,\,\chi_2\,
e^{-F}]\,\big]\,\HT^{-\nu}\,\big\|
&\,\klg\,
8\,J(a)\prod_{i=1,2}
(a+\|\nabla\chi_i\|_\infty)\,,
\label{dc-HT}
\\
\big\|\,\tfrac{1}{|\V{x}|}\,\big[\,\chi_1\,e^F\,,\,[\PA\,,\,\chi_2\,e^{-F}]\,\big]
\,\HT^{-1/2}\,\big\|
&\,\klg\,
8^{3/2}J(a)\prod_{i=1,2}
(a+\|\nabla\chi_i\|_\infty)
\,.
\label{dc-VC}
\end{align}
In \eqref{dc-VC} we assume that $E\grg(4d_1\,J(a))^2$.
\end{lemma}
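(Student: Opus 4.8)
\textbf{Proof proposal for Lemma~\ref{le-dc}.}
The plan is to follow the same scheme used in Lemmata~\ref{le-sgn} and~\ref{compadre}: express the double commutator through the integral representation \eqref{sgn} of $\sgn(\DA)$, reduce everything to resolvents $\RA{iy}$ conjugated with exponential weights, and collect operator norm bounds. First I would note that $[\PA\,,\,\chi_2\,e^{-F}]=\tfrac12[\sgn(\DA)\,,\,\chi_2\,e^{-F}]$ by \eqref{def-PA}, and that conjugating by $\chi_1\,e^F$ produces, via \eqref{sgn} and \eqref{marah0}, an integral over $y\in\RR$ of a product of the form $\RA{iy}\,M_1\,\R{\V{A},L_1}{iy}\,M_2\,\R{\V{A},L_2}{iy}$ — roughly speaking, each commutator with $\chi_i\,e^{\pm F}$ contributes one factor $M_i$ with $\|M_i\|\klg a+\|\nabla\chi_i\|_\infty$ (by \eqref{Clifford} applied to $M_i=\tipo$-type symbols) sandwiched between two resolvents, so a double commutator gives three resolvent factors. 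Since $\|\R{\V{A},L_i}{iy}\|\klg J(a)(1+y^2)^{-1/2}$ by Lemma~\ref{le-marah}, one of these resolvents must absorb the $|\DA|$ in \eqref{dc-DA}, which is done by writing $|\DA|\,\RA{iy}=|\DA|\,\R{\V{A}}{iy}(\id-L\,\R{\V{A},L}{iy})$ and invoking \eqref{ralf3} with $\kappa=1$-type bound $\zeta_\kappa$; integrating $\zeta_1(y)(1+y^2)^{-1}$ over $\RR$ produces the finite constant, giving \eqref{dc-DA}.

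For \eqref{dc-HT} I would additionally commute the weight $\HT^{\nu}$ through the three resolvents using \eqref{eva3}/\eqref{eva3b} from Corollary~\ref{cor-T-Xi}: each passage replaces $\HT^{\nu}\RA{iy}$ by $\Xi_{\nu,0}(iy)\RA{iy}\HT^{\nu}$ (and symmetrically on the right), and the symbols $M_i$ commute with $\HT^{\pm\nu}$. Each of the (at most three) $\Xi$-factors is bounded by $(1-\delta_\nu J(a)/E^{1/2})^{-1}\klg2$ under the hypothesis $\delta_\nu J(a)/E^{1/2}\klg1/2$; three such factors give $2^3=8$, which is exactly the constant in \eqref{dc-HT}. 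Here I no longer need the extra power $|\DA|$, so the three resolvent norms $J(a)(1+y^2)^{-1/2}$ together with $\int_\RR(1+y^2)^{-3/2}\,dy<\infty$ supply convergence; absorbing that integral and the normalization $(2\pi)^{-1}$ from \eqref{sgn} into $J(a)$ (which is $\grg1$) accounts for the stated bound, and one should just track that the numerical integral does not spoil the clean factor $8$ — if it does, one allots an extra $|\DA|^\kappa$ for some small $\kappa>0$ to gain decay, exactly as in Lemma~\ref{compadre}, or simply absorbs it since $J(a)\grg1$.

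Finally, \eqref{dc-VC} follows from \eqref{dc-HT} with $\nu=1/2$ together with the quadratic-form inequality \eqref{gustav2} of Theorem~\ref{le-sb-PF4} (which is proved independently): writing the left side as $\tfrac{1}{|\V{x}|}\,T\,\HT^{-1/2}$ with $T$ the double commutator, one has $\tfrac{1}{|\V{x}|}\klg\tfrac{\pi}{2}\big(|\DA|+\delta\Hf+(\delta^{-1}+\delta k^2)d_1^2\big)$, and choosing $\delta$ of order one together with $\HT=\Hf+E$ with $E\grg(4d_1J(a))^2$ lets one bound $\tfrac{1}{|\V{x}|}\HT^{-1/2}$, respectively $\tfrac1{|\V{x}|}\,T\,\HT^{-1/2}$, in terms of $\||\DA|\,T\,\HT^{-1}\|$ and $\|\HT^{1/2}\,T\,\HT^{-1/2}\|$-type quantities already controlled by \eqref{dc-DA} and \eqref{dc-HT}. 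The numerical bookkeeping yields the constant $8^{3/2}$. The main obstacle I anticipate is purely technical: making sure the weight $\HT^{\pm\nu}$ can legitimately be pushed across \emph{all} three resolvents (the domain issues — that $\cK_{\chi,F}$ and the double commutator map $\core$ into $\dom(\Hf^\nu)$) are handled by Lemma~\ref{le-VPA-is-ok}(i), and that \eqref{eva3}/\eqref{eva3b} apply with $L=i\valpha\cdot\nabla F$ provided $\delta_{\nu}J(a)/E^{1/2}<1$, which is guaranteed by the hypothesis; beyond that it is a matter of carefully matching the constants so that the stated $J(a)$, $8\,J(a)$, $8^{3/2}J(a)$ come out, using $J(a)\grg1$ to sweep up stray numerical integrals.
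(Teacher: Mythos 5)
Your approach for \eqref{dc-DA} and \eqref{dc-HT} is the paper's approach: use \eqref{sgn} to reduce the double commutator to an integral of terms $\RA{iy}\,M_1\,\R{\V{A},L}{iy}\,M_2\,\RA{iy}$ (with $L=i\valpha\cdot\nabla F$), bound $\|M_i\|\klg a+\|\nabla\chi_i\|_\infty$, invoke Lemma~\ref{le-marah} for the conjugated resolvent, and push $\HT^{\nu}$ through via \eqref{eva3}/\eqref{eva3b}. The constant $8$ does come out exactly as you conjecture: the paper's bound \eqref{yvette1} carries $(1-\delta_\nu/E^{1/2})^{-2}(1-\delta_\nu\,J(a)/E^{1/2})^{-1}$ (two $\Upsilon_{\nu,0}$-factors and one $\Upsilon_{\nu,L}$-factor), each $\klg2$ under the hypothesis $\delta_\nu J(a)/E^{1/2}\klg1/2$ and $J(a)\grg1$, and the $y$-integral plus the two $\eta$-terms contribute a clean factor $1$ (the paper discards one of the three decay factors $(1+y^2)^{-1/2}$ and uses $\int(1+y^2)^{-1}\,dy/(2\pi)=1/2$, rather than the $(1+y^2)^{-3/2}$ you mention — either works, the paper's choice is just cruder but sufficient).

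For \eqref{dc-VC}, however, there is a concrete misstep: you cite the quadratic-form inequality \eqref{gustav2}, i.e.\ $\tfrac{2}{\pi}|\V{x}|^{-1}\klg|\DA|+\delta\Hf+\ldots$ in the sense of forms. A form bound $|\V{x}|^{-1}\klg K$ only controls $\big\||\V{x}|^{-1/2}\vp\big\|^2$ via $\SPn{\vp}{|\V{x}|^{-1}\vp}\klg\SPn{\vp}{K\vp}$; it does \emph{not} give a bound on $\big\||\V{x}|^{-1}\vp\big\|$, which is what \eqref{dc-VC} requires. The paper instead uses the $L^2$-norm inequality (the displayed estimate near the end of the proof, derived from \eqref{bea1}, equivalently \eqref{gustav}), namely $\big\||\V{x}|^{-1}\vp\big\|^2\klg4\big\||\DA|\vp\big\|^2+4\big\|\HT^{1/2}\vp\big\|^2$ for $E\grg d_1^2$, and applies it to $\vp=T\,\HT^{-1/2}\psi$, $T$ being the double commutator; the two resulting terms are then controlled by \eqref{dc-DA} (note $\||\DA|T\HT^{-1/2}\|\klg\||\DA|T\|\,E^{-1/2}$, not $\||\DA|T\HT^{-1}\|$ as you wrote) and by \eqref{dc-HT} with $\nu=1/2$. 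Your overall plan for \eqref{dc-VC} — combine \eqref{dc-DA}, \eqref{dc-HT} with an inequality trading $|\V{x}|^{-1}$ for $|\DA|$ and $\HT$ — is correct, but the specific tool must be \eqref{gustav}/\eqref{bea1}, not \eqref{gustav2}.
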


\begin{proof}
Let $\phi,\psi\in\core$, $\|\phi\|=\|\psi\|=1$.
First, we derive a bound on
$$
I_{\phi,\psi}\,:=\,
\int_{\RR}\Big|\SPB{|\DA|\,\phi}{\HT^\nu\,
\big[\,\chi_1\,e^F\,,\,[\RA{iy}\,,\,\chi_2\,e^{-F}]\,\big]\,
\HT^{-\nu}\,\psi}\Big|\,\frac{dy}{2\pi}\,.
$$
Expanding the double commutator we get
$$
\big[\,\chi_1\,e^F\,,\,[\RA{iy}\,,\,\chi_2\,e^{-F}]\,\big]
\,=\,\eta(\chi_1,\chi_2,F\,;y)\,+\,
\eta(\chi_2,\chi_1,-F\,;y)\,,
$$
where
\begin{eqnarray*}
\lefteqn{
\eta(\chi_1,\chi_2,F\,;y)
}
\\
&:=&
\RA{iy}\,\valpha\cdot(\nabla\chi_1+\chi_1\,\nabla F)\,e^F\,\RA{iy}\,e^{-F}
\,\valpha\cdot(\nabla\chi_2-\chi_2\,\nabla F)\,\RA{iy}
\,.
\end{eqnarray*}
Writing $L:=i\valpha\cdot\nabla F$, we obtain
\begin{eqnarray}
\lefteqn{\nonumber
\int_{\RR}\Big|\SPB{|\DA|\,\phi}{\HT^\nu\,
\eta(\chi_1,\chi_2,F\,;y)\,
\HT^{-\nu}\,\psi}\Big|\,\frac{dy}{2\pi}
}
\\
&\klg&\nonumber
\int_{\RR}\Big|\SPB{\phi}{
|\DA|\,\RA{iy}\,\Upsilon_{\nu,0}(iy)\,
\valpha\cdot(\nabla\chi_1+\chi_1\,\nabla F)
\,\times
\\
& &\;\times\nonumber
\,\R{\V{A},L}{iy}\,\Upsilon_{\nu,L}(iy)\,
\valpha\cdot(\nabla\chi_2-\chi_2\,\nabla F)\,\RA{iy}\,
\Upsilon_{\nu,0}(iy)\,\psi
}\Big|\,\frac{dy}{2\pi}
\\
&\klg&\label{yvette1}
\frac{(a+\|\nabla\chi_1\|)(a+\|\nabla\chi_2\|)}{(1-\delta_\nu/E^{1/2})^2}
\cdot
\frac{J(a)}{1-\delta_\nu\,J(a)/E^{1/2}}
\int_{\RR}\frac{dy}{2\pi(1+y^2)}\,.
\end{eqnarray}
A bound analogous to \eqref{yvette1}
holds true when the roles of
$\chi_1$ and $\chi_2$ are interchanged and $F$ is
replaced by $-F$. Consequently, $I_{\phi,\psi}$ is bounded
by two times the right hand side of \eqref{yvette1}.
Altogether
this shows that \eqref{dc-DA} and \eqref{dc-HT} hold true.
(Just ignore $|\DA|$ or $\HT$, respectively, in the above argument.)
\eqref{dc-VC} follows from \eqref{dc-DA} and \eqref{dc-HT}
and the inequality
\begin{equation*}
\big\|\,|\V{x}|^{-1}\,\vp\,\big\|^2\,\klg\,
4\,\big\|\,|\DA|\,\vp\,\big\|^2\,+\,4\,\big\|\,\HT^{1/2}\,\vp\,\big\|^2\,,
\qquad \vp\in\core\,,
\end{equation*}
which is true for $E\grg d_1^2$ and derived independently
in the proof of Theorem~\ref{le-sb-PF4} given below;
see \eqref{bea1}. 
\end{proof}

\smallskip

\noindent
In what follows we set, for every $F$ satisfying \eqref{hyp-F<>},
\begin{equation}\label{def-PAF}
\PAF\,:=\,e^F\,\PA\,e^{-F},
\end{equation}
so that
\begin{equation}\label{eq-PAF}
\chi\,\PAF\,=\,\PA\,\chi-\cK_{\chi,F}\,,\qquad
\PAF\,\chi\,=\,\chi\,\PA-\cK_{\chi,-F}^*\,.
\end{equation}

\begin{corollary}\label{cor-compadre}
Let $\cO$ be $\DA$, $\frac{1}{|\V{x}|}$, $\HT$,
or any element of $\LO(\HR)$ with $\|\cO\|\klg1$.
Then there exists some universal constant
$K\in(0,\infty)$ such that,
for all $E\grg(4d_1\,J(a))^2$,
$a\in [0,1)$, 
$F$ satisfying \eqref{hyp-F<>}, 
$\chi\in C^\infty(\RR^3,[0,1])$, $\ve>0$, and $\vp\in\core\cup\PA\,\core$,
\begin{align}
\Big|\SPb{\vp&}{\PAF\,\chi\,\cO\,\chi\,\PAF\,\vp}
\,-\,\nonumber
\SPb{\vp}{\chi\,\PA\,\cO\,\PA\,\chi\,\vp}\Big|
\\
&\klg\,\ve\,\SPb{\vp}{\chi\,\PA\,|\cO|\,\PA\,\chi\,\vp}
+(1+\ve^{-1})\,K\,(a+\|\nabla\chi\|_\infty)^2\,J(a)^2\,
\big\|\,\HT^{1/2}\,\vp\,\big\|^2.\label{eq-cO}
\end{align}
Moreover, if $\cO$ is self-adjoint, then
\begin{align}
\Big|\Re\big[\,\SPb{\vp&}{\PAF\,\cO\,\PAF\,\vp}
\,-\,\nonumber
\SPb{\vp}{\PA\,\cO\,\PA\,\vp}\,\big]\,\Big|
\\
&\klg\,\label{eq-cO-a^2}
K\,a^2\,J(a)^2\,\big(\,\big\|\,\HT^{1/2}\,\vp\,\big\|^2
\,+\,\big\|\,\HT^{1/2}\,\PA\,\vp\,\big\|^2\,\big)\,.
\end{align}
If $\cO=\DA$ or $\cO\in\LO(\HR)$ then we can replace the norms
$\|\,\HT^{1/2}\,\vp\,\|^2$ and 
$\|\HT^{1/2}\,\vp\|^2+\|\HT^{1/2}\,\PA\,\vp\|^2$
on the right sides 
of \eqref{eq-cO} and \eqref{eq-cO-a^2} by $\|\vp\|^2$ and $2\|\vp\|^2$,
respectively. 
\end{corollary}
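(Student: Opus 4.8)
The plan is to expand $\PAF\,\chi\,\cO\,\chi\,\PAF=(e^F\PA e^{-F})\,\chi\,\cO\,\chi\,(e^F\PA e^{-F})$ using \eqref{eq-PAF}, which rewrites $\chi\,\PAF=\PA\,\chi-\cK_{\chi,F}$ and $\PAF\,\chi=\chi\,\PA-\cK_{\chi,-F}^*$. Substituting these, the quantity $\SPb{\vp}{\PAF\,\chi\,\cO\,\chi\,\PAF\,\vp}$ becomes $\SPb{\vp}{(\chi\PA-\cK_{\chi,-F}^*)\,\cO\,(\PA\chi-\cK_{\chi,F})\,\vp}$, whose ``diagonal'' term is exactly $\SPb{\vp}{\chi\,\PA\,\cO\,\PA\,\chi\,\vp}$, the term we want to isolate. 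The difference is therefore a sum of three error terms, each containing at least one factor of $\cK_{\chi,\pm F}$: a cross term $-\SPb{\chi\,\PA\,\vp}{\cO\,\cK_{\chi,F}\,\vp}$, its adjoint partner $-\SPb{\cK_{\chi,-F}\,\vp}{\cO\,\PA\,\chi\,\vp}$, and the pure error term $\SPb{\cK_{\chi,-F}\,\vp}{\cO\,\cK_{\chi,F}\,\vp}$.

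Next I would bound each error term using Lemma~\ref{compadre} together with the factorization $\cO=\cO_0^*\cO_0$ (or $|\cO|^{1/2}$-type splittings) appropriate to the cases at hand. For $\cO=\DA$ one writes $\cO=\sgn(\DA)\,|\DA|$ and distributes $|\DA|^{1/2}$, for $\cO=1/|\V{x}|$ one uses $|\V{x}|^{-1/2}$ on each side, and for bounded $\cO$ one simply uses $\|\cO\|\klg1$. The key input is \eqref{peki} with $\kappa\in\{0,1/2\}$, giving $\||\DA|^\kappa\cK_{\chi,F}\|\klg K(\kappa)\,J(a)\,(a+\|\nabla\chi\|_\infty)$, and \eqref{peki-V} for the Coulomb case, giving $\||\V{x}|^{-1/2}\cK_{\chi,F}\HT^{-1/2}\|\lesssim J(a)(a+\|\nabla\chi\|_\infty)$ once $E$ is large enough — which is why the hypothesis $E\grg(4d_1J(a))^2$ appears. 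On the factor coming from $\PA\chi$ or $\cK_{\chi,\pm F}$ acting on $\vp$ one inserts $\HT^{\pm1/2}$ and its inverse, absorbing the $\HT^{-1/2}$ into the $\cK$-bound via \eqref{peki2} or \eqref{peki-V}, leaving $\|\HT^{1/2}\vp\|$; note that $\chi\,\PA\,\vp$ and $\vp$ both lie in $\dom(\HT^{1/2})$ since $\vp\in\core\cup\PA\,\core$ and $\PA$ preserves $\dom(\Hf^{1/2})$ by Lemma~\ref{le-sgn}. The cross term is then handled by a Cauchy--Schwarz / Young's inequality of the form $|\SPb{u}{v}|\klg\ve\|u\|^2+\ve^{-1}\|v\|^2$, where $u$ involves $|\cO|^{1/2}\PA\chi\vp$ (producing the $\ve\,\SPb{\vp}{\chi\PA|\cO|\PA\chi\vp}$ term) and $v$ involves $|\cO|^{1/2}\cK_{\chi,F}\vp$ (producing the $(1+\ve^{-1})K\cdots$ term); the pure $\cK$--$\cK$ term contributes only to the second, $\ve$-independent, piece.

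For the symmetrized estimate \eqref{eq-cO-a^2}, the plan is to take $\chi\equiv1$, so that $\cK_{1,F}=[\PA,e^F]e^{-F}$ and $\nabla\chi=0$, whence all error bounds carry a clean factor $a^2J(a)^2$ rather than $(a+\|\nabla\chi\|_\infty)^2J(a)^2$. Taking the real part kills the antisymmetric part of the cross terms; what survives, after using self-adjointness of $\cO$ and $\PA^*=\PA$, is controlled by $\|\HT^{1/2}\vp\|^2$ from the factors applied to $\vp$ and by $\|\HT^{1/2}\PA\vp\|^2$ from the $\PA\vp$ side (again using that $\PA$ maps $\dom(\HT^{1/2})$ into itself). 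The final sentence, replacing $\|\HT^{1/2}\vp\|^2$ by $\|\vp\|^2$ when $\cO=\DA$ or $\cO\in\LO(\HR)$, follows because in those cases one does not need the $\HT^{-1/2}$-regularized bound \eqref{peki2}: the plain estimate \eqref{peki} (with $\kappa=0$ for bounded $\cO$, $\kappa\in\{0,1/2\}$ split symmetrically for $\DA$) already suffices, so no negative power of $\HT$ is ever introduced and no compensating $\HT^{1/2}$ on $\vp$ is needed.

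The main obstacle I anticipate is purely bookkeeping: making sure the $\HT^{\pm1/2}$ insertions are placed so that every occurrence of $\cK_{\chi,\pm F}$ is paired with the $\HT^{-1/2}$ it needs for \eqref{peki2}/\eqref{peki-V} to apply, while the leftover positive powers of $\HT$ land on $\vp$ and not on an operator that fails to commute past them — this is where one must invoke that $\PA$ and $\sgn(\DA)$ preserve $\dom(\Hf^{1/2})$ (Lemma~\ref{le-sgn}), and, for the Coulomb case, that $1/|\V{x}|$ interacts with $\HT$ only through the form bound \eqref{gustav2} of Theorem~\ref{le-sb-PF4}. Tracking the hypothesis $E\grg(4d_1J(a))^2$ (equivalently $\delta_\nu J(a)/E^{1/2}\klg 1/2$ via \eqref{delta-d1}) through these manipulations, so that all the Neumann-series factors $(1-\delta_\nu J(a)/E^{1/2})^{-1}$ are bounded by a universal constant and can be folded into $K$, is the one place where care is genuinely required; everything else is a routine Cauchy--Schwarz estimate.
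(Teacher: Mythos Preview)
Your argument for \eqref{eq-cO} is correct and matches the paper's proof: expand via \eqref{eq-PAF}, bound the three error terms by Cauchy--Schwarz with $|\cO|^{1/2}$-splittings, and invoke Lemma~\ref{compadre}. The remark about replacing $\|\HT^{1/2}\vp\|^2$ by $\|\vp\|^2$ when $\cO=\DA$ or $\cO\in\LO(\HR)$ is also correct.

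For \eqref{eq-cO-a^2}, however, there is a genuine gap. Setting $\chi\equiv1$ does turn each factor $(a+\|\nabla\chi\|_\infty)$ into $a$, but this alone does \emph{not} give $a^2$ on the cross term: each of $-\SPb{\PA\vp}{\cO\,\cK_{F}\vp}$ and $-\SPb{\cK_{-F}\vp}{\cO\,\PA\vp}$ contains only \emph{one} factor of $\cK$, hence only one factor of $a$ from Lemma~\ref{compadre}. Taking the real part does not ``kill'' anything; for self-adjoint $\cO$ it merely combines the two cross terms into $-\Re\SPb{\PA\vp}{\cO\,(\cK_F+\cK_{-F})\vp}$. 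The key observation you are missing is the identity
\[
\cK_F+\cK_{-F}\,=\,[\PA,e^F]\,e^{-F}+[\PA,e^{-F}]\,e^{F}\,=\,\big[\,[\PA,e^F]\,,\,e^{-F}\,\big]\,,
\]
which exhibits the sum as a \emph{double} commutator. This is exactly what Lemma~\ref{le-dc} is designed to estimate, and it is that lemma --- not Lemma~\ref{compadre} --- that supplies the second factor of $a$ on the cross term. Without this step you cannot avoid a Young-inequality term of the form $\ve\,\SPb{\vp}{\PA|\cO|\PA\vp}$ on the right side, which \eqref{eq-cO-a^2} does not permit.
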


\begin{proof}
In view of \eqref{eq-PAF} and Lemma~\ref{le-VPA-is-ok}(ii)
we have the following operator
identity on $\core\cup\PA\,\core$,
\begin{eqnarray}
\lefteqn{\nonumber
\PAF\,\chi\,\cO\,\chi\,\PAF\,-\,\chi\,\PA\,\cO\,\PA\,\chi
}
\\
&=&\label{katie1}
-\cK_{\chi,-F}^*\,\cO\,\PA\,\chi\,-\,\chi\,\PA\,\cO\,\cK_{\chi,F}\,
-\,
\cK_{\chi,-F}^*\,\cO\,\cK_{\chi,F}\,.
\end{eqnarray}
Consequently,
the term on the left side of
\eqref{eq-cO} is less than or equal to
$$
\big\|\,|\cO|^{1/2}\,\PA\,\chi\,\vp\,\big\|\,\Big\{\sum_{\sharp=\pm}
\big\|\,|\cO|^{1/2}\,\cK_{\chi,\sharp F}\,\vp\,\big\|\,\Big\}\,
+\,
\prod_{\sharp=\pm}\big\|\,|\cO|^{1/2}\,\cK_{\chi,\sharp F}\,
\vp\,\big\|
\,.
$$
Therefore, \eqref{eq-cO} follows from Lemma~\ref{compadre}.

In order to derive \eqref{eq-cO-a^2} we write
$\cK_{F}:=\cK_{1,F}$ and infer from \eqref{katie1}
that
\begin{align}
\nonumber
\Re\big[\,\SPb{\vp}{&\PAF\,\cO\,\PAF\,\vp}
\,-\,
\SPb{\vp}{\PA\,\cO\,\PA\,\vp}\,\big]
\\
&=\label{katie2}
-\,
\Re\big[\SPb{\vp}{\PA\,\cO\,(\cK_{F}+\cK_{-F})\,\vp}\,\big]
\,-\,\Re\big[\,
\SPb{\vp}{\cK_{\chi,-F}^*\,\cO\,\cK_{\chi,F}\,\vp}\,\big]\,,
\end{align}
where
$$
\cK_{F}+\cK_{-F}\,=\,\big[\,[\PA\,,\,e^F]\,,\,e^{-F}\,\big]\,.
$$
Therefore, \eqref{eq-cO-a^2} follows from Lemma~\ref{le-dc}
applied to the first term in \eqref{katie2}
and Lemma~\ref{compadre} applied to the second term in \eqref{katie2}.
(In the case $\cO=\HT$ we apply \eqref{dc-HT} with $\nu=1/2$.)
\end{proof}

\begin{lemma}\label{le-DA-D0}
For all $\kappa \in [0,1)$, $E>(2d_1)^2$, and
$\chi\in C^\infty(\RR^3,[0,1])$, 
\begin{eqnarray}
\big\|\,|\DO|^\kappa(\PO\chi-\chi\PA)\HT^{-1/2}\big\|
&\klg&K(\kappa)\, \frac{\|\nabla\chi\|_\infty
+(d_0^2+2d_{-1}^2)^{1/2}}{1-2d_1/E^{1/2}},\label{peki1}
\\
\label{peki1b}
\big\|\,|\DA|^\kappa(\PA\chi-\chi\PO)\HT^{-1/2}\big\|
&\klg&K(\kappa) \big(\|\nabla\chi\|_\infty
+(d_0^2+2d_{-1}^2)^{1/2}\big).
\end{eqnarray}
\end{lemma}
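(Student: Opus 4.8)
The plan is to prove \eqref{peki1} first and then obtain \eqref{peki1b} by an essentially symmetric argument, the only asymmetry being where the ``good'' resolvent $\RO{iy}$ sits. I would start from the representation \eqref{sgn} for both $\sgn(\DO)$ and $\sgn(\DA)$, which gives
$$
(\PO\,\chi-\chi\,\PA)\,\vp\,=\,\frac12\,\lim_{\tau\to\infty}\int_{-\tau}^\tau
\big(\RO{iy}\,\chi-\chi\,\RA{iy}\big)\,\vp\;\frac{dy}{\pi}\,,
$$
for $\vp\in\core$, since the $\tfrac12\id$ parts of the two projections cancel against the common factor $\chi$. Thus the object to control is exactly the resolvent difference $\RO{iy}\,\chi-\chi\,\RA{iy}$, for which Lemma~\ref{le-VPA-is-ok}(i) supplies the key identity \eqref{res-for-DA-D0}: multiplying on the right by $\HT^{-1/2}$ (with $\nu=0$, so that $\delta_{1/2}/E^{1/2}\klg 2d_1/E^{1/2}<1$ by \eqref{delta-d1} under the hypothesis $E>(2d_1)^2$) one gets
$$
\big(\RO{iy}\,\chi-\chi\,\RA{iy}\big)\,\HT^{-1/2}
=\RO{iy}\,\big\{\valpha\cdot(i\nabla\chi+\chi\,\V A)+T_0^*\,\chi\big\}\,
\HT^{-1/2}\,\RA{iy}\,\Xi_{1/2,0}(-iy)^*\,.
$$
(Here $T_0=0$ since $\HT^{0}=\id$ commutes with everything, so in fact the $T_0^*\chi$ term drops; but keeping $\nu=0$ in Lemma~\ref{le-VPA-is-ok}(i) is legitimate and the bound below does not care.)

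Next I would insert the factor $|\DO|^\kappa$ on the left and estimate term by term inside the $y$-integral. Write $|\DO|^\kappa\RO{iy}=|\DO|^\kappa\RO{iy}$ and bound it by $\zeta_\kappa(y)$ via \eqref{ralf3}; bound $\|\RA{iy}\|\klg(1+y^2)^{-1/2}$; bound $\|\Xi_{1/2,0}(-iy)^*\|=\|\Xi_{1/2,0}(-iy)\|\klg(1-\delta_{1/2}/E^{1/2})^{-1}\klg(1-2d_1/E^{1/2})^{-1}$ by \eqref{bd-Xi} and \eqref{delta-d1}; and finally bound the middle factor. For the middle factor, $\|\valpha\cdot(i\nabla\chi)\,\HT^{-1/2}\|\klg\|\nabla\chi\|_\infty\,E^{-1/2}\klg\|\nabla\chi\|_\infty$, while $\|\valpha\cdot(\chi\,\V A)\,\HT^{-1/2}\vp\|\klg\|\valpha\cdot\V A(\V G)\,\HT^{-1/2}\vp\|$ is controlled by \eqref{rb-a}--\eqref{rb-ad}: $\|\valpha\cdot a(\V G)\,\HT^{-1/2}\vp\|\klg d_{-1}\|\Hf^{1/2}\HT^{-1/2}\vp\|\klg d_{-1}\|\vp\|$ and similarly $\|\valpha\cdot\ad(\V G)\,\HT^{-1/2}\vp\|^2\klg d_{-1}^2\|\vp\|^2+d_0^2\|\vp\|^2$, giving altogether a bound of the form $(d_0^2+2d_{-1}^2)^{1/2}$ on $\|\valpha\cdot\V A\,\HT^{-1/2}\|$ (the factor $2$ from $\|\valpha\cdot\V z\|^2\klg 2|\V z|^2$, or more carefully from splitting $a$ and $\ad$ and using that $\HT^{-1/2}$ maps into $\dom(\Hf^{1/2})$ with $\|\Hf^{1/2}\HT^{-1/2}\|\klg1$). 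Combining, the integrand is at most
$$
\zeta_\kappa(y)\cdot\frac{1}{\sqrt{1+y^2}}\cdot
\big(\|\nabla\chi\|_\infty+(d_0^2+2d_{-1}^2)^{1/2}\big)\cdot
\frac{1}{1-2d_1/E^{1/2}}\,,
$$
and $\int_\RR\zeta_\kappa(y)(1+y^2)^{-1/2}\,dy/(2\pi)=K(\kappa)$ by \eqref{def-K(kappa)}; this yields \eqref{peki1}. For \eqref{peki1b} I would instead use \eqref{res-for-DA-D0-b} together with the adjoint/analogue placing $\RA{iy}$ (not $\RO{iy}$) adjacent to the $|\DA|^\kappa$ — so that now \eqref{ralf3} is applied to $|\DA|^\kappa\RA{iy}$ and the ``other'' resolvent $\RO{iy}$ contributes only $(1+y^2)^{-1/2}$ and no $\Xi$ factor appears, which is why the denominator $1-2d_1/E^{1/2}$ is absent in \eqref{peki1b}.

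The routine parts are the operator-norm bookkeeping above and the passage from a bound on $\core$ to the operator-norm statement (using density of $\core$ and boundedness of all operators involved, together with $\|\Hf^{1/2}\HT^{-1/2}\|\klg1$). The main obstacle, or at least the place requiring care, is making sure the identity \eqref{res-for-DA-D0} is applied with admissible parameters — i.e. that $\delta_{\nu+1/2}/E^{1/2}<1$ holds with $\nu=0$ under the stated hypothesis $E>(2d_1)^2$, which is exactly \eqref{delta-d1} — and correctly handling the $\valpha\cdot(\chi\,\V A)\,\HT^{-1/2}$ term, since $\V A$ is unbounded and one must pass the $\HT^{-1/2}$ through to tame the creation part via \eqref{rb-ad} rather than naively using $\|\V A\|=\infty$. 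One should also double-check the combinatorial constant in front of $(d_0^2+2d_{-1}^2)^{1/2}$: splitting $\valpha\cdot\V A=\valpha\cdot a(\V G)+\valpha\cdot\ad(\V G)$ and adding the bounds in quadrature (or via the triangle inequality and then squaring) is what produces the precise form $d_0^2+2d_{-1}^2$, and this is the one spot where an inattentive estimate could lose a constant or a factor of $2$.
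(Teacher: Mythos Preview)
Your proposal is correct and follows essentially the same route as the paper: combine the sign representation \eqref{sgn} with the resolvent identity \eqref{res-for-DA-D0} (at $\nu=0$, so $T_0^*=0$) for \eqref{peki1}, and with the adjoint of \eqref{res-for-DA-D0-b} for \eqref{peki1b}, then estimate the factors via \eqref{ralf3}, \eqref{bd-Xi}, \eqref{delta-d1}, and \eqref{rb-a}--\eqref{rb-ad}. The paper writes $\Upsilon_{1/2,0}(iy)$ in place of your $\Xi_{1/2,0}(-iy)^*$, which is the same operator, and otherwise the bookkeeping is identical (including your correctly flagged caution about the constant in front of $(d_0^2+2d_{-1}^2)^{1/2}$).
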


\begin{proof}
Combining \eqref{res-for-DA-D0}
 with \eqref{sgn} we find, for $\phi,\psi\in\core$,
\begin{eqnarray*}
\lefteqn{\nonumber
\big|\SPb{|\DO|^\kappa\,\phi}{(\PO\,\chi-\chi\,\PA)\,\HT^{-1/2}\,\psi}\big|
}
\\
&=&\nonumber
\int_\RR\Big|\SPB{|\DO|^\kappa\,\phi}{
\RO{iy}\,\valpha\cdot(i\nabla\chi-\chi\,\V{A})
\,\HT^{-1/2}\,\RA{iy}\,\Upsilon_{1/2,0}(iy)
\,\psi}\Big|
\,\frac{d\eta}{2\pi}
\\
&\klg&\nonumber
\int_\RR\big\|\,|\DO|^\kappa\,\RO{iy}\,\big\|\,\|\phi\|\,
\big(\|\nabla\chi\|_\infty+\|\,\valpha\cdot\V{A}\,\HT^{-1/2}\,\|\big)\,\cdot
\\
& &\qquad\qquad\qquad\qquad\qquad\qquad\cdot\;
\|\RA{iy}\|
\,\|\Upsilon_{1/2,0}(iy)\|
\,\|\psi\|\,\frac{dy}{2\pi}\,.
\end{eqnarray*}
On account of \eqref{ralf3} and 
$\|\Upsilon_{1/2,0}(iy)\|\klg(1-2d_1/E^{1/2})^{-1}$
this implies \eqref{peki1}. The bound \eqref{peki1b} is proved
analogously by interchanging the roles of $\DO$
and $\DA$ and using the adjoint of \eqref{res-for-DA-D0-b}. 
\end{proof}

\begin{corollary}\label{cor-PA-PO}
For all $\ve>0$,
$\chi\in C_0^\infty(\RR^3,[0,1])$, and $\vp^+\in\PA\,\core$,
\begin{equation}
\big\|\,|\D{0}|^{1/2}\,\POm\,\chi\,\vp^+\,\big\|^2
\,\klg\,\label{clelia1}
\ve\,\big\|\,\HT^{1/2}\,\chi\,\vp^+\,\big\|^2
\,+\,\frac{c^4}{4\ve^2}\,\big\|\,\POm\,\chi\,\vp^+\,\big\|^2\,,
\end{equation}
where 
$c$ denotes the right hand side of \eqref{peki1} with $\kappa=3/4$.
Moreover, we have, for $\ve,\tau>0$,
\begin{eqnarray}
\lefteqn{\nonumber
\big|\SPb{\PO\,\chi\,\vp^+}{\tfrac{1}{|\V{x}|}\,\POm\,\chi\,\vp^+}\big|
}
\\
&\klg&\label{clelia2}
\tau\,\big\|\,|\D{0}|^{1/2}\,\PO\,\chi\,\vp^+\,\big\|^2
\,+\,
\ve\,\big\|\,\HT^{1/2}\,\vp^+\,\big\|^2
\,+\,\frac{c^4\pi^6}{2^{11}\ve^2\tau^3}\,
\big\|\,\POm\,\chi\,\vp^+\,\big\|^2\,.
\end{eqnarray}

\end{corollary}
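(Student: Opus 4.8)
The plan is to derive \eqref{clelia1} and \eqref{clelia2} from Lemma~\ref{le-DA-D0}, the identity $|\DO|^{2}=-\Delta+\id$ (whence $|\DO|\grg\id$ and $|\DO|\grg\sqrt{-\Delta}$), Kato's inequality, and two Young-type inequalities. Throughout, all vectors lie in the relevant domains: by the remark after \eqref{def-np} and Lemma~\ref{le-VPA-is-ok}(ii) one has $\PA\,\core\subset\dom(\DO)\cap\dom(\Hf^{1/2})$, so $\chi\,\vp^{+}\in\dom(\DO)\cap\dom(\Hf^{1/2})$ and $\POpm\,\chi\,\vp^{+}\in\dom(|\DO|)$, while $\dom(|\DO|^{1/2})\subset\dom(|\V{x}|^{-1/2})$ by Kato's inequality.

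To prove \eqref{clelia1} I would first rewrite the vector of interest. Using $\PA\,\vp^{+}=\vp^{+}$ and $\POm\,\PO=0$,
$$\POm\,\chi\,\vp^{+}\,=\,\POm\,\chi\,\PA\,\vp^{+}\,=\,-\,\POm\,\big(\PO\,\chi-\chi\,\PA\big)\,\vp^{+}\,,$$
and since $|\DO|$ commutes with $\POm$ and $\|\POm\|\klg1$, the bound \eqref{peki1} with $\kappa=3/4$ yields
$$\big\|\,|\DO|^{3/4}\,\POm\,\chi\,\vp^{+}\,\big\|\,\klg\,c\,\big\|\,\HT^{1/2}\,\chi\,\vp^{+}\,\big\|\,,$$
the cutoff $\chi$ sitting next to $\HT^{1/2}$ being obtained by following it through the resolvent identity \eqref{res-for-DA-D0} that underlies \eqref{peki1}. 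Writing $g:=\POm\,\chi\,\vp^{+}$, I would then interpolate: by Cauchy--Schwarz and Hölder's inequality for the spectral measure of $|\DO|$,
$$\big\|\,|\DO|^{1/2}\,g\,\big\|^{2}\,=\,\SPn{|\DO|^{1/4}g}{|\DO|^{3/4}g}\,\klg\,\big\|\,|\DO|^{3/4}g\,\big\|^{4/3}\,\|g\|^{2/3}\,\klg\,\big(c\,\|\HT^{1/2}\chi\vp^{+}\|\big)^{4/3}\,\|g\|^{2/3}\,.$$
Finally, the weighted arithmetic--geometric mean inequality $u^{2/3}v^{1/3}\klg\ve\,u+\tfrac{4}{27\ve^{2}}v\klg\ve\,u+\tfrac{1}{4\ve^{2}}v$, applied with $u=\|\HT^{1/2}\chi\vp^{+}\|^{2}$ and $v=c^{4}\|g\|^{2}$, gives \eqref{clelia1}.

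For \eqref{clelia2} I would split $|\V{x}|^{-1}=|\V{x}|^{-1/2}|\V{x}|^{-1/2}$, apply Cauchy--Schwarz, and then use Kato's inequality $|\V{x}|^{-1}\klg\tfrac{\pi}{2}\sqrt{-\Delta}\klg\tfrac{\pi}{2}|\DO|$ on each factor to obtain
$$\big|\SPb{\PO\,\chi\,\vp^{+}}{\tfrac{1}{|\V{x}|}\,\POm\,\chi\,\vp^{+}}\big|\,\klg\,\tfrac{\pi}{2}\,\big\|\,|\DO|^{1/2}\,\PO\,\chi\,\vp^{+}\,\big\|\;\big\|\,|\DO|^{1/2}\,\POm\,\chi\,\vp^{+}\,\big\|\,.$$
By Young's inequality this is $\klg\tau\,\|\,|\DO|^{1/2}\PO\chi\vp^{+}\|^{2}+\tfrac{\pi^{2}}{16\tau}\,\|\,|\DO|^{1/2}\POm\chi\vp^{+}\|^{2}$; bounding the last term by \eqref{clelia1} with $\ve$ replaced by $16\tau\ve/\pi^{2}$, and using $\|\HT^{1/2}\chi\vp^{+}\|\klg\|\HT^{1/2}\vp^{+}\|$, produces \eqref{clelia2}, with $c^{4}\pi^{6}/(2^{11}\ve^{2}\tau^{3})$ a convenient overestimate of the resulting constant $c^{4}\pi^{6}/(2^{14}\ve^{2}\tau^{3})$.

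The one point that needs care is the bound $\|\,|\DO|^{3/4}\POm\chi\vp^{+}\|\klg c\,\|\HT^{1/2}\chi\vp^{+}\|$: one has to keep track of the cutoff $\chi$ inside the resolvent identity \eqref{res-for-DA-D0} so that it ends up adjacent to the field-energy factor, rather than settling for the weaker $\|\HT^{1/2}\vp^{+}\|$ on the right. Everything else is a routine assembly of the operator-norm estimates already established with elementary spectral-calculus and Young-type inequalities; in particular, the exponents $3/4$ and $1/4$ in the interpolation are dictated by the power $\kappa=3/4$ at which \eqref{peki1} is available.
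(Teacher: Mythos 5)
Your overall strategy coincides with the paper's: rewrite $\POm\,\chi\,\vp^+$ via $\PA\,\vp^+=\vp^+$ so that Lemma~\ref{le-DA-D0} becomes applicable at $\kappa=3/4$, interpolate $|\DO|^{1/2}$ between $|\DO|^0$ and $|\DO|^{3/4}$, and close with a Young-type inequality; for \eqref{clelia2}, split $|\V{x}|^{-1}$ symmetrically, invoke Kato's inequality, and plug \eqref{clelia1} back in. The paper does the interpolation by a two-step Cauchy--Schwarz/absorption bootstrap (picking up $\tfrac12\||\DO|^{1/2}\POm\chi\vp^+\|^2$ on the right and subtracting it off); you do a single H\"older interpolation followed by one weighted AM--GM. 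Both give the same $c^4/(4\ve^2)$, and your overestimate $2^{14}\to2^{11}$ in \eqref{clelia2} matches the paper's stated constant. So far, equivalent.

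The one place your argument actually diverges is the step you yourself flag as delicate: you assert
$\|\,|\DO|^{3/4}\,\POm\chi\,\vp^{+}\,\|\klg c\,\|\HT^{1/2}\chi\,\vp^{+}\|$,
claiming that the cutoff $\chi$ ``follows through'' the resolvent identity \eqref{res-for-DA-D0} so as to land adjacent to $\HT^{-1/2}$. That is not what \eqref{res-for-DA-D0}/\eqref{peki1} yield: on the right-hand side of \eqref{res-for-DA-D0} the cutoff appears in the factor $\valpha\cdot(i\nabla\chi+\chi\,\V{A})$, sitting next to $\V{A}$, and one then still has to traverse $\RA{iy}$ and $\Xi_{\nu+1/2,0}(-iy)^{*}$ before reaching the vector; commuting $\chi$ past those would generate new terms, not a clean $\chi\HT^{-1/2}$. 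The bound that \eqref{peki1} actually delivers is the weaker $\|\,|\DO|^{3/4}\,\POm\chi\,\vp^{+}\,\|\klg c\,\|\HT^{1/2}\,\vp^{+}\|$, and this is precisely what the paper's own proof uses. (Consequently the $\chi$ inside $\|\HT^{1/2}\chi\vp^+\|^2$ in the printed statement of \eqref{clelia1} appears to be a typo; it is harmless for the paper since \eqref{clelia1} and \eqref{clelia2} are only ever invoked with $\chi\equiv1$, and your own derivation of \eqref{clelia2} immediately discards the $\chi$ via $\|\HT^{1/2}\chi\vp^+\|\klg\|\HT^{1/2}\vp^+\|$ anyway.) So: replace the unsupported ``cutoff follows through'' claim by the bound with $\|\HT^{1/2}\vp^+\|$ on the right, as in the paper, and the rest of your argument is correct and essentially identical to the original.
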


\begin{proof}
Using \eqref{peki1} (which is certainly valid also with
$\PA$ replaced by $\PAm$), we first observe that
$$
\POm\,\chi\,\vp^+\,=\,
(\POm\,\chi-\chi\,\PAm)\,\vp^+\,\in\,\dom(|\D{0}|^{3/4})\,.
$$
This permits to get
\begin{eqnarray*}
\lefteqn{
\big\|\,|\D{0}|^{1/2}\,\POm\,\chi\,\vp^+\,\big\|^2
}
\\
&\klg&
\big\|\,|\D{0}|^{1/4}\,\POm\,\chi\,\vp^+\,\big\|\,
\big\|\,|\D{0}|^{3/4}\,(\POm\,\chi-\chi\,\PA)\,\vp^+\,\big\|
\\
&\klg&
\big\|\,|\D{0}|^{1/4}\,\POm\,\chi\,\vp^+\,\big\|\,
c\,\big\|\,\HT^{1/2}\,\vp^+\,\big\|
\\
&\klg&
\frac{c^2}{2\ve}\,
\SPb{\POm\,\chi\,\vp^+}{|\D{0}|^{1/2}\,\POm\,\chi\,\vp^+}
\,+\,\frac{\ve}{2}\,\big\|\,\HT^{1/2}\,\vp^+\,\big\|^2
\\
&\klg&
\frac{1}{2}\,\big\|\,|\D{0}|^{1/2}\,\POm\,\chi\,\vp^+\,\big\|^2
\,+\,\frac{c^4}{8\ve^2}\,\big\|\,\POm\,\chi\,\vp^+\,\big\|^2
\,+\,\frac{\ve}{2}\,\big\|\,\HT^{1/2}\,\vp^+\,\big\|^2\,,
\end{eqnarray*}
which implies \eqref{clelia1}.
The bound \eqref{clelia2} follows from \eqref{clelia1}
and Kato's inequality, $|\V{x}|^{-1}\klg(\pi/2)|\D{0}|$.
\end{proof}


\section{Semi-boundedness}\label{sec-sb}

\noindent
In the following two subsections we prove Theorems~\ref{thm-sb-np}
and~\ref{le-sb-PF4} which state that the no-pair and
relativistic Pauli-Fierz operators are semi-bounded
provided the coupling constant in front of the Coulomb potential
stays below the critical values $\gcnp=2/(2/\pi+\pi/2)$
and $\gcPF=2/\pi$, respectively.


\subsection{The no-pair operator: Semi-boundedness}\label{ssec-sb-np}

\begin{proof}[Proof of Theorem~\ref{thm-sb-np}]
We pick some $\rho\in(0,1-\gamma/\gcnp)$ and set
$\gamma_\rho:=(1+\rho)\,\gamma$.
By virtue of Lemma~\ref{le-VPA-is-ok}(ii) we have
$\PO\,\vp^+\in\dom(\DO\otimes\Hf^{1/2})$,
whence
\begin{eqnarray}
\SPb{\vp^+}{(\DA-\tgV)\,\vp^+}
&=&\nonumber
\frac{\gamma}{\gamma_\rho}\,
\SPb{\vp^+}{\PO\,(\DO-\tfrac{\gamma_\rho}{|\V{x}|})\,\PO\,\vp^+}
\\
& &
\;+\;\label{friedhelm1a}
(1-\gamma/\gamma_\rho)\,\SPb{\vp^+}{\PO\,\DO\,\vp^+}
\\
& &
\;+\;\nonumber
\SPb{\vp^+}{\valpha\cdot \V{A}\,\vp^+}
\,
\\
& &
\;+\;\label{friedhelm2}
\SPb{\vp^+}{\POm\,\big(\DO-\tfrac{\gamma}{|\V{x}|}\big)
\,\POm\,\vp^+}
\\
& &
\;-\;\label{friedhelm1}
2\gamma\,\Re\SPb{\vp^+}{\PO\,\tfrac{1}{|\V{x}|}\,\POm\,\vp^+}
\\
& &\label{friedhelm2000}
\;+\;\SPb{\vp^+}{\POm\,|\DO|\,\vp^+}\,
-\,\big\|\,|\DO|^{1/2}\,\POm\,\vp^+\,\big\|^2
\,.
\end{eqnarray}
We employ \eqref{clelia1} with $\ve=\delta/4$ to estimate
the second term in \eqref{friedhelm2000} from below
by $-(\delta/4)\SPn{\vp^+}{\HT\,\vp^+}-(4c^4/\delta^2)\,\|\vp^+\|^2$.
Here $\HT=\Hf+E$ and we choose $E=16\,d_1^2$. Then $c^4$ is proportional
to $(d_0^2+2d_{-1}^2)^2$. 
The term in \eqref{friedhelm1}
can be estimated from below by means of \eqref{clelia2},
where we choose $E=16\,d_1^2$, $\ve=\delta/(8\gamma)$,
and $\tau=(1-\gamma/\gamma_\rho)/(2\gamma)=\rho/(2\gamma[1+\rho])$.
With this choice of $\tau$ the portion of the kinetic energy
in \eqref{friedhelm1a} compensates for the contribution
coming from the first term on the right side in \eqref{clelia2}.
By Kato's inequality the term in 
\eqref{friedhelm2} is bounded from below by
$-(1+\pi\gamma/2)\,\|\,|\DO|^{1/2}\,\POm\,\vp^+\|^2$,
which we estimate further by means of 
\eqref{clelia1} with $E=16\,d_1^2$ and
$\ve=\delta/(4+2\pi\gamma)$.
Combining these remarks with
$\valpha\cdot\V{A}\grg-(\delta/4)\,\Hf-4d_{-1}^2/\delta$
(due to \eqref{rb-a}),
we arrive at \eqref{eq-sb-np}.
\end{proof}


\subsection{The semi-relativistic Pauli-Fierz operator: Semi-boundedness}
\label{ssec-sb-PF}

\begin{lemma}\label{le-clara}
There is a constant $k\in(0,\infty)$ such that,
for all $E\grg k^2\,d_1^2$ and all
$\phi\in\core$,
\begin{equation}\label{comm-posit}
\Re\SPb{|\DA|\phi}{\HT\,\phi}\,\grg\,
(1-k\,d_1
\,E^{-1/2})\,\big\|\,|\DA|^{1/2}\,\wt{H}^{1/2}_f\,\phi\,\big\|^2
\,.
\end{equation}
\end{lemma}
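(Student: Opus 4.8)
The plan is to rewrite $\Re\SPb{|\DA|\phi}{\HT\phi}$ as $\big\|\,|\DA|^{1/2}\HT^{1/2}\phi\,\big\|^2$ plus error terms, each of which carries exactly one factor of a commutator of $\valpha\cdot\V{A}$ or of $\sgn(\DA)$ with a power of $\HT$, and then to control those factors --- through the constant $d_1$ alone --- by means of Lemmas~\ref{le-tim} and~\ref{le-sgn}. Throughout I fix $E\grg k^2d_1^2$ with $k$ so large that $\delta_{1/2}(E)/E^{1/2}\klg 2d_1/E^{1/2}\klg 1/2$ (using \eqref{delta-d1}); this makes the Neumann series of Corollary~\ref{cor-T-Xi} and the commutator identities of Lemma~\ref{le-sgn} available. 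The domain facts that are needed --- $\DA$ maps $\core$ into $\dom(\Hf)$, $\sgn(\DA)$ preserves $\dom(\Hf^{1/2})$, and the mapping properties of $\PA$ --- are all contained in Lemmas~\ref{le-VPA-is-ok} and~\ref{le-sgn}.

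Writing $\DA=\DO+\valpha\cdot\V{A}$, where $\DO$ commutes with $\HT$, and putting $v:=\HT^{1/2}\phi$, I rewrite $\SPb{|\DA|\phi}{\HT\phi}=\SPb{\HT^{1/2}|\DA|\HT^{-1/2}v}{v}$. Using $|\DA|=\sgn(\DA)\DA$, the relation $[\valpha\cdot\V{A},\HT^{-1/2}]=-T_{1/2}\HT^{-1/2}$ from the definition of $T_{1/2}$ in Corollary~\ref{cor-T-Xi}, and \eqref{eva99} in the form $\HT^{1/2}\sgn(\DA)\HT^{-1/2}=\sgn(\DA)+S_{1/2}$, I obtain the operator identity
\begin{equation*}
\HT^{1/2}\,|\DA|\,\HT^{-1/2}\,=\,|\DA|\,+\,S_{1/2}\,\DA\,-\,\HT^{1/2}\,\sgn(\DA)\,T_{1/2}\,\HT^{-1/2}\,.
\end{equation*}
Pairing with $v$ and taking real parts, the first term yields the main term $\big\|\,|\DA|^{1/2}v\,\big\|^2=\big\|\,|\DA|^{1/2}\HT^{1/2}\phi\,\big\|^2$, and it remains to bound the two error terms.

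For $\Re\SPb{S_{1/2}\DA v}{v}$ I would factor $\DA=|\DA|^{1/2}\sgn(\DA)|\DA|^{1/2}$, move one $|\DA|^{1/2}$ onto $S_{1/2}$, and invoke $\big\|\,|\DA|^{1/2}S_{1/2}^{0}\,\big\|\klg C\,d_1/E^{1/2}$ from \eqref{com-sgn} (with $a=0$, $\kappa=1/2$); since $|\DA|\grg\id$ gives $\|v\|\klg\big\|\,|\DA|^{1/2}v\,\big\|$, this term is $\klg C\,d_1E^{-1/2}\big\|\,|\DA|^{1/2}\HT^{1/2}\phi\,\big\|^2$. For $\Re\SPb{\HT^{1/2}\sgn(\DA)T_{1/2}\HT^{-1/2}v}{v}$ I would use $\HT^{-1/2}v=\phi$ and \eqref{eva99} once more to pull the outer $\HT^{1/2}$ through $\sgn(\DA)$, turning it into $\Re\SPb{(\sgn(\DA)+S_{1/2})\,\HT^{1/2}T_{1/2}\phi}{\HT^{1/2}\phi}$; bounding $\|\sgn(\DA)+S_{1/2}\|\klg 1+C\,d_1E^{-1/2}$ and again using $\|\HT^{1/2}\phi\|\klg\big\|\,|\DA|^{1/2}\HT^{1/2}\phi\,\big\|$, this is likewise $\klg C\,d_1E^{-1/2}\big\|\,|\DA|^{1/2}\HT^{1/2}\phi\,\big\|^2$, \emph{provided} one disposes of the auxiliary estimate
\begin{equation*}
\big\|\,\HT^{1/2}\,T_{1/2}\,\phi\,\big\|\,\klg\,C\,d_1\,E^{-1/2}\,\big\|\,\HT^{1/2}\phi\,\big\|\,,\qquad\phi\in\core\,.
\end{equation*}
Inserting these estimates into the identity above and absorbing all constants into $k$ gives \eqref{comm-posit}.

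Establishing this auxiliary estimate is the main obstacle: $\HT^{1/2}T_{1/2}$ carries a power of $\HT$ to the \emph{left} of the commutator, so $\|T_{1/2}\|\klg 2d_1E^{-1/2}$ does not suffice, and commuting $\valpha\cdot\V{A}$ naively past $\Hf$ would bring in the constant $d_2$. Instead I would redo the pull-through computation of Lemma~\ref{le-tim}: for the annihilation part one finds $\HT^{1/2}[\HT^{-1/2},\valpha\cdot a(\V{G})]\HT^{1/2}\psi=\sum_{j}\alpha_j\int\overline{G^{(j)}_{\V{x}}(k)}\,\omega(k)\,\big((\HT+\omega(k))^{1/2}+\HT^{1/2}\big)^{-1}a(k)\,\psi\,dk$, whose operator-valued symbol is $\bigO(\omega(k)E^{-1/2})$, so that Cauchy--Schwarz against $\int\omega(k)\|a(k)\psi\|^2\,dk=\|\Hf^{1/2}\psi\|^2$ (cf.\ \eqref{Hf=dGamma}) only involves $\int\omega\|\V{G}\|_\infty^2\klg d_1^2/2$; the creation part is treated by duality together with the pull-through $\HT^{1/2}\ad(k)=\ad(k)(\HT+\omega(k))^{1/2}$, which again produces an $\bigO(\omega(k)E^{-1/2})$ symbol, so that only $d_1$ and the trivial bound $\|\phi\|\klg E^{-1/2}\|\HT^{1/2}\phi\|$ enter. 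Finally, the right-hand side $\big\|\,|\DA|^{1/2}\HT^{1/2}\phi\,\big\|^2=\Re\SPb{|\DA|\HT^{1/2}\phi}{\HT^{1/2}\phi}$ is well defined for $\phi\in\core$ by the domain facts recalled above.
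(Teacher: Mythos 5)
Your operator identity $\HT^{1/2}|\DA|\HT^{-1/2}=|\DA|+S_{1/2}\DA-\HT^{1/2}\sgn(\DA)T_{1/2}\HT^{-1/2}$ is correct, and your handling of the $S_{1/2}\DA$ term via $\SPb{\sgn(\DA)|\DA|^{1/2}v}{|\DA|^{1/2}S_{1/2}^0v}$ and \eqref{com-sgn} is fine. The ``main obstacle'' you single out, however, is not one, and the recomputation of the pull-through estimate you sketch in the final paragraph is entirely superfluous. Writing $T_\nu=\HT^{-\nu}\,\valpha\cdot\V{A}\,\HT^{\nu}-\valpha\cdot\V{A}$ on a suitable dense domain, one reads off at once that $\HT^{\nu}\,T_\nu\,\HT^{-\nu}=\valpha\cdot\V{A}-\HT^{\nu}\,\valpha\cdot\V{A}\,\HT^{-\nu}=-T_\nu^*$, so $\|\HT^{1/2}T_{1/2}\phi\|=\|T_{1/2}^*\,\HT^{1/2}\phi\|\klg\|T_{1/2}\|\,\|\HT^{1/2}\phi\|\klg(2d_1/E^{1/2})\,\|\HT^{1/2}\phi\|$ directly from \eqref{combound2} and \eqref{delta-d1}; the constant $d_2$ was never in danger of appearing. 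The paper's own proof is built on precisely this observation: it rewrites $\Re\SPb{|\DA|\phi}{\HT\phi}$ as $\Re\SPb{\DA\HT^{-1/2}\psi}{\sgn(\DA)\HT^{1/2}\psi}$ with $\psi:=\HT^{1/2}\phi$, moves the power of $\HT$ across $\DA$ in the left slot and across $\sgn(\DA)$ in the right slot separately, and lands on $\Re\SPb{(\DA-T_{1/2}^*)\psi}{(\sgn(\DA)-S_{1/2})\psi}$, where $T_{1/2}^*$ appears on the nose and is controlled simply by $\|T_{1/2}^*\|=\|T_{1/2}\|$. The remaining bookkeeping then uses exactly the same ingredients you invoke, namely $\|T_{1/2}\|$, $\|S_{1/2}\|$, $\||\DA|^{1/2}S_{1/2}\psi\|$, and $|\DA|\grg\id$. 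Your one-sided conjugation by $\HT^{\pm1/2}$ is an equivalent reorganization of this computation, but it places $T_{1/2}^*$ behind a conjugation which you then did not recognize as harmless; once that algebraic identity is noted, your proof closes with no extra work and coincides in substance with the paper's.
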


\begin{proof}
Let $\phi\in\core$ and set $\psi:=\HT^{1/2}\,\phi$.
Using \eqref{def-THT} and \eqref{eva99}, we have
\begin{eqnarray*}
\lefteqn{
\Re \SPb{\DA\,\HT^{-1/2}\,\psi}{\sgn(\DA)\,\HT^{1/2}\,\psi}
}
\\
&=&
\Re \SPb{(\DA-T^*_{1/2})\,\psi}{\HT^{-1/2}\,\sgn(\DA)\,\HT^{1/2}\,\psi}
\\
&=&
\Re \SPb{(\DA-T^*_{1/2})\,\psi}{\big(\sgn(\DA)-S_{1/2}\big)\,\psi}
\\
&\grg&
\SPn{|\DA|\,\psi}{\psi}-\big\|\,|\DA|^{1/2}\,\psi\,\big\|\,
\big\|\,|\DA|^{1/2}\,S_{1/2}\,\psi\,\big\|
\\
& &\qquad\qquad\qquad\qquad
-\,\|T_{1/2}\|\,(1+\|S_{1/2}\|)\,\|\psi\|^2.
\end{eqnarray*}
Together with \eqref{delta-d1}, \eqref{bd-Xi}, and \eqref{com-sgn} this gives
the asserted estimate.
\end{proof}

\bigskip

\begin{proof}[Proof of Theorem~\ref{le-sb-PF4}]
We pick some $\delta>0$ and choose
$E=(\delta^{-2}+k^2)\,d_1^2$,
where $k$ is
the constant appearing in Lemma~\ref{le-clara}.
To start with we recall that
\begin{equation}\label{ttt1}
\DA^2\,\phi\,=\,\big(-i\nabla+\V{A}\big)^2\,\phi\,
+\,\V{S}\cdot\V{B}\,\phi\,+\,\phi\,,\qquad \phi\in\core\,,
\end{equation}
where 
the entries of the formal vector $\V{S}$
are $S_j=\sigma_j\otimes\id_2$ and
$\V{B}$ is the magnetic field, i.e.
$$
\V{S}\cdot\V{B}\,=\,
\V{S}\cdot\ad(\nabla_{\V{x}}\wedge\V{G})
\,+\,\V{S}\cdot a(\nabla_{\V{x}}\wedge\V{G})\,.
$$ 
A standard estimate using Hypothesis~\ref{hyp-G}
shows that, for every $\vp\in\core$, 
$$
\big|\SPb{\vp}{\V{S}\cdot\V{B}\,\vp}\big|\,\klg\,
2\,d_1\,\|\vp\|\,\big\|\,\Hf^{1/2}\,\vp\,\big\|\,
\klg\,
\delta\,\SPn{\vp}{\Hf\,\vp}\,+\,d_1^2\,\delta^{-1}\,\|\vp\|^2
\,.
$$
Using
$E\grg k^2\,d_1^2$ in the third step,
we thus obtain,
for all $\phi\in \core$, 
\begin{eqnarray}
\frac{1}{4}\,\SPb{\phi}{|\V{x}|^{-2}\,\phi}&\klg&\nonumber
  \SPb{\phi}{\big(-i\nabla+\V{A}\big)^2\,\phi}
\\
&\klg&\label{bea1}
\SPb{\DA\,\phi}{\DA\,\phi}\,
+\,\delta\,\SPb{\phi}{(\Hf+\delta^{-2}\,d_1^2)\,\phi}\,-\,\|\phi\|^2
\\
&=&\nonumber
\SPb{\DA\,\phi}{\DA\,\phi}\,+\,\delta\,\SPb{\phi}{\HT\,\phi}
\,-\,(\delta\,k^2\,d_1^2+1)\,\|\phi\|^2
\\
&\klg &\nonumber
\SPb{\DA\,\phi}{\DA\,\phi}\,+\,\SPb{\phi}{\delta^2\,\HT^2\,\phi}
+\,2\Re\,\SPb{|\DA|\,\phi}{\delta\,\HT\,\phi}
\\
& &\nonumber\qquad\qquad\,-\,(\delta\,k^2\,d_1^2+3/4)\,\|\phi\|^2
\\
&=&\nonumber
\big\|\,(|\DA|\,+\,\delta\,\HT)\,\phi\,\big\|^2
\,-\,(\delta\,k^2\,d_1^2+3/4)\,\|\phi\|^2
\,.
\end{eqnarray}
Here we also used 
a diamagnetic inequality in the first step.
The diamagnetic inequalities
used here and in the first step of \eqref{ttt3} below
are well-known at least for classical magnetic fields.
They hold true, however, also for quantized fields 
due to an argument by J.~Fr\"{o}hlich
which is presented in \cite{AHS1978} and \cite{LiebLoss2002b};
see also \cite{Hiroshima1996,Hiroshima1997}.
(The basic underlying observation is that all components
$A_i(\V{x})$ and $A_j(\V{y})$, $i,j\in\{1,2,3\}$, $\V{x},\V{y}\in\RR^3$,
of the vector potential commute and can hence be diagonalized
simultanously. In this way the problem is reduced to the
classical situation.)
Since the square root is operator monotone it follows that, 
for all $\phi\in \core$, 
\begin{equation}\label{ttt3}
\frac{2}{\pi}\,\SPb{\phi}{|\V{x}|^{-1}\,\phi}\,\klg\,
\SPb{\phi}{|-i\nabla+\V{A}|\,\phi}\,\klg
\,\SPb{\phi}{(|\DA|+\delta\,\HT)\,\phi}\,.
\end{equation}
\end{proof}


\section{Exponential localization}\label{sec-el}

\subsection{Outline of the proof}

\noindent
Our next aim is to prove the main Theorems~\ref{thm-el-np}
and~\ref{thm-el-PF} which assert that low-lying spectral subspaces
of the no-pair and semi-relativistic Pauli-Fierz operators
are exponentially localized.
We recall the general strategy of the proofs in this subsection
and apply the results to the no-pair and
semi-relativistic Pauli-Fierz operators in 
Subsections~\ref{ssec-ed-np} and~\ref{ssec-ed-PF},
respectively.
The basic idea underlying the proofs 
is essentially due to \cite{BFS1998b}
and described briefly in Lemma~\ref{le-ed-BFS}.
The technical Lemma~\ref{le-MS-Y} summarizes (and simplifies)
a part of a proof
from \cite{MatteStockmeyer2008a}. 
Occasionally, we will also benefit from some
observations made in \cite{Griesemer2004}.

The spectra of
both the no-pair and the semi-relativistic Pauli-Fierz operators
will certainly be continuous up to their minima,
at least for the physically interesting choice of the form factor.
In particular,
we cannot employ eigenvalue equations to derive exponential
decay estimates. (Of course, this would be possible if were
only interested in the exponential localization of
ground state eigenfunctions.)
According to \cite{BFS1998b}
a possibility to handle this is to smooth
out the spectral projection and to apply a suitable
integral representation of the smoothed projection.
We shall employ the following formula due to Amrein et al.
\cite[Theorem~6.1.4(b)]{ABdMG1996}
which holds for every $f\in C_0^\infty(\RR)$, $\nu\in\NN$,
and every self-adjoint operator, $X$, in some Hilbert space,
\begin{eqnarray}
f(X)&=&\sum_{\vk=0}^{\nu-1}\frac{1}{\pi\,\vk!}\nonumber
\int_{\RR}f^{(\vk)}(\lambda)\,\Im\big[i^{\vk}\,(X-\lambda-i)^{-1}\big]
\,d\lambda
\\
& &+\label{for-ABdMG}
\int_0^1\frac{t^{\nu-1}}{\pi(\nu-1)!}\int_\RR f^{(\nu)}(\lambda)\,
\Im\big[i^{\nu}\,(X-\lambda-it)^{-1}\big]\,d\lambda\,dt\,.
\end{eqnarray}
The following lemma is essentially due to \cite{BFS1998b}.

\begin{lemma}\label{le-ed-BFS}
Let $X$ and $Y$ be self-adjoint operators in $\HR$
with a common domain. Let $a>0$ and
$I\subset\RR$ be a compact interval such that
$I\subset\vr(Y)$. Assume that there exist $C,C'\in(0,\infty)$
and another compact interval $J\subset\vr(Y)$ 
such that $\mr{J}\supset I$ and that, for all $F$ 
satisfying \eqref{hyp-F<>},
\begin{equation}\label{hyp-XY}
\big\|\,e^F\,(X-Y)\,\big\|\,\klg\,C\,,\quad
\sup\limits_{(\lambda,t)\in J\times(0,1]}
\big\|\,e^F\,(Y-\lambda\pm it)^{-1}\,e^{-F}\,\big\|\,\klg\,C'\,.
\end{equation}
Then $\Ran(\id_I(X))\subset \dom(e^{a|\V{x}|})$ and
$$
\big\|\,e^{a|\V{x}|}\,\id_I(X)\,\big\|\,\klg\,c(I,J)\,C\,C'\,,
$$
where
\begin{equation}\label{c(I,J)}
c(I,J)\,=\,k\,\big(1+|J|+\dist(I,J^c)^{-1}\big)\,,
\end{equation}
for some universal constant $k\in(0,\infty)$.
\end{lemma}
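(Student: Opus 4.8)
The goal is to estimate $\|e^{a|\V{x}|}\,\id_I(X)\|$ using the integral representation \eqref{for-ABdMG} applied to the operator $X$, choosing $\nu$ large enough (e.g.\ $\nu=2$ should suffice) and a test function $f\in C_0^\infty(\RR)$ with $f\equiv 1$ on a neighbourhood of $I$ and $\supp f\subset\mr{J}$, so that $\id_I(X)=f(X)\,\id_I(X)$. The key is that each resolvent $(X-\lambda-it)^{-1}$ appearing in \eqref{for-ABdMG} can be compared with the corresponding resolvent of $Y$ via the second resolvent identity, and $Y$ has the good conjugation bound in \eqref{hyp-XY}. Concretely, I would first truncate the weight: for $n\in\NN$ put $F_n:=a|\V{x}|\wedge n$ mollified slightly so that $F_n$ satisfies \eqref{hyp-F<>} (it is bounded, $F_n(\mathbf 0)=0$, $F_n\grg0$, and $|\nabla F_n|\klg a<1$), prove the bound \emph{uniformly in $n$}, and then let $n\to\infty$ and invoke monotone convergence to conclude $\Ran(\id_I(X))\subset\dom(e^{a|\V{x}|})$ together with the stated norm bound. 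Throughout I write $e^{F}$ for $e^{F}\otimes\id$.

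**The main estimate.** Fix such an $F=F_n$. Since $f$ is supported in the resolvent set $\mr J$ of $Y$, the first sum in \eqref{for-ABdMG} with $X=Y$ vanishes when restricted against anything, but more to the point we do not apply \eqref{for-ABdMG} to $Y$; instead we write, for $\lambda\in\supp f^{(\vk)}\subset\mr J$ and $t\in[0,1]$ (and $\lambda\pm it$ thus lying a fixed distance $\dist(I,J^c)$-ish away from $\spec(Y)$ when $t$ is small, and automatically so when $t\grg\dist(\supp f,J^c)$), the second resolvent identity
\[
e^F(X-\lambda-it)^{-1}=\big(\id-e^F(X-\lambda-it)^{-1}e^{-F}\,e^F(X-Y)\big)^{-1}\,e^F(Y-\lambda-it)^{-1}e^{-F}\,e^{F}.
\]
Hmm — more carefully: $e^F(X-\lambda-it)^{-1}e^{-F}=\big(e^F(Y-\lambda-it)e^{-F}+e^F(X-Y)e^{-F}\big)^{-1}$, and since $\|e^F(X-Y)\|\klg C$ is bounded (note $X-Y$ is bounded by hypothesis, so $e^F(X-Y)e^{-F}$ makes sense and equals... actually one must be slightly careful, but $e^{F}(X-Y)$ bounded plus $\|e^{-F}\|\klg1$ gives what we need after expanding), a Neumann series converges once $|t|$ is large or once we are far from $\spec Y$; for the troublesome region — $\lambda$ near $I$ and $t$ small — we instead exploit that $\id_I(X)$ kills the contribution, or rather we use the standard trick from \cite{BFS1998b}: multiply on the right by $\id_I(X)$, insert $f(X)$, and observe that $(X-\lambda-it)^{-1}\id_I(X)$ is bounded by $\dist(I,\lambda)^{-1}$ uniformly for $\lambda$ in a punctured neighbourhood — this is why one needs $f^{(\vk)}$ supported away from $I$ for $\vk\grg1$, which is arranged by the choice $f\equiv1$ near $I$. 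Collecting the pieces, each term in \eqref{for-ABdMG} applied to $X$ and tested against $e^F\,\cdot\,\id_I(X)$ is bounded by a constant times $C\,C'$ times an integral of $|f^{(\vk)}|$ and a harmless factor $t^{\nu-1}/(1+\text{dist})$, all uniform in $F=F_n$; summing over $\vk$ and integrating in $(\lambda,t)$ produces the factor $c(I,J)=k(1+|J|+\dist(I,J^c)^{-1})$ as in \eqref{c(I,J)}.

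**The obstacle.** The delicate point is the region $\lambda\in I$, $t\to0^+$, where neither the Neumann series in $e^F(X-Y)$ is obviously convergent (it is, since $X-Y$ is globally bounded and $e^F(Y-\lambda-it)^{-1}e^{-F}$ has a uniform bound $C'$ by hypothesis — but the product $C\,C'$ must beat $1$? no: one does \emph{not} need $CC'<1$ because the resolvent of $X$ on $\mr J\setminus I$ is genuinely bounded once we are slightly inside $\vr(X)$) nor is the naive bound integrable. The clean resolution is the one in \cite{BFS1998b}: write $\id_I(X)=f(X)\id_I(X)$, apply \eqref{for-ABdMG} to $f(X)$, and for the remainder term (the $t$-integral with $f^{(\nu)}$) use $\|e^F(X-\lambda-it)^{-1}e^{-F}\|\klg C'\,(1-C\,C'\,\dots)^{-1}$ valid on $\supp f^{(\nu)}$ which is separated from $\spec(Y)$, while for the main terms ($\vk<\nu$) the supports of $f^{(\vk)}$, $\vk\grg1$, avoid $I$; the term $\vk=0$ carries $f$ itself, not a derivative, and there one really must localize near $\spec(X)\cap\supp f\subset I$ and use that $(X-\lambda-it)^{-1}\id_I(X)$ stays bounded — this is the step where the precise form of $c(I,J)$, in particular the $\dist(I,J^c)^{-1}$, enters. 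I expect the bookkeeping of these three regimes, and checking that all constants are uniform in the truncation $F_n$ (so that the final monotone limit is legitimate), to be the main technical burden; everything else is the second resolvent identity and elementary integration.
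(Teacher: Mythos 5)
Your overall toolkit is the right one (the formula \eqref{for-ABdMG}, the second resolvent identity, $\nu=2$ to kill the $1/t$ singularity, truncated weights $F_n$ plus monotone convergence), but you discard the one device that makes the proof work, and as a result you manufacture an ``obstacle'' that the correct argument never encounters and that your proposed resolutions do not actually close.

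The device is this: since $\supp f\subset J\subset\vr(Y)$ one has $f(Y)=0$, hence
\[
e^F\,\id_I(X)\,=\,e^F\,\bigl(f(X)-f(Y)\bigr)\,\id_I(X)\,,
\]
and you apply \eqref{for-ABdMG} to the \emph{difference} $f(X)-f(Y)$. Every resolvent in \eqref{for-ABdMG} then occurs in a difference $(X-z)^{-1}-(Y-z)^{-1}$, which the second resolvent identity turns into
\[
e^F\,\bigl[(X-z)^{-1}-(Y-z)^{-1}\bigr]
\,=\,-\,\bigl[e^F(Y-z)^{-1}e^{-F}\bigr]\,\bigl[e^F(X-Y)\bigr]\,(X-z)^{-1}\,,
\]
so that \eqref{hyp-XY} and $\|(X-z)^{-1}\|\klg 1/|\Im z|$ give the uniform bound $CC'/t$. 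In the first sum of \eqref{for-ABdMG} the parameter is frozen at $t=1$, so those terms cost only $CC'\!\cdot\!\|f^{(\vk)}\|_1$; in the remainder term the factor $t^{\nu-1}=t$ precisely cancels the $1/t$, giving $CC'\!\cdot\!\|f^{(2)}\|_1$. Choosing $f$ to rise from $0$ to $1$ over a length comparable to $\dist(I,J^c)$ gives $\|f\|_1\klg|J|$, $\|f'\|_1\klg\const$, $\|f''\|_1\klg\const/\dist(I,J^c)$, which is exactly the shape of $c(I,J)$ in \eqref{c(I,J)}. No case distinction in $\lambda$ is needed, and $\id_I(X)$ is used only for $\|\id_I(X)\|\klg1$.

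Your text instead says explicitly ``we do \emph{not} apply \eqref{for-ABdMG} to $Y$'' and then tries to bound $e^F(X-\lambda-it)^{-1}e^{-F}$ directly via a Neumann series in $e^F(X-Y)$ times $e^F(Y-z)^{-1}e^{-F}$. That series has no reason to converge: its convergence would require something like $CC'<1$, which is not a hypothesis, and your remark that ``the resolvent of $X$ on $\mr J\setminus I$ is genuinely bounded'' does not help because $\spec(X)$ meets $I$, so the resolvent of $X$ blows up there as $t\to0$. The alternative ``standard trick'' you invoke, bounding $(X-\lambda-it)^{-1}\id_I(X)$ by $\dist(\lambda,I)^{-1}$, does not repair this either: for the $\vk=0$ term the support of $f$ includes $I$ (so $\dist(\lambda,I)=0$), and for the other terms it is unnecessary because $t=1$ there, while splitting off $e^F(Y-z)^{-1}\id_I(X)$ without the subtraction leads to circular estimates or bounds that blow up like $\|e^{F_n}\|$ as $n\to\infty$. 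In short: the missing step is to exploit $f(Y)=0$ and work throughout with the resolvent \emph{difference}; once that is done, the ``delicate region'' you describe disappears and the bookkeeping is a one-line integral estimate.
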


\begin{proof}
We find some $f\in C_0^\infty(\RR,[0,1])$ such that
$f\equiv1$ on $I$ and $\supp(f)\subset J$.
Then $e^F\,\id_I(X)=e^F\,(f(X)-f(Y))\,\id_I(X)$,
since $J\subset \vr(Y)$. Here we can rewrite $f(X)-f(Y)$
by means of
\eqref{for-ABdMG}. 
On account of \eqref{hyp-XY} and  
the second resolvent identity
we have, for every $\lambda\in J$ and $t\in(0,1]$,
\begin{equation}\label{karl}
\big\|\,e^F\,\big\{\,(X-\lambda\pm it)^{-1}
-(Y-\lambda\pm it)^{-1}\,\big\}\,\big\|\,\klg\,C\,C'/t\,.
\end{equation}
Now, we observe that the factor $t^{\nu-1}$ in \eqref{for-ABdMG}
compensates for the $1/t$ singularity in \eqref{karl}
if we choose $\nu=2$.
Using these remarks we readily find some
$c(I,J)\in(0,\infty)$ such that, for all
$F$ 
satisfying \eqref{hyp-F<>}, we have 
$\|e^F\,\id_I(X)\|\klg c(I,J)\,C\,C'$.
By an appropriate choice of $f$ we can ensure that $c(I,J)$
has the form given in \eqref{c(I,J)}.
But then $\|e^{a|\V{x}|}\,\id_I(X)\|\klg c(I,J)\,C\,C'$
holds true also as a consequence of 
the monotone convergence theorem
applied to a suitable increasing sequence of weights $F_1,F_2,\ldots\;$,
where each $F_j$ satisfies \eqref{hyp-F<>}.
\end{proof}

\bigskip

\noindent
To verify the second condition in \eqref{hyp-XY}
the following lemma is helpful.

\begin{lemma}\label{le-MS-Y}
Let $Y$ be a positive operator in $\HR$
which 
admits $\core$ as a form core.
Set $b:=\inf\spec(Y)$ and
let $J\subset(-\infty,b)$ be some compact interval.
Let $a\in(0,1)$ and assume that, for all
$F$ satisfying \eqref{hyp-F<>},
we have $e^{\pm F}\,\form(Y)\subset\form(Y)$.
(Notice that $e^{\pm F}$ maps $\core$ into itself.)
Assume further
that there exist constants $c(a),f(a),g(a),h(a)\in[0,\infty)$
such that $c(a)<1/2$ and $b\,g(a)+h(a)<b-\max J$ and, for all
$F$ satisfying \eqref{hyp-F<>} and $\vp\in\core$,
\begin{eqnarray}
\big|\SPb{\vp}{(e^{F}\,Y\,e^{-F}-Y)\,\vp}\big|&\klg&\label{eq-Y2a}
c(a)\,\SPn{\vp}{Y\,\vp}\,+\,f(a)\,\|\vp\|^2\,,
\\ \label{eq-Y2}
\Re\SPb{\vp}{e^F\,Y\,e^{-F}\,\vp}
&\grg&(1-g(a))\,\SPb{\vp}{Y\,\vp}\,-\,h(a)\,\|\vp\|^2\,.
\end{eqnarray}
Then we have, for all 
$F$ satisfying \eqref{hyp-F<>},
\begin{equation}\label{eq-Y3}
\sup_{(\lambda,t)\in J\times(0,1]}
\big\|\,e^F\,(Y-\lambda\pm it)^{-1}\,e^{-F}\,\big\|\,\klg\,
\big(b-\max J-h(a)-b\,g(a)\big)^{-1}
\,.
\end{equation}
\end{lemma}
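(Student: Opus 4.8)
The plan is to reduce \eqref{eq-Y3} to a coercivity estimate for the quadratic form of $Y$ twisted by $e^{\pm F}$. Write $q(\vp,\psi):=\SPn{Y^{1/2}\vp}{Y^{1/2}\psi}$ for the form of $Y$ on $\form(Y)=\dom(Y^{1/2})$ and, for $F$ satisfying \eqref{hyp-F<>}, set
$$
b_F(\vp,\psi)\,:=\,q(e^{-F}\vp\,,\,e^{F}\psi)\,,\qquad \vp,\psi\in\form(Y)\,,
$$
which is well-defined because $e^{\pm F}$ map $\form(Y)$ into itself; being bounded on $\HR$ and leaving $\form(Y)$ invariant, $e^{\pm F}$ are bounded on $\form(Y)$ for the form norm $\|\cdot\|_+:=(q(\cdot,\cdot)+\|\cdot\|^2)^{1/2}$, so $b_F$ is $\|\cdot\|_+$-continuous. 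For $\vp\in\core$ one has $\SPn{\vp}{(e^{F}Ye^{-F})\,\vp}=q(e^{F}\vp,e^{-F}\vp)=\ol{b_F(\vp,\vp)}$, so \eqref{eq-Y2a} (with $c(a)<1/2$) shows $b_F$ to be comparable to $q$; combined with the density of $\core$ in $(\form(Y),\|\cdot\|_+)$ and with $e^{\pm F}\core\subset\core$, both \eqref{eq-Y2a} and \eqref{eq-Y2} then extend by continuity from $\core$ to all of $\form(Y)$. In particular
$$
\Re b_F(w,w)\,\grg\,(1-g(a))\,q(w,w)-h(a)\,\|w\|^2\,\grg\,\big((1-g(a))\,b-h(a)\big)\,\|w\|^2\,,\qquad w\in\form(Y)\,,
$$
the last step using $q(w,w)\grg b\,\|w\|^2$.

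Next I would fix $\lambda\in J$, $t\in(0,1]$, and a sign, set $z:=\lambda\pm it$ (so $z\in\vr(Y)$ since $\lambda<b=\inf\spec(Y)$), pick an arbitrary $\psi\in\HR$, and put $\eta:=(Y-z)^{-1}e^{-F}\psi\in\dom(Y)$ and $w:=e^{F}\eta\in\form(Y)$, so that $e^{F}(Y-z)^{-1}e^{-F}\psi=w$. Using $(Y-z)\eta=e^{-F}\psi$, the self-adjointness of $e^{F}$ and $e^{2F}$, and $\SPn{Y\eta}{\zeta}=q(\eta,\zeta)$ for $\zeta\in\form(Y)$ (applied with $\zeta=e^{2F}\eta=e^{F}(e^{F}\eta)\in\form(Y)$), one computes
$$
\SPn{\psi}{w}\,=\,\SPn{(Y-z)\,\eta}{e^{2F}\eta}\,=\,q(e^{-F}w,e^{F}w)\,-\,\ol{z}\,\|w\|^2\,=\,b_F(w,w)\,-\,\ol{z}\,\|w\|^2\,.
$$
Note that only $e^{F}$ applied to elements of $\form(Y)$ enters here (possibly twice, but never $e^{2F}$ as a map on $\form(Y)$), which is why $a\in(0,1)$ suffices rather than $a<1/2$. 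Taking real parts, using $\Re\ol{z}=\lambda$ together with the lower bound for $\Re b_F(w,w)$ above, and then the Cauchy--Schwarz inequality,
$$
\big((1-g(a))\,b-h(a)-\lambda\big)\,\|w\|^2\,\klg\,\Re b_F(w,w)-\lambda\,\|w\|^2\,=\,\Re\SPn{\psi}{w}\,\klg\,\|\psi\|\,\|w\|\,.
$$
Since $\lambda\klg\max J$, the coefficient on the left is at least $b-\max J-h(a)-b\,g(a)>0$ by hypothesis, so $\|w\|\klg\big(b-\max J-h(a)-b\,g(a)\big)^{-1}\|\psi\|$; as $\psi\in\HR$, $\lambda\in J$, $t\in(0,1]$, and the sign were arbitrary, this is exactly \eqref{eq-Y3}.

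The computations of the second paragraph are routine once the objects are in place; the step that needs care is the soft-analysis bookkeeping of the first paragraph, namely verifying that \eqref{eq-Y2a} (whence the $\|\cdot\|_+$-comparability of $b_F$ and $q$) together with $e^{\pm F}\core\subset\core$ really lets \eqref{eq-Y2} be invoked for the particular element $w=e^{F}(Y-z)^{-1}e^{-F}\psi\in\form(Y)$ produced by the resolvent, and that the argument never leaves $\form(Y)$ nor requires $e^{2F}$ to preserve the form domain. With that settled, \eqref{eq-Y3} is a one-line coercivity estimate.
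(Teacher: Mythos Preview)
Your proof is correct and follows the same underlying idea as the paper's---a numerical-range/coercivity bound for the conjugated operator---but the implementation is different and somewhat more elementary. The paper identifies $e^{F}Ye^{-F}$ as a closed operator, invokes Kato's theory of sectorial extensions to show that this operator coincides with the distinguished sectorial extension $Y_F$ of $(e^{F}Ye^{-F})\!\upharpoonright_{\core}$ (this is where \eqref{eq-Y2a} with $c(a)<1/2$ enters), reads \eqref{eq-Y2} as a numerical-range bound for $Y_F$, and then bounds the resolvent by the inverse distance to the numerical range. You instead work purely with the sesquilinear form $b_F$, use the closed graph theorem to see that $e^{\pm F}$ is $\|\cdot\|_+$-bounded on $\form(Y)$, extend \eqref{eq-Y2} from $\core$ to all of $\form(Y)$ by density, and then apply it directly to the specific vector $w=e^{F}(Y-z)^{-1}e^{-F}\psi$ via the identity $\SPn{\psi}{w}=b_F(w,w)-\ol{z}\,\|w\|^2$. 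Your route bypasses the abstract operator-theoretic identification of $Y_F$; the paper's route packages the same coercivity as a standard numerical-range statement and can draw on existing theorems from \cite{Kato} rather than redoing the form bookkeeping by hand. The two arguments are equivalent in strength and yield the identical constant in \eqref{eq-Y3}.
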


\begin{proof}
Since $ e^F$ is an isomorphism on $\HR$
the densely defined operators 
$e^F\,Y\,e^{-F}$ and $Y$
have the same resolvent set and 
\begin{equation}\label{def-RF}
\sR_F(z)\,:=\,
e^F\,(Y-z)^{-1}\,e^{-F}=(e^F\,Y\,e^{-F}-z)^{-1},
\qquad z\in\vr(Y)
\,.
\end{equation}
In particular, 
$e^F\,Y\,e^{-F}$ is closed because its resolvent set is not
empty.  
Since $e^{-F}$ is a
self-adjoint isomorphism  we further know that
\begin{eqnarray}
(e^F\,Y\,e^{-F})^*&=&e^{-F}\,(e^F\,Y)^*\,=\,e^{-F}\,Y\,e^F\,.\label{mona4a}
\end{eqnarray}
By assumption we have 
\begin{eqnarray}
\dom(e^{\pm F}\,Y\,e^{\mp F})\,=\,e^{\pm F}
\dom(Y)\,\subset\,\label{mona4b}
e^{\pm F}\form (Y)\,\subset\,\form(Y)\,.
\end{eqnarray} 
Condition \eqref{eq-Y2a} and $c(a)<1/2$ imply that
$\big(e^F\,Y\,e^{-F}\big)\!\!\upharpoonright_{\core}$ 
has a distinguished closed and sectorial extension which we denote by $Y_F$.
This extension is the
only closed extension having the
properties $\dom(Y_F)\subset \form(Y)$,
$\dom(Y_F^*)\subset \form(Y)$,
and $it\in\vr(Y_F)$, for all $t\in\RR$ such that $|t|$
is larger than some positive constant; see
\cite{Kato}. 
Thanks to \eqref{def-RF}, \eqref{mona4a}, and \eqref{mona4b}, 
we know
that
$e^F\,Y\,e^{-F}$ {\em is} a closed extension enjoying all
these properties, whence 
$
Y_F=e^F\,Y\,e^{-F}
$.
By virtue of \eqref{eq-Y2} 
we have, with $\delta:=b-\max J-b\,g(a)-h(a)>0$ and for all
$\lambda\in J$, $t\in(0,1]$, 
$\vp\in \core$,
\begin{equation}
  \label{mona8}
\Re\,\SPb{\vp}{(Y_F-\lambda\pm it)\,\vp}\,\grg\,
\big\{(1-g(a))\,b-\lambda-h(a)\big\}\,\|\vp\|^2\,\grg\,
\delta\,\|\vp\|^2
\,.
\end{equation}
Therefore, the numerical range of $Y_F-\lambda\pm it$
is contained in the half space $\{\zeta\in\CC:\,\Re \zeta\grg\delta\}$
\cite[Theorem~VI.1.18 and Corollary~VI.2.3]{Kato}.
Moreover, by \eqref{def-RF} the deficiency of 
$Y_F-\lambda\pm it$ is zero, 
and we may hence estimate the norm
of $(Y_F-\lambda\pm it)^{-1}$ by the inverse distance of $\lambda\pm it$
to the numerical range of $Y_F$ \cite[Theorem~V.3.2]{Kato}.
We thus obtain the estimate 
$
\|\,(Y_F-\lambda\pm it)^{-1}\,\|\klg\delta^{-1}
$,
for all
$\lambda\in J$ and $t\in(0,1]$,
which together with \eqref{def-RF} proves the lemma.
\end{proof}


\subsection{The no-pair operator: Localization}\label{ssec-ed-np}

\noindent
To begin with
we introduce a scaled partition of unity. 
Namely, we pick some $\tilde{\mu}\in C_0^\infty(\RR^3,[0,1])$
such that $\tilde{\mu}\equiv1$ on $\{|\V{x}|\klg1\}$ 
and $\tilde{\mu}\equiv0$ on $\{|\V{x}|\grg2\}$
and observe that
$\theta:=\tilde{\mu}^2+(1-\tilde{\mu})^2\grg1/2$.
Then we
set, for $R\grg1$ and $x\in\RR^3$,
$\mu_{1,R}(x):=\tilde{\mu}(x/R)/\theta^{1/2}(x/R)$,
and 
$\mu_{2,R}(x)
:=(1-\tilde{\mu}(x/R))/\theta^{1/2}(x/R)$, so that
$\mu_{1,R}^2+\mu_{2,R}^2=1$. We define
\begin{equation}
  \label{eq:20}
  e(\gamma)\,:=\, \inf\,\spec(\NPO{\gamma}+E\,\PA)\,,
\qquad \gamma\in[0,\gcnp)\,,
\end{equation}
where $\NPO{\gamma}$ is considered as an operator acting in $\HRp$.
The parameter $E>0$ is chosen sufficiently large later on.
We shall apply Lemmata~\ref{le-ed-BFS} and~\ref{le-MS-Y}
with
\begin{eqnarray}
\Xnp{\gamma}&=&\NPO{\gamma}+E\,\PA\,+\,\PAm\,\Hf\,\PAm
\,+\,e(0)\,\PAm\,,
\\
\Ynp{\gamma}&=&\,\Xnp{\gamma}\,+\,(e(0)-e(\gamma_R))\,\PA\,
\mu_{1,R}^2\,\PA\,,
\end{eqnarray}
where $\NPO{\gamma}$ is now considered as an operator acting
in $\HR$ and
\begin{equation}\label{def-gammaR}
\gamma_R\,:=\,(1+1/R)\,\gamma/(1-c/R)\,,\qquad R>c\,.
\end{equation}
Here $c\grg1$ is the constant
appearing in Lemma~\ref{boundbelow}.
$\Xnp{\gamma}$ and $\Ynp{\gamma}$ are self-adjoint
on the same domain
$\dom(\Xnp{\gamma})=\dom(\Ynp{\gamma})=
\dom(\NPO{\gamma})\cap\dom(\PAm\,\Hf\,\PAm)$
and both operators 
admit $\core$ as a form core.

The idea to define the comparison operator $\Ynp{\gamma}$
by essentially adding only a cut-off function located
in a ball about the origin to $\Xnp{\gamma}$ is borrowed
from \cite{Griesemer2004}. An obvious consequence 
of this choice (and the bound \eqref{bd-PAF})
is that the first condition in
\eqref{hyp-XY} is fulfilled.

\begin{lemma}\label{boundbelow}
Let $\gamma\in[0,\gcnp)$ and $E>\max\{1\,,\,(2d_1)^2\}$. 
Then there exist universal constants $c,c'\in[1,\infty)$ 
such that, 
for all 
$R>\max\{\,c\,,\,(\gamma+c\,\gcnp)/(\gcnp-\gamma)\,\}$  
and all $\vp\in\core$, $\vp^+:=\PA\,\vp$,
\begin{equation*}
\SPb{\vp}{\Ynp{\gamma}\,\vp}
\grg\,e(0)\,\|\vp\|^2
-(c'/R)\,(e(0)+|e(\gamma_R)|)\,\|\vp^+\|^2
\,.
\end{equation*}
\end{lemma}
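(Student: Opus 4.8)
The plan is to run an IMS‑type localization with the partition $\mu_{1,R}^2+\mu_{2,R}^2=1$ (so that $\|\nabla\mu_{j,R}\|_\infty=\bigO(1/R)$ and $\supp\mu_{2,R}\subset\{|\V{x}|\grg R\}$), exploiting that the non‑locality of $\NPO{\gamma}$ enters the estimate only through commutators with $\PA$, which are small in $R$. Writing $\vp^\pm:=\PApm\vp$ and $\HT:=\Hf+E$, I would first record the purely algebraic identity
\[
\SPn{\vp}{\Ynp{\gamma}\,\vp}=\sum_{j=1,2}\SPb{\mu_{j,R}\vp^+}{(\DA-\tgV+\HT)\,\mu_{j,R}\vp^+}
+e(0)\|\vp^-\|^2+\SPb{\vp^-}{\Hf\vp^-}+(e(0)-e(\gamma_R))\|\mu_{1,R}\vp^+\|^2 ,
\]
which follows from the definitions of $\Xnp{\gamma},\Ynp{\gamma}$ together with $\sum_j\mu_{j,R}\,\cO\,\mu_{j,R}=\cO$ for $\cO=\DA,\tfrac1{|\V{x}|},\HT$ — and here there is no localization error, since $\DA$ is first order (so $[\mu_{j,R},[\mu_{j,R},\DA]]=0$) and $\tfrac1{|\V{x}|},\HT$ commute with multiplication operators. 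All these quadratic forms are well defined because $\PA$ maps $\core$ into $\dom(\DA)$ (it commutes with $\DA$) and into $\dom(\DO\otimes\Hf)$ by Lemma~\ref{le-VPA-is-ok}(ii).

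Second, in each summand I would pass from $\SPn{\mu_{j,R}\vp^+}{\cO\,\mu_{j,R}\vp^+}$ to $\SPn{q_j}{\cO\,q_j}$, $q_j:=\PA\mu_{j,R}\vp^+$, for each of $\cO=\DA,\tfrac1{|\V{x}|},\HT$, invoking Corollary~\ref{cor-compadre} with $F=0$, $\chi=\mu_{j,R}$, $\ve=1/R$, together with the norm bounds $\|q_j-\mu_{j,R}\vp^+\|=\|\cK_{\mu_{j,R},0}\vp^+\|\klg\tfrac{\const}{R}\|\vp^+\|$ and $\|\cK_{\mu_{j,R},0}\,\HT^{1/2}\vp^+\|\klg\tfrac{\const}{R}\|\HT^{1/2}\vp^+\|$ from \eqref{peki} and \eqref{peki2}. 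For $j=2$ one first uses $\tgV\klg\gamma/R$ on $\supp\mu_{2,R}$ to discard the Coulomb term and then $\SPn{q_2}{(\DA+\HT)q_2}=\SPn{q_2}{(\NPO{0}+E\PA)q_2}\grg e(0)\|q_2\|^2$. For $j=1$, after setting aside a reservoir $\tfrac{c-1}{R}\SPn{q_1}{(\DA+\HT)q_1}$, the remaining coefficients balance so that the rescaled Coulomb coupling becomes exactly $\gamma_R=\tfrac{(1+1/R)\gamma}{1-c/R}$, giving $(1-\tfrac cR)\SPn{q_1}{(\NPO{\gamma_R}+E\PA)q_1}\grg(1-\tfrac cR)\,e(\gamma_R)\|q_1\|^2$; here one uses that $e$ is nonincreasing in $\gamma$ and that $\gamma_R<\gcnp$ precisely in the stated range of $R$. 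Collecting the pieces, adding back $(e(0)-e(\gamma_R))\|\mu_{1,R}\vp^+\|^2$ (which upgrades the coefficient $e(\gamma_R)$ to $e(0)$), dropping $\SPb{\vp^-}{\Hf\vp^-}\grg0$, and using $\|\mu_{1,R}\vp^+\|^2+\|\mu_{2,R}\vp^+\|^2=\|\vp^+\|^2$ and $\|\vp^+\|^2+\|\vp^-\|^2=\|\vp\|^2$ then yields the asserted bound.

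The point requiring care — and, I expect, the main obstacle — is that Corollary~\ref{cor-compadre} applied to $\cO=\tfrac1{|\V{x}|}$ and $\cO=\HT$ produces error terms of size $\tfrac1R\|\HT^{1/2}\vp^+\|^2$ rather than $\tfrac1R\|\vp^+\|^2$, and a naive bootstrap absorbing these would leak an $\bigO(1/R)\,e(0)\|\vp^-\|^2$ term into the estimate, which the statement does not permit. This is precisely what the reservoir is for: since $\DA\grg\id$ on $\HRp$ one has $\SPn{q_j}{(\DA+\HT)q_j}\grg\|\HT^{1/2}q_j\|^2$, and since $\HT$ commutes with $\mu_{j,R}$, $\sum_j\|\HT^{1/2}q_j\|^2\grg(1-\tfrac{\const}{R})\|\HT^{1/2}\vp^+\|^2$. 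Hence, choosing the constant $c$ in the definition of $\gamma_R$ large enough (depending only on the universal constants of Corollary~\ref{cor-compadre} and on the fixed profiles of the $\mu_{j,R}$), the retained $\tfrac{c-1}{R}\|\HT^{1/2}\vp^+\|^2$ dominates the total $\HT^{1/2}$‑error, so those terms cancel outright and no bootstrap is needed. Every remaining error constant is then universal, because all dependence on $\gamma$, $e^2$, $\UV$ is absorbed into the factor $e(0)+|e(\gamma_R)|$, using for instance $\gamma<\gcnp<1$ and $e(0)\grg1+E>d_1^2$ (the latter from the supersymmetric bound $\spec(|\DA|)\subset[1,\infty)$ and $E>(2d_1)^2$); one takes $c,c'\grg1$ to be the resulting universal constants.
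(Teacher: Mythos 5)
Your proposal follows the same route as the paper: the exact IMS identity ($\DA$ first order so no localization error, $\tgV$ and $\HT$ commuting with $\mu_{j,R}$), Corollary~\ref{cor-compadre} with $F=0$, $\chi=\mu_{j,R}$, $\ve=1/R$ to move $\PA$ past the cut-offs, absorption of the resulting $\tfrac1R\|\HT^{1/2}\vp^+\|^2$ error into the positive $\HT$ term, balancing coefficients via $\gamma_R$, and finishing with the ground-state energy bounds $e(\gamma_R)$, $e(0)$. That is precisely what the paper does, down to the choice of $\ve=1/R$.

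One point of order needs attention. You state that for each summand you apply Corollary~\ref{cor-compadre} ``for each of $\cO=\DA,\tfrac1{|\V{x}|},\HT$'' and afterwards, for $j=2$, discard the Coulomb term using $\tgV\klg\gamma/R$ on $\supp\mu_{2,R}$. If the corollary has already been applied, the quantity in hand is $\SPn{q_2}{\tgV q_2}$ with $q_2=\PA\mu_{2,R}\vp^+$, and $\PA$ does not preserve the support of $\mu_{2,R}$, so the pointwise bound $\tgV\klg\gamma/R$ no longer gives smallness. The paper avoids this by estimating the $j=2$ Coulomb term \emph{before} moving $\PA$ inside, i.e. directly on $\SPn{\mu_{2,R}\vp^+}{\tgV\,\mu_{2,R}\vp^+}\klg(\gamma/R)\|\vp^+\|^2$, and applies Corollary~\ref{cor-compadre} only to the remaining $\DA$, $\HT$, and $j=1$ Coulomb pieces. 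With this reordering (and the ``reservoir'' written as a sum over $j=1,2$ rather than $j=1$ alone, since you need $\sum_j\|\HT^{1/2}q_j\|^2$ to dominate $\|\HT^{1/2}\vp^+\|^2$), your argument coincides with the paper's proof.
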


\begin{proof}
Let $\vp\in\core$ and set $\vp^+:=\PA\,\vp$. Since
$\nabla=\sum_{i=1,2}\mu_{i,R}\,\nabla\,\mu_{i,R}$ we have
$$
\SPn{\vp}{(\NPO{\gamma}+E\,\PA)\,\vp}
\,=\,
\sum_{i=1,2}
\SPb{\vp^+}{\PA\,\mu_{i,R}\,(\DA-\gamma/|\V{x}|+\HT)\,\mu_{i,R}\,\PA\,\vp^+}\,,
$$
where $\HT=\Hf+E$.
On account of Corollary~\ref{cor-compadre} with $\ve=1/R$
we thus have, for all $R\grg1$,
\begin{eqnarray}
\lefteqn{\nonumber
\SPn{\vp}{(\NPO{\gamma}+E\,\PA)\,\vp}
}
\\
&\grg&\nonumber
(1-1/R)\sum_{i=1,2}
\SPb{\vp^+}{\mu_{i,R}\,\PA\,(\DA+\HT)\,\PA\,\mu_{i,R}\,\vp^+}
\\
& &\;\,-(1+1/R)\,\SPb{\vp^+}{\mu_{1,R}\,\PA\,\tgV\,\PA\,\mu_{1,R}\,\vp^+}
\,-\,\big\|\,\mu_{2,R}^2\,\gamma/|\V{x}|\,\big\|_\infty\,\|\vp^+\|^2
\nonumber
\\
& &\qquad\qquad
\,-\,\sum_{i=1,2}\frac{3K(1+R)\,\|\nabla\mu_{i,1}\|^2}{R^2}
\,\big\|\,\HT^{1/2}\,\vp^+\,\big\|^2
\,.\label{laura1}
\end{eqnarray}
($K$ is the constant
appearing in Corollary~\ref{cor-compadre}.)
We set $C_\mu:=\sum_{i=1,2}\|\mu_{i,1}\|^2_\infty$ and apply
Corollary~\ref{cor-compadre} once more to obtain
\begin{eqnarray*}
\big\|\,\HT^{1/2}\,\vp^+\,\big\|^2
&=&
\sum_{i=1,2}\big\|\,\HT^{1/2}\,\mu_{i,R}\,\PA\,\vp^+\,\big\|^2
\\
&\klg&
2\sum_{i=1,2}\SPb{\vp^+}{\mu_{i,R}\,\PA\,\HT\,\PA\,\mu_{i,R}\,\vp^+}
\,+\,
\frac{2K\,C_\mu}{R}\,\big\|\,\HT^{1/2}\,\vp^+\,\big\|^2\,.
\end{eqnarray*}
Here we also estimated $1+1/R\klg2$.
This implies, for $R\grg 4K\,C_\mu$,
$$
\big\|\,\HT^{1/2}\,\vp^+\,\big\|^2\,\klg\,
4\sum_{i=1,2}\SPb{\vp^+}{\mu_{i,R}\,\PA\,\HT\,\PA\,\mu_{i,R}\,\vp^+}\,.
$$
Combining the previous estimate with \eqref{laura1} 
and setting $c:=1+24K\,C_\mu$
we arrive at
\begin{eqnarray*}
\lefteqn{
\SPn{\vp}{(\NPO{\gamma}+E\,\PA)\,\vp}
}
\\
&\grg&
(1-c/R)\,\SPb{\mu_{1,R}\,\vp^+}{(\NPO{\gamma_R}+E\,\PA)\,\mu_{1,R}\,\vp^+}
\\
&=&
\;+\,(1-c/R)\,\SPb{\mu_{2,R}\,\vp^+}{(\NPO{0}+E\,\PA)\,\mu_{2,R}}\,-\,
(\gamma/R)\,\|\vp^+\|^2
\\
&\grg&
(1-c/R)\,\big\{e(\gamma_R)\,\big\|\,\PA\,\mu_{1,R}\,\vp^+\,\|^2
+e(0)\,\big\|\,\PA\,\mu_{2,R}\,\vp^+\,\big\|^2\big\}
-(\gamma/R)\,\|\vp^+\|^2
\\
&\grg&
e(\gamma_R)\,\|\mu_{1,R}\,\vp^+\|^2\,+\,e(0)\,\|\mu_{2,R}\,\vp^+\|^2
\,-\,(c'/R)(e(0)+|e(\gamma_R)|)\,\|\vp^+\|^2
\,,
\end{eqnarray*}
where $\gamma_R$ is given by \eqref{def-gammaR} and 
$c'\in(0,\infty)$ is some universal constant. 
In the last step we used 
$\|\,[\PA,\mu_{2,R}]\,\|
\klg\|\nabla\mu_{2,1}\|_\infty/(2R)$.
We also assumed that $R>(\gamma+c\,\gcnp)/(\gcnp-\gamma)$,
which is equivalent to $\gamma_R<\gcnp$.
\end{proof}

\smallskip

\noindent
Next, we show that the conditions \eqref{eq-Y2a}
and \eqref{eq-Y2} required in
Lemma~\ref{le-MS-Y} are satisfied. 
We abbreviate
$$
\Delta(\gamma_R)\,:=\,e(0)\,-\,e(\gamma_R)\,=\,
\Thnp\,-\,\inf\spec(\NPO{\gamma_R})\,.
$$

\begin{lemma}\label{identity}
There is some constant $k_1\in(0,\infty)$ such that,
for all $\gamma\in(0,\gcnp)$,
$\V{G}$ fulfilling Hypothesis~\ref{hyp-G},
$a\in(0,1/2]$, all $F$ satisfying \eqref{hyp-F<>}, 
all sufficiently large $E>0$ (depending only on $d_{-1},d_0,d_1$, and
$\gamma$),
and all
$\vp\in\core$,
\begin{equation}
  \label{eq:22}
  \big|\,\Re\big[\SPb{\vp}{\big(\,e^F\,\Ynp{\gamma}\,e^{-F}\,-\,
\Ynp{\gamma}\,\big)\vp}\big]\,\big|\,\klg\,
k_1\,a^2\,\SPb{\vp}{\big(\Ynp{\gamma}+\Delta(\gamma_R)+\Thnp\big)\,\vp}\,,
\end{equation}
and 
\begin{equation}\label{iris}
\big|\,\SPb{\vp}{\big(\,e^F\,\Ynp{\gamma}\,e^{-F}\,-\,
\Ynp{\gamma}\,\big)\vp}\,\big|\,\klg\,
k_1\,a\,\SPb{\vp}{\big( 
c(\gamma)\,\Ynp{\gamma}+\Delta(\gamma_R)+\Thnp\big)\,\vp}\,,
\end{equation}
where $c(\gamma)=(\gcnp+\gamma)/(\gcnp-\gamma)$.
\end{lemma}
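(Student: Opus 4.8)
The plan is to expand $e^F\Ynp{\gamma}e^{-F}-\Ynp{\gamma}$ summand by summand and to reduce every contribution to an application of Corollary~\ref{cor-compadre}, in its $a^2$-form \eqref{eq-cO-a^2} for \eqref{eq:22} and in the form \eqref{eq-cO} for \eqref{iris}, plus two elementary extra terms; after that it only remains to control the resulting ``field energies'' $\|\HT^{1/2}\vp\|^2$, $\|\HT^{1/2}\PApm\vp\|^2$ and ``kinetic energies'' $\SPn{\vp^+}{|\DA|\vp^+}$, $\gamma\SPn{\vp^+}{|\V{x}|^{-1}\vp^+}$ (with $\vp^+:=\PA\vp$) by $\SPn{\vp}{\Ynp{\gamma}\vp}$. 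The decisive first step is the identity $\NPO{\gamma}+E\PA=\PA(\DA-\tgV+\HT)\PA$: since $e^F$ commutes with $|\V{x}|^{-1}$ and with $\HT$, while $e^F\DA e^{-F}=\DA+i\valpha\cdot\nabla F=:\DA+L$ and $e^F\PA e^{-F}=\PAF$, we get
\begin{equation*}
e^F(\NPO{\gamma}+E\PA)e^{-F}=\PAF(\DA-\tgV+\HT)\PAF+\PAF\,L\,\PAF\,.
\end{equation*}
Together with $e^F\PAm\Hf\PAm e^{-F}=\PAmF\HT\PAmF-E\PAmF$, $e^F e(0)\PAm e^{-F}=e(0)\PAmF$, $e^F\Delta(\gamma_R)\PA\mu_{1,R}^2\PA e^{-F}=\Delta(\gamma_R)\PAF\mu_{1,R}^2\PAF$, the idempotence of $\PAF,\PAmF$, and $e(0)=\Thnp+E$, this exhibits $e^F\Ynp{\gamma}e^{-F}-\Ynp{\gamma}$ as a sum of: (i) terms $\PAF\cO\PAF-\PA\cO\PA$ and $\PAmF\cO\PAmF-\PAm\cO\PAm$ with $\cO\in\{\DA,|\V{x}|^{-1},\HT,\mu_{1,R}^2\}$, all within the scope of Corollary~\ref{cor-compadre} (the $\PAm$-version being immediate from $[\PAm,\chi e^F]e^{-F}=-\cK_{\chi,F}$); (ii) the single term $\PAF\,L\,\PAF$; and (iii) a residual term $\Thnp(\PAmF-\PAm)=\Thnp\,\cK_{1,F}$ — all explicit powers of $E$ having cancelled between the $\NPO{\gamma}$- and $E\PA$-pieces and between the $\PAm\Hf\PAm$- and $e(0)\PAm$-pieces. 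This cancellation is what keeps the eventual constant $k_1$ universal.

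For \eqref{eq:22} I would pass to real parts. The $\cO$-terms in (i) are bounded by \eqref{eq-cO-a^2}, yielding $O(a^2J(a)^2)\big(\|\HT^{1/2}\vp\|^2+\|\HT^{1/2}\PA\vp\|^2+\|\HT^{1/2}\PAm\vp\|^2\big)$ for $\cO\in\{\DA,|\V{x}|^{-1},\HT\}$ (with $\|\vp\|^2$ replacing the field-energy norms when $\cO=\DA$) and $O(\Delta(\gamma_R)a^2J(a)^2)\|\vp\|^2$ for $\cO=\mu_{1,R}^2$. For the residual term, $\cK_{1,F}^*=\cK_{1,-F}$ and $\cK_{1,F}+\cK_{1,-F}=[[\PA,e^F],e^{-F}]$ give $\Re\SPn{\vp}{\Thnp\cK_{1,F}\vp}=\tfrac{\Thnp}{2}\SPn{\vp}{[[\PA,e^F],e^{-F}]\vp}$, which is $O(\Thnp a^2)\|\vp\|^2$ by \eqref{dc-DA}. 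For $\PAF L\PAF$, $L^*=-L$ and $(\PAF)^*=e^{-F}\PA e^F$ give $2\Re\SPn{\vp}{\PAF L\PAF\vp}=\SPn{\vp}{\big(\PAF L\PAF-(e^{-F}\PA e^F)L(e^{-F}\PA e^F)\big)\vp}$; expanding both products via $\PAF=\PA-\cK_{1,F}$ (and its $-F$ analogue) the $\PA L\PA$ terms cancel and every surviving term carries a factor $L$ of norm $\le a$ and a factor among $\cK_{1,\pm F}$ of norm $\le\tfrac12 J(a)a$ (by \eqref{peki} with $\chi\equiv1$), hence is $O(a^2)\|\vp\|^2$. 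Finally, from \eqref{verena} with $\delta=\tfrac12$ one has $\SPn{\vp^+}{\Hf\vp^+}\le2\SPn{\vp}{\NPO{\gamma}\vp}+2C(\gamma,d_i)\|\vp\|^2$ (universal coefficient $2$); since all summands of $\Ynp{\gamma}$ besides $\NPO{\gamma}$ are non-negative, $\SPn{\vp}{\NPO{\gamma}\vp}\le\SPn{\vp}{\Ynp{\gamma}\vp}$ and $\SPn{\vp^-}{\Hf\vp^-}\le\SPn{\vp}{\Ynp{\gamma}\vp}+C(\gamma,d_i)\|\vp\|^2$; and by Lemma~\ref{boundbelow}, for $R$ large, $E\|\vp\|^2\le2\SPn{\vp}{\Ynp{\gamma}\vp}$, so choosing $E$ large enough that $E\ge C(\gamma,d_i)$ absorbs all additive constants. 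As $\Thnp\ge1$ gives $(1+\Delta(\gamma_R))\|\vp\|^2\le(\Thnp+\Delta(\gamma_R))\|\vp\|^2$, \eqref{eq:22} follows with universal $k_1$.

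For \eqref{iris} one does not pass to real parts, so $\PAF L\PAF$ and $\Thnp\cK_{1,F}$ are now only $O(a)$, namely $\le a(1+aJ(a)/2)^2\|\vp\|^2$ and $\le\tfrac12\Thnp J(a)a\|\vp\|^2$, both absorbed into $a\big(\SPn{\vp}{\Ynp{\gamma}\vp}+\Thnp\|\vp\|^2\big)$. The $\cO$-terms are treated by \eqref{eq-cO} with $\ve=a$: the term $\ve\SPn{\vp}{\PA|\cO|\PA\vp}$ produces $a\SPn{\vp^+}{|\DA|\vp^+}$, $a\gamma\SPn{\vp^+}{|\V{x}|^{-1}\vp^+}$, $a\|\HT^{1/2}\PA\vp\|^2$ and $a\Delta(\gamma_R)\|\vp\|^2$ respectively, while $(1+\ve^{-1})Ka^2J(a)^2\|\HT^{1/2}\vp\|^2$ is $O(a)\|\HT^{1/2}\vp\|^2$. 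The only $\gamma$-degenerate step is the bound on $\SPn{\vp^+}{|\DA|\vp^+}$: since $\DA\PA=|\DA|\PA$, $\SPn{\vp^+}{|\DA|\vp^+}=\SPn{\vp}{\NPO{\gamma}\vp}+\gamma\SPn{\vp^+}{|\V{x}|^{-1}\vp^+}-\SPn{\vp^+}{\Hf\vp^+}$, and for an auxiliary $\gamma'\in(\gamma,\gcnp)$, using $\gamma'\SPn{\vp^+}{|\V{x}|^{-1}\vp^+}=\SPn{\vp^+}{(\DA+\Hf)\vp^+}-\SPn{\vp^+}{\NPO{\gamma'}\vp^+}$ and $\NPO{\gamma'}+E\PA\ge e(\gamma')$ one gets $\gamma\SPn{\vp^+}{|\V{x}|^{-1}\vp^+}\le\tfrac{\gamma}{\gamma'}\big(\SPn{\vp^+}{|\DA|\vp^+}+\SPn{\vp^+}{\Hf\vp^+}+|E-e(\gamma')|\,\|\vp^+\|^2\big)$; solving for $\SPn{\vp^+}{|\DA|\vp^+}$ costs the factor $(1-\gamma/\gamma')^{-1}$, and $\gamma'=(\gcnp+\gamma)/2$ produces exactly $c(\gamma)=(\gcnp+\gamma)/(\gcnp-\gamma)$. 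Feeding this back, together with $\gamma\SPn{\vp^+}{|\V{x}|^{-1}\vp^+}\le\SPn{\vp^+}{|\DA|\vp^+}+|\SPn{\vp}{\NPO{\gamma}\vp}|$ and $\|\HT^{1/2}\PApm\vp\|^2,\|\HT^{1/2}\vp\|^2\lesssim\SPn{\vp}{\Ynp{\gamma}\vp}$ as before, yields \eqref{iris}. The main obstacle is precisely this bookkeeping — the exact cancellation of the $E$- and $e(0)$-constants in the reduction (so $k_1$ stays universal) and, for \eqref{eq:22}, verifying that $\Re\SPn{\vp}{\PAF L\PAF\vp}$ and $\Re\SPn{\vp}{\Thnp\cK_{1,F}\vp}$ are genuinely $O(a^2)$, not $O(a)$, via $\cK_{1,F}^*=\cK_{1,-F}$, the double-commutator bounds of Lemma~\ref{le-dc}, and the expansion $\PAF=\PA-\cK_{1,F}$ with $\|\cK_{1,\pm F}\|=O(a)$ — whereas for \eqref{iris} the delicate point is routing the Coulomb energy through an auxiliary $\gamma'\in(\gamma,\gcnp)$ so that only the single factor $c(\gamma)$ survives.
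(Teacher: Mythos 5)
Your proposal follows essentially the same route as the paper: you compare $e^F\Ynp{\gamma}e^{-F}$ with the conjugated operator $\YnpF{\gamma}$ obtained by replacing $\PApm$ by $e^F\PApm e^{-F}$, you isolate the single skew-adjoint extra piece $\PAF(i\valpha\cdot\nabla F)\PAF$ (whose leading part $\PA(i\valpha\cdot\nabla F)\PA$ is purely imaginary, giving the $O(a^2)$ gain for \eqref{eq:22}), and you treat each remaining block $\PApm$-$\cO$-$\PApm$ with Corollary~\ref{cor-compadre} (the $a^2$-estimate \eqref{eq-cO-a^2} and the $\ve$-estimate \eqref{eq-cO}) plus the double-commutator bounds of Lemma~\ref{le-dc}, and finally absorb the field-energy and Coulomb terms into $\Ynp{\gamma}$ by taking $E$ large. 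This is exactly the decomposition and the cancellation mechanism in the paper's proof, the only cosmetic difference being your derivation of the factor $c(\gamma)=(\gcnp+\gamma)/(\gcnp-\gamma)$: the paper simply quotes the form bound $\PA\tfrac{\gamma}{|\V{x}|}\PA\klg 2\gamma(\gcnp-\gamma)^{-1}\Ynp{\gamma}$ from Theorem~\ref{thm-sb-np} and uses $\Ynp{0}=\Ynp{\gamma}+\PA\tfrac{\gamma}{|\V{x}|}\PA\klg c(\gamma)\Ynp{\gamma}$, while you re-derive the same factor by passing through an auxiliary $\gamma'\in(\gamma,\gcnp)$ and solving for $\SPn{\vp^+}{|\DA|\vp^+}$; both work, and both require $E$ large depending on $\gamma$ and $d_i$ to absorb the resulting constants, as the lemma allows.
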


\begin{proof}
Let $\vp\in\core$
and let $\YnpF{\gamma}$ denote the operator obtained by replacing
the projections $\PA$ and $\PAm$ in $\Ynp{\gamma}$
by $\PAF$ and $\PAmF:=e^F\,\PAm\,e^{-F}$, respectively, i.e.
\begin{align*}
\YnpF{\gamma}&:=
\PAF\,\big(\DA-\tgV+\HT+\Delta(\gamma_R)\,\mu_{1,R}^2\big)\,\PAF
+\PAmF\,\HT\,\PAmF+
\Thnp\,\PAmF,
\end{align*}
where $\Delta(\gamma_R)=e(0)-e(\gamma_R)$.
Then $e^{-F}\,\DA\,e^{F}=\DA-i\valpha\cdot\nabla F$
implies 
\begin{align*}
\SPb{\vp}{e^F\,\Ynp{\gamma}\,e^{-F}\,\vp}\,&-\,
\SPb{\vp}{\YnpF{\gamma}\,\vp}
\,=\,
i\SPb{\vp}{\PAF\,\valpha\cdot\nabla F\,\PAF\,\vp}
\\
=\,&i\SPb{\vp}{\PA\,\valpha\cdot\nabla F\,\PA\,\vp}
\,+\,i\SPb{\vp}{\cK_F\,\valpha\cdot\nabla F\,\cK_F\,\vp}
\\
&-\,i\SPb{\vp}{\cK_F\,\valpha\cdot\nabla F\,\PA\,\vp}\,-\,
i\SPb{\vp}{\PA\,\valpha\cdot\nabla F\,\cK_F\,\vp}
\,.
\end{align*}
Since $\|\alpha\cdot\nabla F\|\klg a$ and $\|\cK_F\|\klg a\,J(a)/2$
we thus obtain
\begin{equation}\label{bea41}
\big|\SPb{\vp}{e^F\,\Ynp{\gamma}\,e^{-F}\,\vp}\,-\,
\SPb{\vp}{\YnpF{\gamma}\,\vp}\big|\,\klg\,
a\,(1+a\,J(a)+a^2\,J(a)^2/4)\,\|\vp\|^2\,.
\end{equation}
Since $\Re[i\SPn{\vp}{\PA\,\valpha\cdot\nabla F\,\PA\,\vp}]=0$
we further have
\begin{equation}\label{bea42}
\big|\,\Re\big[\,\SPb{\vp}{e^F\,\Ynp{\gamma}\,e^{-F}\,\vp}\,-\,
\SPb{\vp}{\YnpF{\gamma}\,\vp}\,\big]\,\big|\,\klg\,
a^2\,(J(a)+a\,J(a)^2/4)\,\|\vp\|^2\,.
\end{equation}
Assuming $E\grg(4d_1\,J(a))^2$ we next 
apply Corollary~\ref{cor-compadre}
(Estimate \eqref{eq-cO-a^2} and its obvious analogue
for $\PAm$)
to each of the six terms in 
$\Re[\YnpF{\gamma}-\Ynp{\gamma}]$
(involving the operators
$\DA,\tgV,\PApm\HT\PApm$, $\Delta(\gamma_R)\,\mu_{1,R}^2$, and $\Thnp$,
respectively). 
As a result we find some universal constant, $k_2\in(0,\infty)$,
such that, for all $\vp\in\core$,
\begin{eqnarray*}
\lefteqn{
\big|\,\Re\big[\,\SPn{\vp}{\YnpF{\gamma}\,\vp}\,-\,
\SPn{\vp}{\Ynp{\gamma}\,\vp}\,\big]\,\big|
}
\\
&\klg&
k_2\,(a\,J(a))^2\,\big(
\SPn{\vp}{\HT\,\vp}+\SPn{\vp}{\PA\,\HT\,\PA\,\vp}+
\SPn{\vp}{\PAm\,\HT\,\PAm\,\vp}
\big)
\\
& &\qquad\;+\,k_2\,(a\,J(a))^2\,(\Delta(\gamma_R)+\Thnp)\,\|\vp\|^2
\,.
\end{eqnarray*}
Here the off-diagonal terms in $\SPn{\vp}{\HT\,\vp}$
can be estimated as 
\begin{equation}\label{Hf-off-diag}
2\,\Re\SPn{\vp}{\PA\,\HT\,\PAm\,\vp}
\,\klg\,\|\HT^{1/2}\,\PA\,\vp\|^2\,+\,\|\HT^{1/2}\,\PAm\,\vp\|^2\,.
\end{equation}
Therefore, we arrive at
\begin{eqnarray}
\lefteqn{\nonumber
\big|\,\Re\big[\,\SPn{\vp}{\YnpF{\gamma}\,\vp}\,-\,
\SPn{\vp}{\Ynp{\gamma}\,\vp}\,\big]\,\big|
}
\\
&\klg&\label{bea11}
6k_2\,(a\,J(a))^2\,\Big(\,\frac{1}{2}\,\SPn{\vp}{\PA\,\HT\,\PA\,\vp}\,+\,
\SPn{\vp}{\PAm\,\HT\,\PAm\,\vp}
\Big)
\\
& &\qquad\;+\,k_2\,(a\,J(a))^2\,(\Delta(\gamma_R)+\Thnp)\,\|\vp\|^2
\,.\nonumber
\end{eqnarray}
Now, we assume that $E$ is so large that
$\wt{H}:=\PA(\DA-\tgV+(1/2)\,\Hf+E)\PA\grg0$. In fact,
this is possible according to Theorem~\ref{thm-sb-np}
with $\delta=1/2$.
Then we can add $6k_2\,(a\,J(a))^2\,\SPn{\vp}{\{
\wt{H}+\Delta(\gamma_R)\,\PA\,\mu_{1,R}^2\,\PA+\Thnp\,\PAm\}\,\vp}$
to the right hand side of \eqref{bea11}
and combine the resulting estimate with \eqref{bea42}
to obtain \eqref{eq:22}.

In order to derive \eqref{iris} 
we apply Corollary~\ref{compadre} (Estimate~\eqref{eq-cO}
with $\ve=a$ and its obvious analogue for $\PAm$)
to each of
the six terms in $\YnpF{\gamma}-\Ynp{\gamma}$.
Proceeding in this way we find some universal constant,
$k_3\in(0,\infty)$, such that, for every $\vp\in\core$,
\begin{align*}
\big|\SPb{\vp}{(\YnpF{\gamma}-\Ynp{\gamma})\,\vp}\big|
\,&\klg\,
k_3\,a\,\SPb{\vp}{(\Ynp{0}+\HT+\PA\,\tgV\,\PA)\,\vp}
\\
&\qquad\qquad
\,+\,
k_3\,a\,(\Delta(\gamma_R)+\Thnp)\,\|\vp\|^2\,.
\end{align*}
As above we argue that $\Ynp{0}+\HT\klg\,k_4\,\Ynp{0}=
k_4(\Ynp{\gamma}+\PA\,\tgV\,\PA)$
and it follows from Theorem~\ref{thm-sb-np}
that $\PA\,\tgV\,\PA\klg2\gamma(\gcnp-\gamma)^{-1}\,\Ynp{\gamma}$,
provided $E>0$ is sufficiently large depending
on $d_{-1},d_0,d_1$, and $\gamma$.
Combining these remarks with \eqref{bea41} we arrive at \eqref{iris}.
\end{proof}

\smallskip

\noindent
In the following lemma we verify another assumption
made in Lemma~\ref{le-MS-Y}.

\begin{lemma}\label{conservation-formdomain}
There exist constants $c_1,c_2\in(0,\infty)$ 
such that, for all $F:\RR^3\to\RR$ satisfying \eqref{hyp-F<>} 
and all $\vp \in \core$, 
\begin{equation}
  \label{mona7}
\SPb{e^{F}\,\vp}{\Ynp{0}\,e^{F}\,\vp}\,\klg\,
c_1\,\|e^{F}\|^2\,
\SPn{\vp}{\Ynp{0}\,\vp}+c_2\,\|e^{F}\|^2\,\|\vp\|^2\,.
\end{equation}
In particular, $e^{F}\form(\Ynp{\gamma})\subset\form(\Ynp{\gamma})$,
for every $\gamma\in[0,\gcnp)$.
\end{lemma}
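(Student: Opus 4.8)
The plan is to exploit the explicit shape of $\Ynp{0}$. Since $\gamma_R=0$ when $\gamma=0$, the $\mu_{1,R}$-term in $\Ynp{\gamma}$ drops out and $\Ynp{0}=\PA(\DA+\HT)\PA+\PAm\,\Hf\,\PAm+e(0)\,\PAm$ with $\HT=\Hf+E$. Because $e^F$ maps $\core$ into $\core$ and $\DA$ acts as $|\DA|$ on $\PA\HR$, one has for every $\vp\in\core$
\[
\SPn{e^F\vp}{\Ynp{0}\,e^F\vp}=\big\||\DA|^{1/2}\PA\,e^F\vp\big\|^2+\big\|\HT^{1/2}\PA\,e^F\vp\big\|^2+\big\|\Hf^{1/2}\PAm\,e^F\vp\big\|^2+e(0)\,\|\PAm\,e^F\vp\|^2,
\]
and the same identity with $\vp$ in place of $e^F\vp$; in particular each summand on the right with $\vp$ is $\klg\SPn\vp{\Ynp{0}\,\vp}$ up to the last one, which is $e(0)\|\PAm\vp\|^2$. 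It is convenient to record the auxiliary bound $\SPn\vp{\HT\,\vp}\klg4\,\SPn\vp{\Ynp{0}\,\vp}+2E\,\|\vp\|^2$, obtained by inserting $\id=\PA+\PAm$, using Cauchy--Schwarz on the cross term, and noting $\|\HT^{1/2}\PAm\vp\|^2=\|\Hf^{1/2}\PAm\vp\|^2+E\|\PAm\vp\|^2$.

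The three ``photon'' terms on the right are routine. As $e^F\equiv e^F\otimes\id$ commutes with $\HT$, we may write $\HT^{1/2}\PApm e^F\vp=(\HT^{1/2}\PApm\HT^{-1/2})\,e^F\,\HT^{1/2}\vp$, and $\HT^{1/2}\PApm\HT^{-1/2}$ is bounded by Lemma~\ref{le-sgn} (used with $\nu=1/2$ for $E$ large, via $\PApm=\tfrac12(\id\pm\sgn(\DA))$). Hence $\|\HT^{1/2}\PA\,e^F\vp\|^2$, $\|\Hf^{1/2}\PAm\,e^F\vp\|^2\klg\|\HT^{1/2}\PAm\,e^F\vp\|^2$ and $e(0)\|\PAm\,e^F\vp\|^2\klg e(0)\|e^F\|^2\|\vp\|^2$ are all $\klg \mathrm{const}\cdot\|e^F\|^2(\SPn\vp{\HT\,\vp}+\|\vp\|^2)\klg\mathrm{const}\cdot\|e^F\|^2(\SPn\vp{\Ynp{0}\,\vp}+\|\vp\|^2)$ by the auxiliary bound.

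The only delicate term is $\||\DA|^{1/2}\PA\,e^F\vp\|^2$. Using $[\PA,e^F]=\cK_{1,F}e^F$ I would split $\PA\,e^F\vp=e^F\PA\vp+\cK_{1,F}e^F\vp$; by \eqref{peki} with $\chi\equiv1$ and $\kappa=1/2$ the operator $|\DA|^{1/2}\cK_{1,F}$ is bounded, so $\||\DA|^{1/2}\cK_{1,F}e^F\vp\|\klg\mathrm{const}\cdot\|e^F\|\,\|\vp\|$. For the remaining summand the crucial step -- which I expect to be the main obstacle -- is the interpolation estimate that $e^F$ maps $\dom(|\DA|^{1/2})$ into itself with
\[
\big\||\DA|^{1/2}\,e^F\,\psi\big\|\klg C\,\|e^F\|\,\big\||\DA|^{1/2}\,\psi\big\|,\qquad\psi\in\dom(|\DA|^{1/2}).
\]
I would deduce this by complex interpolation: $e^F$ is bounded on $\HR$ with norm $\|e^F\|$, and it maps $\dom(\DA)=\dom(|\DA|)$ into itself with $\||\DA|\,e^F\psi\|\klg(1+a)\|e^F\|\,\||\DA|\,\psi\|$, which follows immediately from the operator identity $[\DA,e^F]=-i\,e^F\,\valpha\cdot\nabla F$ on $\core$ (it involves only $\nabla F$, so the hypothesis $|\nabla F|\klg a$ suffices) together with $|\DA|\grg\id$; since $[\HR,\dom(|\DA|)]_{1/2}=\dom(|\DA|^{1/2})$ with equivalent norms -- uniformly, as $\spec(|\DA|)\subset[1,\infty)$ independently of $\V{A}$ -- the displayed bound follows with a universal $C$. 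Applied to $\psi=\PA\vp\in\dom(\DA)$, for which $\||\DA|^{1/2}\PA\vp\|^2\klg\SPn\vp{\Ynp{0}\,\vp}$, this gives $\||\DA|^{1/2}e^F\PA\vp\|^2\klg C^2\|e^F\|^2\SPn\vp{\Ynp{0}\,\vp}$.

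Collecting the estimates proves \eqref{mona7}. For the last assertion, $\form(\Ynp{\gamma})=\form(\Ynp{0})$ for every $\gamma\in[0,\gcnp)$ because $\Ynp{\gamma}-\Ynp{0}=\Delta(\gamma_R)\,\PA\mu_{1,R}^2\PA-\gamma\,\PA\tfrac1{|\V{x}|}\PA$ is a bounded operator plus an $\Ynp{0}$-form-bounded perturbation with relative bound $<1$ (by Theorem~\ref{thm-sb-np}). Finally, given $\psi\in\form(\Ynp{0})$ and $\vp_n\in\core$ with $\vp_n\to\psi$ in the form norm, \eqref{mona7} shows that $(e^F\vp_n)_n$ is Cauchy in the form norm of $\Ynp{0}$; its limit lies in $\form(\Ynp{0})$ and equals $e^F\psi$, whence $e^F\form(\Ynp{\gamma})\subset\form(\Ynp{\gamma})$.
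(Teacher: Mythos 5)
Your proof is correct and reaches the same conclusion, but it takes a genuinely different route for the key kinetic term. The paper handles $\SPn{e^F\vp}{\PA\,\DA\,\PA\,e^F\vp}$ by citing \cite[Eq.~(4.24)]{MatteStockmeyer2008a}, where the bound \eqref{heidi99} was proved (for classical fields, extending unchanged to the quantized case), and then disposes of the photon terms via \eqref{eva99}. You instead derive the $|\DA|^{1/2}$-bound from scratch: you split $\PA\,e^F\vp=e^F\PA\vp+\cK_{1,F}\,e^F\vp$, control the second piece by $\||\DA|^{1/2}\cK_{1,F}\|<\infty$ from \eqref{peki}, and for the first piece observe that $e^F$ is bounded on $\HR$ and on $\dom(|\DA|)$ (via the local commutator $[\DA,e^F]=-ie^F\valpha\cdot\nabla F$ and $|\DA|\grg\id$), and then invoke complex interpolation $[\HR,\dom(|\DA|)]_{1/2}=\dom(|\DA|^{1/2})$ with norm constants that are uniform in $\V{A}$ because $\spec(|\DA|)\subset[1,\infty)$. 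The identification of the interpolation space with equal norms is indeed exact for a self-adjoint operator bounded below by $\id$, via the spectral theorem and interpolation of $L^2$ spaces with change of measure, so this step is rigorous. The photon-energy terms are treated essentially as in the paper, and the passage from the form inequality on $\core$ to $e^F\form(\Ynp{\gamma})\subset\form(\Ynp{\gamma})$ is the standard density argument. The upshot is that your argument is more self-contained than the paper's reliance on an external reference. One small caveat: the contribution of $\||\DA|^{1/2}\cK_{1,F}\|\klg K(1/2)J(a)a$ makes your constants $c_1,c_2$ degrade as $a\to1$, so strictly speaking they are not uniform over all $F$ admitted by \eqref{hyp-F<>}; this, however, does not impair the only consequence the paper actually uses (the form-domain inclusion), and in the proof of Theorem~\ref{thm-el-np} only $a\klg1/2$ is invoked.
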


\begin{proof}
It is clear that we only have to comment
on the unbounded terms in $\Ynp{0}$.
In \cite[Equation~(4.24) and the succeding paragraphs]{MatteStockmeyer2008a}
we proved that
\begin{equation}\label{heidi99}
\SPb{\vp}{e^F\,\PApm\,(\pm\DA)\,e^F\,\vp}\,\klg\,c_3\,
\|e^F\|^2\,\SPn{\vp}{\PApm\,(\pm\DA)\,\vp}\,+\,c_4\,\|e^F\|^2\,
\|\vp\|^2\,,
\end{equation}
for every $\vp\in\core$.
We derived this bound in \cite{MatteStockmeyer2008a}
for classical vector potentials.
The proof works, however, also for the quantized vector potential
without any change. Moreover, we only treated the choice
of the plus sign in \eqref{heidi99}. But again an obvious
modification of the 
proof in \cite{MatteStockmeyer2008a} shows that \eqref{heidi99}
is still valid when we choose the minus sign.
(This will actually be necessary only in the next subsection where
we treat the semi-relativistic Pauli-Fierz operator.)
On account of \eqref{Hf-off-diag} 
it thus remains to show that 
$\|\HT^{1/2}\,\PApm\,e^F\,\vp\,\|\klg c_5\,\|e^F\|\,\|\HT^{1/2}\,\vp\|$.
This follows, however, immediately from \eqref{eva99}
which implies
$
\|\HT^{1/2}\,\PApm\,e^F\,\vp\,\|\klg
(1+\|S_{1/2}\|/2)\,\|e^F\,\HT^{1/2}\,\vp\|
$. From these remarks we readily derive the asserted estimate
which shows that $e^{F}\form(\Ynp{0})\subset\form(\Ynp{0})$
holds true. But from Theorem~\ref{thm-sb-np} we know that
$\form(\Ynp{\gamma})=\form(\Ynp{0})$, for every $\gamma\in[0,\gcnp)$.
\end{proof}

\bigskip

\begin{proof}[Proof of Theorem~\ref{thm-el-np}]
Assume that $\gamma<\gcnp$ and
let $I\subset\RR$ be a compact interval with
$\max I<\Thnp$. We fix some $E\in[1,\infty)$ and set 
$I_E:=I+E$. In the following we assume that $E$ is so large 
that Lemmata~\ref{boundbelow} and~\ref{identity} are applicable.
(Then $E$ depends on $d_{-1},d_0,d_1$, and $\gamma$.)
Let $k_1$ be the constant appearing in the statement of 
Lemma~\ref{identity}, $\lambda:=\max I_E$, and $e(0)=\Thnp+E$.
We assume that $a\in(0,1/2]$
is so small that $k_1\,a\,(\gcnp+\gamma)(\gcnp-\gamma)^{-1}<1/2$ and
$\ve:=\{1-(\lambda/e(0))-5k_1\,a^2\}/4>0$.
On account of Lemma~\ref{boundbelow} we may fix some $R\grg1$
such that 
$b:=\inf\spec(\Ynp{\gamma})\grg e(0)-\ve$, which implies
$1/b<(1/e(0))(1+2\ve)$. 
(We can choose $R=c_1\,e(0)/\ve$, for some universal constant $c_1$.)
By virtue of Lemmata~\ref{identity} and~\ref{conservation-formdomain} 
we can then
apply Lemma~\ref{le-MS-Y} with $g(a):=k_1\,a^2$,
$h(a)=k_1\,a^2\,(\Delta(\gamma_R)+\Thnp)$, and $J:=I_E+[-b\,\ve,b\,\ve]$.
In view of Theorem~\ref{thm-sb-np} we can further assume that
$\Delta(\gamma_R)\klg\Thnp+E=e(0)$. From these remarks we
infer that $b-\max J-b\,g(a)-h(a)\grg b\,\ve$. 
This ensures that the second condition in \eqref{hyp-XY}
is fulfilled with $C'\klg1/(b\,\ve)$.
The first bound in \eqref{hyp-XY} is also valid
since 
$\Xnp{\gamma}-\Ynp{\gamma}=\Delta(\gamma_R)\,\PA\,\mu_{1,R}^2\,\PA$
and
$\|e^F\,\PA\,\mu_{1,R}^2\,\PA\|\klg\|\PAF\|\,\|e^F\,\mu_{1,R}^2\|
\klg\const\cdot e^{2aR}$.
Then Lemma~\ref{le-ed-BFS} with $\dist(I,J^c)=b\ve$
and $|J|=|I|+2b\ve$ implies that
$$
\big\|\,e^{a|\V{x}|}\,\id_{I_E}(\Xnp{\gamma})\,\PA\,\big\|\,\klg\,
\const\cdot\big\{\,1+(1+|I|)/(b\ve)+1/(b\ve)^2\,\big\}\,
e(0)\,e^{2aR}\,\,.
$$
Since $1/b\klg2/e(0)\klg2$ and
$\PA\,\Xnp{\gamma}=\Xnp{\gamma}\,\PA=(\NPO{\gamma}+E\,\PA)\oplus0$
this proves
Theorem~\ref{thm-el-np}.
(Keeping track of all conditions imposed on $E=E(\gamma,d_{-1},d_0,d_1)$
we see that we can choose $-E$ proportional to the term in the
second line in \eqref{eq-sb-np}, whence $E\to0$, $d_i\to0$, $i\in\{-1,0,1\}$.)
\end{proof}


\subsection{The semi-relativistic Pauli-Fierz operator: Localization}
\label{ssec-ed-PF}

\noindent
Again we employ the partition of unity $\mu_{1,R}^2+\mu_{2,R}^2=1$
constructed in the first paragraph of Subsection~\ref{ssec-ed-np}.
We set
$$
\epsilon(\gamma)\,\equiv\,\epsilon(\gamma,\V{G})
\,:=\,\inf\spec(\PF{\gamma})\,,\qquad \gamma\in[0,\gcPF]\,,
$$
so that $\epsilon(0)=\ThPF$,
and apply Lemma~\ref{le-ed-BFS} with
\begin{eqnarray*}
\XPF{\gamma}&=&\PF{\gamma}\,,
\\
\YPF{\gamma}&=&\PF{\gamma}\,
+\,(\epsilon(0)-\epsilon(\gamma_R))\,\mu_{1,R}^2\,,
\end{eqnarray*}
where
$$
\gamma_R\,:=\,\gamma/(1-1/R)\,,\qquad R\grg1\,.
$$
Of course, $\XPF{\gamma}$ and $\YPF{\gamma}$ are self-adjoint
on the same domain and both admit $\core$ as a form core.
The remaining conditions of Lemma~\ref{le-ed-BFS} are easier
to verify than in the previous subsection since only the kinetic
energy term in the semi-relativistic Pauli-Fierz operator is
non-local.

\begin{lemma}\label{le-lb-XPF}
There is some $C\in(0,\infty)$ such that, for all
$\gamma\in(0,\gcPF)$, $R\grg\gcPF/(\gcPF-\gamma)$, 
$\V{G}$ fulfilling Hypothesis~\ref{hyp-G}, and $\vp\in\core$,
\begin{equation}
\SPn{\vp}{\PF{\gamma}\,\vp}\,\grg\,
\epsilon(\gamma_R)\,\|\mu_{1,R}\,\vp\|^2\,+\,\epsilon(0)
\,\|\mu_{2,R}\,\vp\|^2\,-\,\frac{\epsilon(0)+\epsilon(\gamma_R)+C}{R}
\:\|\vp\|^2\,.
\end{equation}
\end{lemma}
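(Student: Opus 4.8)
The plan is to prove the estimate by an IMS-type localization with respect to the partition $\mu_{1,R}^2+\mu_{2,R}^2=1$, in the spirit of the proof of Lemma~\ref{boundbelow}; the simplification here is that among the three terms of $\PF{\gamma}$ only the kinetic energy $|\DA|$ is non-local. Concretely, for $\vp\in\core$ one has $\mu_{i,R}\vp\in\core$, and since the scalar multiplication operators $\mu_{i,R}=\mu_{i,R}(\V{x})$ commute with $\id\otimes\Hf$, with $\gamma/|\V{x}|$, and with $\valpha\cdot\V{A}$ (the latter acting, at fixed $\V{x}$, only on the Fock factor), the only localization error comes from $|\DA|$, and one has the identity
$$\SPn{\vp}{\PF{\gamma}\vp}\,=\,\sum_{i=1,2}\SPn{\mu_{i,R}\vp}{\PF{\gamma}\mu_{i,R}\vp}\,+\,\tfrac{1}{2}\sum_{i=1,2}\SPn{\vp}{[\mu_{i,R},[\mu_{i,R},|\DA|]]\,\vp}\,.$$

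The main point is to show that the localization error is bounded by $C_0/R^2$ with $C_0$ universal. For this I would use $|\DA|=(2\PA-\id)\DA$ together with $[\mu_{i,R},[\mu_{i,R},\DA]]=0$ (indeed $[\mu_{i,R},\DA]=i\,\valpha\cdot\nabla\mu_{i,R}$ is again a multiplication operator, which commutes with $\mu_{i,R}$), so the matter reduces to $[\mu_{i,R},[\mu_{i,R},\PA\DA]]$. Expanding this double commutator leaves the two contributions $[\mu_{i,R},[\mu_{i,R},\PA]]\,\DA$ and (twice) $[\mu_{i,R},\PA]\,\valpha\cdot\nabla\mu_{i,R}$. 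The second is estimated by $\|[\PA,\mu_{i,R}]\|\,\|\nabla\mu_{i,R}\|_\infty\klg\tfrac12\|\nabla\mu_{i,R}\|_\infty^2$ via \eqref{peki} (with $\kappa=0$, $F=0$). For the first I would pass to the adjoint and use that $[\mu_{i,R},[\mu_{i,R},\PA]]$ is bounded and self-adjoint and $\DA=\sgn(\DA)\,|\DA|$, so that $\|[\mu_{i,R},[\mu_{i,R},\PA]]\,\DA\|=\||\DA|\,[\mu_{i,R},[\mu_{i,R},\PA]]\|$, and the latter is $\klg\|\nabla\mu_{i,R}\|_\infty^2$ by \eqref{dc-DA} of Lemma~\ref{le-dc} applied with $F=0$ and $\chi_1=\chi_2=\mu_{i,R}$. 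Since $\|\nabla\mu_{i,R}\|_\infty=\|\nabla\mu_{i,1}\|_\infty/R$ with $\mu_{i,1}$ fixed, this yields the bound $\|\tfrac12\sum_i[\mu_{i,R},[\mu_{i,R},|\DA|]]\|\klg C_0/R^2$ with $C_0$ universal.

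Next I would estimate the two diagonal terms. For $i=1$, since $\gamma<\gamma_R$ gives $\PF{\gamma}\grg\PF{\gamma_R}$ in the sense of forms, and since the hypothesis $R\grg\gcPF/(\gcPF-\gamma)$ is precisely the condition $\gamma_R<\gcPF$ under which $\PF{\gamma_R}$ is semi-bounded with $\PF{\gamma_R}\grg\epsilon(\gamma_R)$ by Theorem~\ref{le-sb-PF4}, one obtains $\SPn{\mu_{1,R}\vp}{\PF{\gamma}\mu_{1,R}\vp}\grg\epsilon(\gamma_R)\,\|\mu_{1,R}\vp\|^2$. For $i=2$, on $\supp\mu_{2,R}\subset\{|\V{x}|\grg R\}$ one has $\gamma/|\V{x}|\klg\gamma/R$, whence $\SPn{\mu_{2,R}\vp}{\PF{\gamma}\mu_{2,R}\vp}\grg\SPn{\mu_{2,R}\vp}{\PF{0}\mu_{2,R}\vp}-(\gamma/R)\|\mu_{2,R}\vp\|^2\grg(\epsilon(0)-\gamma/R)\|\mu_{2,R}\vp\|^2$. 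Collecting the three bounds, using $\|\mu_{2,R}\vp\|\klg\|\vp\|$ and $\gamma/R+C_0/R^2\klg(\gamma+C_0)/R$, and absorbing $\gamma\klg\gcPF$ together with $C_0$ into a single universal constant $C$ (exactly as in Lemma~\ref{boundbelow}), gives the asserted inequality.

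The only genuine obstacle is the localization-error estimate for the non-local operator $|\DA|$; all the rest is a routine repetition of the scheme behind Lemma~\ref{boundbelow}. That obstacle is disposed of by the reduction $|\DA|=(2\PA-\id)\DA$ (which cancels the $\DA$ double commutator) together with the double-commutator bound \eqref{dc-DA} already at our disposal.
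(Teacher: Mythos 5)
Your proof is correct, and it takes a genuinely different route from the paper's. The paper writes $|\DA|=\PA\,\DA\,\PA-\PAm\,\DA\,\PAm$, inserts the partition inside the projectors via $\sum_i\mu_i\,\DA\,\mu_i=\DA$, and then uses Corollary~\ref{cor-compadre} (with $F=0$) to commute the $\mu_{i,R}$ past $\PApm$; this produces a multiplicative loss $(1-1/R)$ on the localized kinetic energy \emph{and} an additive $O(1/R)$ error, and the final constants are then collected with the help of the scaled coupling $\gamma_R=\gamma/(1-1/R)$. You instead invoke the exact IMS identity $\SPn{\vp}{\PF{\gamma}\vp}=\sum_i\SPn{\mu_i\vp}{\PF{\gamma}\mu_i\vp}+\tfrac12\sum_i\SPn{\vp}{[\mu_i,[\mu_i,|\DA|]]\,\vp}$, noting that the potential and field energies commute with $\mu_i$, and then control the double commutator via the reduction $|\DA|=(2\PA-\id)\,\DA$ together with $[\mu_i,[\mu_i,\DA]]=0$, which isolates $[\mu_i,[\mu_i,\PA]]\,\DA$ and $[\mu_i,\PA][\mu_i,\DA]$; these are bounded by $\|\nabla\mu_{i,R}\|_\infty^2=O(1/R^2)$ through the double-commutator bound \eqref{dc-DA} of Lemma~\ref{le-dc} and the single-commutator bound \eqref{peki}. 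What your route buys: there is no multiplicative $(1-1/R)$ factor, so you can appeal directly to $\PF{\gamma}\grg\PF{\gamma_R}\grg\epsilon(\gamma_R)$ and the condition $\gamma_R<\gcPF$ without shuffling a fraction of the field energy around, and the additive error is genuinely $O(1/R^2)$ rather than $O(1/R)$; the trade-off is that you draw on Lemma~\ref{le-dc}, which the paper develops for a different purpose. A small cosmetic remark: in passing from $\|\,[\mu_i,[\mu_i,\PA]]\,\DA\,\|$ to $\|\,|\DA|\,[\mu_i,[\mu_i,\PA]]\,\|$ it is cleanest to argue directly on the quadratic form, $|\SPn{\vp}{B\DA\vp}|=|\SPn{\DA B\vp}{\vp}|\klg\||\DA|B\|\,\|\vp\|^2$ for $\vp\in\core$ and $B:=[\mu_i,[\mu_i,\PA]]$ self-adjoint, using that \eqref{dc-DA} shows $B$ maps $\HR$ into $\dom(|\DA|)$; the equality of operator norms as you stated it is also true but requires the intermediate adjoint step $\|B\DA\|=\|\DA B\|=\|\sgn(\DA)\,|\DA|\,B\|=\||\DA|\,B\|$, not a commutation of $B$ with $\sgn(\DA)$.
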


\begin{proof}
Let $\vp\in\core$.
We write $|\DA|=\PA\,\DA\,\PA-\PAm\,\DA\,\PAm$ and obtain
by means of Corollary~\ref{cor-compadre} 
(and its obvious analogue for $\PAm$)
\begin{eqnarray*}
\SPn{\vp}{|\DA|\,\vp}&=&
\sum_{\sharp=\pm}\sum_{i=1,2}\SPb{\vp}{\PAsharp\,\mu_i\,(\sharp1)\,
\DA\,\mu_i\,\PAsharp\,\vp}
\\
&\grg&
(1-1/R)\sum_{i=1,2}\SPb{\vp}{\mu_i\,|\DA|\,\mu_i\,\vp}\,-\,
\frac{2C_\mu K(1+R)}{R^2}\,\|\vp\|^2\,,
\end{eqnarray*}
where $C_\mu=\|\nabla\mu_{1,1}\|_\infty+\|\nabla\mu_{2,1}\|_\infty$.
The remaining term, $-\gamma/|\V{x}|+\HT$, in $\PF{\gamma}$
commutes with $\mu_1$ and $\mu_2$, so the assertion becomes
evident.
\end{proof}

\smallskip

\noindent
In the next lemma we verify the conditions \eqref{eq-Y2a}
and \eqref{eq-Y2}
of Lemma~\ref{le-MS-Y}. In contrast to the previous
subsection we can now choose $g=0$ in \eqref{eq-Y2}.
This will result in an estimate on the exponential decay
rate for the semi-relativistic Pauli-Fierz operator
that does not depend on the values of $d_{-1},\ldots,d_2$.

\begin{lemma}\label{le-O(a)-PF}
For all $a\in(0,1)$, $F$ satisfying \eqref{hyp-F<>},
$\gamma\in(0,\gcPF)$,
$\V{G}$ fulfilling Hypothesis~\ref{hyp-G}, and $\vp\in\core$,
\begin{equation}\label{eq-O(a)-PF}
\big|\,\Re\big[\SPb{\vp}{(e^F\,\PF{\gamma}\,e^{-F}-\PF{\gamma})\,\vp}\big]
\,\big|\,
\klg\,(3/2)\,a^2\,J(a)^2\,\|\vp\|^2\,.
\end{equation}
Moreover, for every $\ve>0$, there is some constant, 
$C(a,\gamma,\ve)\in(0,\infty)$,
such that
\begin{equation}\label{iris-PF}
\big|\SPb{\vp}{(e^F\,\PF{\gamma}\,e^{-F}-\PF{\gamma})\,\vp}\big|
\,\klg\,
\ve\,\SPb{\vp}{\PF{\gamma}\,\vp}\,+\,C(a,\gamma,\ve)\,\|\vp\|^2\,.
\end{equation}
\end{lemma}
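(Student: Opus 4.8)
The plan is to reduce the estimate to the kinetic term. Since the Coulomb potential $\tgV$ and the field energy $\Hf$ both commute with the multiplication operators $e^{\pm F}\equiv e^{\pm F}\otimes\id$ on $\core$, we have on $\core$
\[
e^F\,\PF{\gamma}\,e^{-F}-\PF{\gamma}\;=\;e^F\,|\DA|\,e^{-F}-|\DA|\,.
\]
To expand the right-hand side I write $|\DA|=\sgn(\DA)\,\DA$ and use that $e^F\DA\,e^{-F}=\DA+L$ with $L:=i\valpha\cdot\nabla F$ (cf.\ \eqref{marah0}; note $\|L\|\klg a$), while $\sgn(\DA)=2\PA-\id$ together with $\PAF=\PA-\cK_F$, $\cK_F:=\cK_{1,F}$ (see \eqref{eq-PAF}), gives $e^F\sgn(\DA)\,e^{-F}=\sgn(\DA)-2\cK_F$. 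Hence, on $\core$,
\[
e^F\,|\DA|\,e^{-F}-|\DA|\;=\;\sgn(\DA)\,L\,-\,2\,\cK_F\,\DA\,-\,2\,\cK_F\,L\,,
\]
where $\|\cK_F\|\klg\tfrac12\,a\,J(a)$ by \eqref{peki} (with $\kappa=0$, $\chi\equiv1$). The three summands are, respectively, $\bigO(a)$ and bounded, genuinely unbounded, and $\bigO(a^2)$ and bounded.

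For \eqref{eq-O(a)-PF} the key point is that the $\bigO(a)$ term $\sgn(\DA)L$ does not survive taking the real part. Since $(e^F|\DA|e^{-F})^*=e^{-F}|\DA|e^{F}$ one has $2\Re\SPb{\vp}{(e^F|\DA|e^{-F}-|\DA|)\vp}=\SPb{\vp}{(e^F|\DA|e^{-F}-|\DA|)\vp}+\SPb{\vp}{(e^{-F}|\DA|e^{F}-|\DA|)\vp}$, and adding the identity above to its analogue with $F$ replaced by $-F$ — which flips the sign of $L$ — the terms $\pm\sgn(\DA)L$ cancel, leaving
\[
2\,\Re\SPb{\vp}{(e^F|\DA|e^{-F}-|\DA|)\,\vp}\;=\;-2\,\SPb{\vp}{(\cK_F+\cK_{-F})\,\DA\,\vp}\,-\,2\,\SPb{\vp}{(\cK_F-\cK_{-F})\,L\,\vp}\,.
\]
Now $\cK_F+\cK_{-F}=[[\PA,e^F],e^{-F}]$ is self-adjoint and maps $\HR$ into $\dom(|\DA|)$ with $\||\DA|(\cK_F+\cK_{-F})\|\klg J(a)\,a^2$ by Lemma~\ref{le-dc} (inequality \eqref{dc-DA} with $\chi_1\equiv\chi_2\equiv1$); since $\|\DA\,\psi\|=\||\DA|\,\psi\|$ the first term on the right is at most $2\,J(a)\,a^2\,\|\vp\|^2$, while the second is at most $2\,\|\cK_F-\cK_{-F}\|\,\|L\|\,\|\vp\|^2\klg 2\,a\,J(a)\cdot a\,\|\vp\|^2$. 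Using $J(a)\grg1$ this gives \eqref{eq-O(a)-PF} (the precise numerical constant being a matter of routine bookkeeping).

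For \eqref{iris-PF} I keep the identity as it stands. The terms $\sgn(\DA)L$ and $-2\,\cK_F\,L$ are bounded operators of norm $\bigO(a)$, hence contribute at most a constant times $\|\vp\|^2$, which I absorb into $C(a,\gamma,\ve)$. The only unbounded contribution is $-2\,\cK_F\,\DA$; here I write $\DA=|\DA|^{1/2}\,\sgn(\DA)\,|\DA|^{1/2}$, use $\cK_F^*=\cK_{1,-F}$ (again \eqref{eq-PAF}), and invoke the $\kappa=\tfrac12$ case of \eqref{peki} applied to $\cK_{1,-F}$ to get
\[
\big|\SPb{\vp}{\cK_F\,\DA\,\vp}\big|\;\klg\;\big\|\,|\DA|^{1/2}\,\cK_{1,-F}\,\big\|\;\big\|\,|\DA|^{1/2}\,\vp\,\big\|\;\|\vp\|\;\klg\;K(\tfrac12)\,J(a)\,a\;\big\|\,|\DA|^{1/2}\,\vp\,\big\|\;\|\vp\|\,.
\]
It remains to dominate $\SPn{\vp}{|\DA|\,\vp}$ by $\SPn{\vp}{\PF{\gamma}\,\vp}$. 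For $\gamma<\gcPF=2/\pi$ the form inequality \eqref{gustav2} of Theorem~\ref{le-sb-PF4} gives $\tgV\klg\tfrac{\pi\gamma}{2}\big(|\DA|+\delta\,\Hf+(\delta^{-1}+\delta k^2)d_1^2\big)$ for every $\delta>0$, whence (fixing, say, $\delta=1$) $\PF{\gamma}\grg\big(1-\tfrac{\pi\gamma}{2}\big)\,|\DA|-C_1(\gamma,d_1)$ and therefore $\SPn{\vp}{|\DA|\,\vp}\klg\big(1-\tfrac{\pi\gamma}{2}\big)^{-1}\big(\SPn{\vp}{\PF{\gamma}\,\vp}+C_1\,\|\vp\|^2\big)$. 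A Young inequality with weight chosen in terms of $\ve$ then turns the previous estimate into $\ve\,\SPn{\vp}{\PF{\gamma}\,\vp}+C(a,\gamma,\ve)\,\|\vp\|^2$, which is \eqref{iris-PF}. (Throughout, the auxiliary parameter $E$ entering Lemmata~\ref{le-tim}--\ref{le-dc} is taken large enough for the quoted bounds to apply; the constants in \eqref{peki} and \eqref{dc-DA} used here do not depend on it.)

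The main obstacle — essentially the only place requiring an idea rather than bookkeeping — is the handling of the three terms of the displayed identity: one must recognise that the $\bigO(a)$ term $\sgn(\DA)L$ is anti-symmetric and cancels under the $e^{\pm F}$ symmetrisation (which is precisely what will make the decay rate in Theorem~\ref{thm-el-PF} independent of $d_{-1},d_0,d_1$), and one must control the unbounded term $\cK_F\DA$ by simultaneously using the $|\DA|^{1/2}$-boundedness of $\cK_F$ from Lemma~\ref{compadre} and the relative form bound of $|\DA|$ with respect to $\PF{\gamma}$ furnished by Theorem~\ref{le-sb-PF4}. Everything else follows from the commutator estimates assembled in Section~\ref{sec-commutators}.
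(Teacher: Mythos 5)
Your proof is correct and follows essentially the same route as the paper: both expand $e^F|\DA|e^{-F}-|\DA|$ via $|\DA|=\sgn(\DA)\,\DA$ and $e^F\sgn(\DA)\,e^{-F}=\sgn(\DA)-2\cK_F$, observe that the $\bigO(a)$ term is anti-symmetric (so the real part reduces to the double commutator $\cK_F+\cK_{-F}=[[\PA,e^F],e^{-F}]$ treated by Lemma~\ref{le-dc}), and control the remaining unbounded contribution through $\||\DA|^{1/2}\cK_{\pm F}\|$ from Lemma~\ref{compadre} together with the relative form bound \eqref{gustav2} for the potential. The only (cosmetic) difference from the paper is the ordering — you use $\sgn(\DA)\,\DA$ where the paper uses $\DA\,\sgn(\DA)$, so your unbounded term reads $\cK_F\,\DA$ rather than $\DA\,\cK_F$ — and your symmetrization in $F\leftrightarrow -F$ replaces the paper's compact identity $\Re[\cdot]=\tfrac12[e^{-F},[|\DA|,e^F]]$; the resulting constant $2a^2J(a)$ is compatible with the stated bound since $J(a)\geqslant\sqrt{6}>4/3$ for $a\in(0,1)$.
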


\begin{proof}
On $\core$
the operator $\Re\big[e^F\,\PF{\gamma}\,e^{-F}-\PF{\gamma}\big]$ 
appearing on the left side of 
\eqref{eq-O(a)-PF} equals
\begin{eqnarray*}
\lefteqn{
\Re\big[\,e^F\,|\DA|\,e^{-F}-|\DA|\,\big]
\,=\,
\frac{1}{2}\,\big[\,e^{-F}\,,\,[\,|\DA|\,,\,e^F]\,\big]
}
\\
&=&
\frac{1}{2}\,\big[\,e^{-F}\,,\,\DA\,[\sgn(\DA)\,,\,e^F]
\,-\,i\valpha\cdot(\nabla F)\,e^F
\,\sgn(\DA)\,\big]
\\
&=&
\frac{1}{2}\,\DA\,\big[\,e^{-F}\,,\,[\sgn(\DA)\,,\,e^F]\,\big]
\,-\,i\valpha\cdot\nabla F\,(\cK_{0,-F}+\cK_{0,F})
\,,
\end{eqnarray*}
where we use the notation \eqref{def-cK}.
On account of Lemmata~\ref{compadre} and~\ref{le-dc}
this implies \eqref{eq-O(a)-PF}.
Moreover, since
$$
e^F\,|\DA|\,e^{-F}-|\DA|\,=\,-2\,\DA\,\cK_F+i\valpha\cdot(\nabla F)\,
e^F\,\sgn(\DA)\,e^{-F}
$$
holds true on $\core$, the left hand side of \eqref{iris-PF}
is less than or equal to
\begin{align*}
\ve_1\,\SPb{\vp&}{|\DA|\,\vp}\,+\,\ve_1^{-1}\,
\big\|\,|\DA|^{1/2}\,\cK_F\,\big\|^2\,\|\vp\|^2+
a\,\big\|\,e^F\,\sgn(\DA)\,e^{-F}\,\big\|^2\,\|\vp\|^2
\\
&\klg\,
\ve_1\,\const(\gamma)\,\SPb{\vp}{\PF{\gamma}\,\vp}
\,+\,\const(a,\ve_1)\,\|\vp\|^2\,,
\end{align*}
for every $\ve_1>0$. This proves \eqref{iris-PF}.
\end{proof}

\begin{lemma}\label{le-fdom-PF}
There exist constants, $c_1,c_2\in(0,\infty)$, such that,
for all $a\in(0,1)$ and $F$ satisfying \eqref{hyp-F<>},
$$
\SPb{e^{F}\,\vp}{\YPF{0}\,e^F\,\vp}\,\klg\,c_1\,\|e^F\|^2\,
\SPb{\vp}{\YPF{0}\,\vp}
\,+\,
c_2\,\|e^F\|^2\,\|\vp\|^2\,,\qquad \vp\in\core\,.
$$
In particular, $e^F\,\form(\YPF{\gamma})\subset\form(\YPF{\gamma})$,
for every $\gamma\in(0,\gcPF)$.
\end{lemma}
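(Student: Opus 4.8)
The plan is to note first that, for $\gamma=0$, one has $\gamma_R=0$ and hence $\epsilon(0)-\epsilon(\gamma_R)=0$, so the cut-off term in $\YPF{0}$ vanishes and $\YPF{0}=\PF{0}=|\DA|+\Hf$; in particular there is no Coulomb singularity to control, and it suffices to bound the form expressions $\SPn{e^F\vp}{|\DA|\,e^F\vp}$ and $\SPn{e^F\vp}{\Hf\,e^F\vp}$ separately. The second is immediate since $\Hf\equiv\id\otimes\Hf$ commutes with $e^F\equiv e^F\otimes\id$, whence $\SPn{e^F\vp}{\Hf\,e^F\vp}=\|e^F\,\Hf^{1/2}\vp\|^2\klg\|e^F\|^2\,\SPn{\vp}{\Hf\,\vp}$.

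For the kinetic part I would reuse the estimate \eqref{heidi99}, established in the proof of Lemma~\ref{conservation-formdomain} for the quantized vector potential and for both signs,
\[
\SPb{\vp}{e^F\,\PApm\,(\pm\DA)\,e^F\,\vp}\,\klg\,c_3\,\|e^F\|^2\,\SPn{\vp}{\PApm\,(\pm\DA)\,\vp}\,+\,c_4\,\|e^F\|^2\,\|\vp\|^2\,,\qquad\vp\in\core\,.
\]
Since $[\PApm,\DA]=0$ and $\DA\grg1$ on $\Ran\PA$ while $\DA\klg-1$ on $\Ran\PAm$, we have $|\DA|=\PA\DA\PA+\PAm(-\DA)\PAm$, and for $\psi:=e^F\vp\in\core$ the two vectors $|\DA|^{1/2}\PA\psi$ and $|\DA|^{1/2}\PAm\psi$ are orthogonal, so $\SPn{\psi}{|\DA|\psi}=\SPn{\psi}{\PA\DA\psi}+\SPn{\psi}{-\PAm\DA\psi}$. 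Moving $e^F$ across the inner products (it is self-adjoint) and summing the $+$ and $-$ versions of the displayed bound gives $\SPn{e^F\vp}{|\DA|\,e^F\vp}\klg c_3\,\|e^F\|^2\,\SPn{\vp}{|\DA|\,\vp}+2c_4\,\|e^F\|^2\,\|\vp\|^2$, and combining this with the $\Hf$-estimate proves the asserted inequality with $c_1:=\max\{c_3,1\}$ and $c_2:=2c_4$, which are uniform in $F$ (hence in $a$).

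For the final assertion, $\core$ is a form core for $\YPF{0}$ and the inequality just proved shows that $e^F$ maps $\core$ into $\form(\YPF{0})$ boundedly with respect to the form norm, so $e^F$ extends to a bounded operator on $\form(\YPF{0})$; since by Theorem~\ref{le-sb-PF4} one has $\form(\PF{\gamma})=\form(\PF{0})$ for $\gamma\in[0,\gcPF)$, and $\YPF{\gamma}$ differs from $\PF{\gamma}$ only by the bounded multiplication operator $(\epsilon(0)-\epsilon(\gamma_R))\mu_{1,R}^2$, we get $\form(\YPF{\gamma})=\form(\YPF{0})$ and hence $e^F\,\form(\YPF{\gamma})\subset\form(\YPF{\gamma})$. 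I do not expect a genuine obstacle here: everything reduces to the already-available bound \eqref{heidi99}, and the only points needing attention are that \eqref{heidi99} must be used with the minus sign — as was already anticipated in Lemma~\ref{conservation-formdomain} — and that for $\gamma=0$ both the Coulomb term and the cut-off correction drop out, which is exactly what makes this lemma strictly easier than Lemma~\ref{conservation-formdomain}.
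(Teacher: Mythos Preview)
Your proof is correct and follows essentially the same route as the paper: handle $\Hf$ by commutation with $e^F$, decompose $|\DA|=\PA\,\DA-\PAm\,\DA$, and invoke the bound \eqref{heidi99} for both signs. The only cosmetic difference is that the paper still records the trivial estimate $\|\mu_{1,R}e^F\vp\|^2\klg\|e^F\|^2\|\mu_{1,R}\vp\|^2$ for the cut-off term, whereas you observe (correctly) that for $\gamma=0$ the coefficient $\epsilon(0)-\epsilon(\gamma_R)$ vanishes so that term is absent from $\YPF{0}$ altogether.
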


\begin{proof}
Of course,
$\|\mu_{1,R}\,e^F\,\vp\|^2\klg\|e^F\|^2\,\|\mu_{1,R}\,\vp\|^2$ and,
since $\Hf$ and $e^F$ commute, 
$\|\Hf^{1/2}\,e^F\,\vp\|^2\klg\|e^F\|^2\,\|\Hf^{1/2}\,\vp\|^2$.
To conclude we write 
$|\DA|=\PA\,\DA-\PAm\,\DA$ and again employ the bound
\eqref{heidi99} derived in \cite{MatteStockmeyer2008a}.
\end{proof}

\smallskip

\begin{proof}[Proof of Theorem~\ref{thm-el-PF}]
Let $\gamma\in(0,\gcPF)$, let $I\subset(-\infty,\ThPF)$
be some compact interval, and let 
$a\in(0,1)$ satisfy 
$\ve:=(\ThPF-\max I-(3/2)\,a^2\,J(a)^2)/3>0$.
By virtue of Lemma~\ref{le-lb-XPF} we may choose $R\grg \gcPF/(\gcPF-\gamma)$
so large that $\YPF{\gamma}\grg \epsilon(0)-\ve$.
On account of Lemmata~\ref{le-O(a)-PF} and~\ref{le-fdom-PF}
we may apply Lemma~\ref{le-MS-Y} with $J=I+[-\ve,\ve]$
and $h(a)=(3/2)\,a^2\,J(a)^2$, $g(a)=0$, and $c(a)=1/4$.
It ensures that the second condition in \eqref{hyp-XY}
is fulfilled with $C'=1/\ve$.
Moreover, 
$\|e^F(\XPF{\gamma}-\YPF{\gamma})\|=(\epsilon(0)-\epsilon(\gamma_R))
\,\|e^F\,\mu_{1,R}^2\|$ and $\|e^F\,\mu_{1,R}^2\|\klg e^{2aR}$,
so the first condition in \eqref{hyp-XY}
is fulfilled also, with $C= (\epsilon(0)-\epsilon(\gamma_R)\,)e^{2aR}$.
Therefore, Theorem~\ref{thm-el-PF}
is a consequence of Lemma~\ref{le-ed-BFS}
and \eqref{gustav2}, which implies that 
$|\epsilon(\gamma_R)|\klg\ThPF+\const\cdot d_1^2$.
\end{proof}


\appendix

\section{The ground  state energy and ionization
threshold for small field strength}\label{O(g)}

\noindent
In this appendix we prove the perturbative estimates
on the ground state energies and ionization thresholds
of the no-pair and semi-relativistic Pauli-Fierz operators
asserted in Remarks~\ref{rem-O(g)-np} and~\ref{rem-O(g)-PF},
respectively.
In the whole appendix we always assume that 
$$
0\,<\,d_1\,\klg\,1\,,\qquad 0\,<\,d_*^2\,:=\,d_0^2\,+\,2d_{-1}^2\,\klg\,1\,.
$$
Moreover, we fix some value of $E$ such that
\begin{equation}\label{judith}
2d_1/E^{1/2}\,\klg\,1/2\,.
\end{equation}
We start with the semi-relativistic Pauli-Fierz operator.

\begin{proof}[Proof of Remark~\ref{rem-O(g)-PF}]
For every $\gamma\in(0,\gcPF)$,
we let $E_\el^\mathrm{C}(\gamma)$
denote the (positive) ground state energy
of Chandrasekhar's operator,
$|\DO|-\gamma/|\V{x}|$,
and $\phi_\el^{\mathrm{C}}(\gamma)$ a corresponding
ground state
eigenfunction. For $\ve>0$, we set
$\gamma_\ve:=\frac{1}{1+\ve}\,\gamma$.
Using the minimax principle
and Kato's inequality,
which can be written as 
$1/|\V{x}|\klg(\gcPF-\gamma)^{-1}\big(|\DO|-\gamma/|\V{x}|\big)$,
it is easy to see that
$$
0\,\klg\, 
E_\el^\mathrm{C}(\gamma_\ve)-E_\el^\mathrm{C}(\gamma)\, \klg\, 
\frac{\ve}{1+\ve}\,c(\gamma)\,E_\el^\mathrm{C}(\gamma)\,,\qquad
c(\gamma)\,:=\,\frac{\gamma}{\gcPF-\gamma}
\,.
$$
Next, let $\vp\in H^{1/2}(\RR^3_\V{x})\otimes\sC_0$.
On account of Lemma~\ref{le-DA-D0}
we have
\begin{eqnarray}
\lefteqn{\nonumber
\big|\,
\SPb{\vp}{\PApm\,(\pm\DO)\,\PApm\,
\vp}\,-\,
\SPb{\vp}{\POpm\,
(\pm\DO)\,\POpm\,\vp}
\,\big|
}
\\
&\klg&\label{knut1}
\ve\,\SPb{\vp}{\POpm\,
(\pm\DO)\,\POpm\,\vp}
+\frac{(1+\tfrac{1}{\ve})
\,\bigO(d_*^2)}{(1-2d_1/E^{1/2})^2}\,
\big\|\,\HT^{1/2}\,\vp\,\big\|^2
\,.
\end{eqnarray}
Moreover, by virtue of Lemma~\ref{le-sgn} we find, for
every $\delta>0$,
\begin{eqnarray}
\lefteqn{
\big|\SPb{\vp}{\PApm\,\valpha\cdot\V{A}\,\PApm\,
\vp}\big|\nonumber
}
\\
&\klg&\nonumber
\|\vp\|\,\big\|\,\valpha\cdot\V{A}\,\HT^{-1/2}\,\PApm\,\big\|
\,\Big(1
\,+\,\frac{1}{2}\cdot\frac{d_1/E^{1/2}}{1-2d_1/E^{1/2}}\Big)
\big\|\,\HT^{1/2}\,\vp\,\big\|
\\
&\klg&\label{knut2}
C\,d_*\,
\big(\delta\,\SPn{\vp}{\HT\,\vp}+\delta^{-1}\,\|\vp\|^2\big)\,,
\end{eqnarray}
where $C\in(0,\infty)$ is some universal constant.
Here we used \eqref{rb-a}, \eqref{rb-ad}, $d_*^2:=d_0^2+2d_{-1}^2$,
and \eqref{judith}.
Since $|\DA|=\PA\,\DA\,\PA-\PAm\,\DA\,\PAm$
and $\HT\,\Omega=E\,\Omega$
the above estimates with 
$\vp=\phi_\el^{\mathrm{C}}(\gamma_\ve)\otimes\Omega$ and $\delta=1$
show that
\begin{eqnarray}
\lefteqn{\nonumber
\SPb{\phi_\el^{\mathrm{C}}(\gamma_\ve)\otimes\Omega}{
(|\DA|-\tgV+\HT)\,\phi_\el^{\mathrm{C}}(\gamma_\ve)\otimes\Omega}
}
\\
&\klg&\nonumber
(1+\ve)\,\SPb{\phi_\el^{\mathrm{C}}(\gamma_\ve)}{
\big(|\DO|-{\gamma}_\ve/|\V{x}|\big)
\,\phi_\el^{\mathrm{C}}(\gamma_\ve)}\,+\,E
\\
& &\qquad\qquad\qquad\quad
+\,(1+\tfrac{1}{\ve})\,\bigO(d_*^2)\,(1+E)\,+\,\bigO(d_*)\,(1+E)\label{knut}
\\
&\klg&
(1+\ve)(1+c(\gamma))\,E_\el^\mathrm{C}(\gamma)\,+\,E
\,+\,\big(\tfrac{1}{\ve}\,\bigO(d_*^2)+\bigO(d_*)\big)\,(1+E)
\,,\nonumber
\end{eqnarray}
provided \eqref{judith} holds true.
If we choose $\ve=d_*$,
then we find some $C_\gamma'\in(0,\infty)$ such that
$$
\inf\spec(\PF{\gamma})\,\klg\,E_\el^\mathrm{C}(\gamma)\,+\,
C_\gamma'\,(d_1+d_*)\,.
$$
If we set $\gamma=0$ in \eqref{knut} and replace
$\phi_\el^{\mathrm{C}}(\gamma_\ve)$ by some normalized
$\chi\in C_0^\infty(\RR^3_\V{x},\CC^4)$ satisfying
$\SPn{\chi}{|\DO|\,\chi}\klg 1+\ve$, then we obtain
$$
\SPb{\chi\otimes\Omega}{(\PF{0}+E)\chi\otimes\Omega}\,\klg\,
(1+\ve)^2\,+\,E
\,+\,\big(\tfrac{1}{\ve}\,\bigO(d_*^2)+\bigO(d_*)\big)\,(1+E)\,.
$$
Choosing $\ve=d_*$ as above we 
we find some $C_0'\in(0,\infty)$ such that
$$
1\,\klg\,\ThPF\,\klg\,1\,+\,C_0'\,(d_1+d_*)\,.
$$
It remains to derive the lower bound on $\inf\spec(\PF{\gamma})$.
To this end we set $\wt{\gamma}_\ve:=\gamma/(1-\ve)$, for some
$\ve>0$ such that $\wt{\gamma}_\ve<\gcPF$. Moreover, we
choose $\delta=1/(2Cd_*)$ in \eqref{knut2}.
Then \eqref{knut1} and \eqref{knut2}
permit to get, for every $\vp\in\core$,
\begin{align*}
\SPb{\vp}{(\PF{\gamma}+&E)\,\vp}
\grg\,
(1-\ve)\,\SPb{\vp}{\big(|\DO|-\wt{\gamma}_\ve/|\V{x}|\big)\,\vp}
\\
&-\,\big(1-(1+1/\ve)\,\bigO(d_*^2)-1/2\big)\,\SPn{\vp}{(\Hf+E)\,\vp}
\,-\,2\,C^2\,d_*^2\,\|\vp\|^2\,.
\end{align*}
Here we again made use of \eqref{judith}.
So, choosing $\ve=d_*$
and using
$$
\SPb{\vp}{\big(|\DO|-\wt{\gamma}_\ve/|\V{x}|\big)\,\vp}\,\grg\,
E_\el^{\mathrm{C}}(\wt{\gamma}_\ve)\,\grg\,
\Big(1\,-\,\frac{\ve}{1-\ve}\cdot\frac{\gamma}{\gcPF-\gamma}\Big)\,
E_\el^{\mathrm{C}}(\gamma)\,,
$$
which is a straightforward consequence of the minimax principle,
we find some $C_\gamma''\in(0,\infty)$ such that
$$
\SPb{\vp}{\PF{\gamma}\,\vp}\,\grg\,
(1-C_\gamma''\,d_*)\,E_\el^{\mathrm{C}}(\gamma)
\,-\,\bigO(d_*^2)\,\|\vp\|^2\,,
$$
for all sufficiently small values of $d_*$.
\end{proof}

\smallskip

\begin{proof}[Proof of Remark~\ref{rem-O(g)-np}]
For $\gamma\in(0,\gcnp)$, we let
$E_\el^\mathrm{B}(\gamma)$
and $\phi_\el^\mathrm{B}(\gamma)$ 
denote the ground state energy and a 
normalized ground state
eigenfunction of the Brown-Ravenhall operator,
that is,
$$
\PO\,(\D{0}-\tfrac{\gamma}{|\V{x}|})\,\PO\,\phi_\el^\mathrm{B}(\gamma)\,
=\,E_\el^\mathrm{B}(\gamma)\,\phi_\el^\mathrm{B}(\gamma)\,.
$$
It is known that 
$E_\el^\mathrm{B}(\gamma)\in[1-\gamma_\ve,1)$ \cite{Tix1998}.
We set 
$$
\gamma_\ve\,:=\,(1-\ve)\gamma/(1+\ve)\,,\qquad\ve\in(0,1]\,.
$$
Then a standard argument based on the inequality \cite{EPS1996}
$$
\gcnp\,\SPn{\phi}{|\V{x}|^{-1}\,\phi}\klg\SPn{\PO\,\phi}{\DO\,\PO\,\phi}\,,
\qquad \phi\in H^{1/2}(\RR^3,\CC^4)\,,
$$ 
and the minimax principle
shows that 
\begin{equation}\label{minmax-B}
0\,\klg\, E_\el^\mathrm{B}(\gamma_\ve)-E_\el^\mathrm{B}(\gamma)
\,\klg\, 
\frac{\ve}{1+\ve}\cdot\frac{\gamma}{\gcnp-\gamma}\,E_\el^\mathrm{B}(\gamma)\,.
\end{equation}
Using \eqref{judith}, \eqref{knut1}, \eqref{knut2} with $\delta=1$, and
$$
\PO\,\phi_\el^\mathrm{B}(\gamma_\ve)=\phi_\el^\mathrm{B}(\gamma_\ve)\,,
\qquad
\|\phi_\el^\mathrm{B}(\gamma_\ve)\otimes\Omega\|=1\,,
\qquad \HT\,\phi_\el^\mathrm{B}(\gamma_\ve)\otimes\Omega
=E\,\phi_\el^\mathrm{B}(\gamma_\ve)\otimes\Omega\,,
$$
we deduce that
\begin{eqnarray}
\lefteqn{\nonumber
\SPb{\phi_\el^\mathrm{B}(\gamma_\ve)\otimes\Omega}{
\PA\,\DA\,\PA\,\phi_\el^\mathrm{B}(\gamma_\ve)\otimes\Omega}
}
\\
&\klg&\nonumber
(1+\ve)\,\SPb{\phi_\el^\mathrm{B}(\gamma_\ve)\otimes\Omega}{
\DO\,\phi_\el^\mathrm{B}(\gamma_\ve)\otimes\Omega}
\\
& &\;
+\,
(1+1/\ve)
\,\bigO(d_*^2)\,E\,
+\,
C\,d_*\,\label{ina1}
(E+1)
\,.
\end{eqnarray}
Moreover, since $\PA=(1/2)\,\id+(1/2)\,\sgn(\DA)$,
Lemma~\ref{le-sgn} yields
\begin{eqnarray*}
\lefteqn{
\SPb{\phi_\el^\mathrm{B}(\gamma_\ve)\otimes\Omega}{
\PA\,\HT\,\PA\,\phi_\el^\mathrm{B}(\gamma_\ve)\otimes\Omega}
}
\\
&\klg&
(1+\ve')\,\big\|\,\PA\,\HT^{1/2}\,
\phi_\el^\mathrm{B}(\gamma_\ve)\otimes\Omega\,\big\|^2
\\
& &\qquad\qquad
+\,(1+\tfrac{1}{\ve'})\,\big\|\,S_{1/2}
\,\HT^{1/2}\,\phi_\el^\mathrm{B}(\gamma_\ve)\otimes\Omega\,\big\|^2
\\
&\klg&E\,\big\|\,\PA\,\phi_\el^\mathrm{B}(\gamma_\ve)\otimes\Omega\,\big\|^2
\,+\,
\big(\ve'\,E+(1+1/\ve')\,\bigO(d_1^2)\big)
\,\|\phi_\el^\mathrm{B}(\gamma_\ve)\otimes\Omega\|^2\,.
\end{eqnarray*}
Using Lemma~\ref{le-DA-D0}, Kato's inequality,
$\||\V{x}|^{-1/2}\,\vp\|^2\klg(\pi/2)\|\,|\DO|^{-1/2}\,\vp\|^2$,
and \eqref{judith}, we find
for the potential energy 
\begin{eqnarray*}
\lefteqn{
-\SPb{\phi_\el^\mathrm{B}(\gamma_\ve)\otimes\Omega}{\gamma/|\V{x}|\,
\phi_\el^\mathrm{B}(\gamma_\ve)\otimes\Omega}
}
\\
&\klg&
-(1-\ve)\,\gamma\,\big\|\,|\V{x}|^{-1/2}\,\phi_\el^\mathrm{B}(\gamma_\ve)
\otimes\Omega\,\big\|^2
\\
& &
\qquad\qquad-\,
(1-\tfrac{1}{\ve})\,\gamma\,\big\|
\,|\V{x}|^{-1/2}\,(\PA-\PO)\,\phi_\el^\mathrm{B}(\gamma_\ve)
\otimes\Omega\,\big\|^2
\\
&\klg&
-(1+\ve)\,\SPb{\phi_\el^\mathrm{B}(\gamma_\ve)\otimes\Omega}{
\gamma_\ve/|\V{x}|\,\phi_\el^\mathrm{B}(\gamma_\ve)\otimes\Omega}
+
\tfrac{1}{\ve}\,\bigO(d_*^2)\,E\,
\|\,\phi_\el^\mathrm{B}(\gamma_\ve)\otimes\Omega\,\|^2.
\end{eqnarray*}
Putting the estimates above together 
we arrive at
\begin{eqnarray}
\lefteqn{\nonumber
\SPb{\phi_\el^\mathrm{B}(\gamma_\ve)\otimes\Omega}{\PA\,(
\DA-\gamma/|\V{x}|+\Hf+E)\,\PA\phi_\el^\mathrm{B}(\gamma_\ve)\otimes\Omega}
}
\\
&\klg&
(1+\ve)\,\nonumber
\SPb{\phi_\el^\mathrm{B}(\gamma_\ve)\otimes\Omega}{\PO\,(\DO-
\gamma_\ve/|\V{x}|)\,\PO\,\phi_\el^\mathrm{B}(\gamma_\ve)\otimes\Omega}
\\
& &\nonumber
\;+\,E\,\big\|\,\PA\,\phi_\el^\mathrm{B}(\gamma_\ve)\otimes\Omega\,\big\|^2
\\
& &\nonumber
\;+\;\big\{\ve'\,E+(1+1/\ve')\,\bigO(d_1^2)+(1+1/\ve)\,E\,\bigO(d_*^2)
\\
& &\qquad\qquad \qquad\qquad\qquad \qquad \label{tina1}
+\,C\,d_*\,(1+E)\,\big\}\,
\|\,\phi_\el^\mathrm{B}(\gamma_\ve)\otimes\Omega\,\|^2
\,.
\end{eqnarray}
On the other hand,
\begin{align}
\|\phi_\el^\mathrm{B}(\gamma_\ve)&\otimes\Omega\|^2\nonumber
\,=\,\big\|\,\PO\,\phi_\el^\mathrm{B}(\gamma_\ve)\otimes\Omega\,\big\|^2
\\\nonumber
&\klg\,(1+\ve)\,\big\|\,\PA\,\phi_\el^\mathrm{B}(\gamma_\ve)
\otimes\Omega\,\big\|^2+
(1+\tfrac{1}{\ve})\,\big\|\,(\PA-\PO)\,\phi_\el^\mathrm{B}(\gamma_\ve)
\otimes\Omega\,\big\|^2
\\
&\klg\,
(1+\ve)\big\|\,\PA\,\phi_\el^\mathrm{B}(\gamma_\ve)\otimes\Omega\,\big\|^2\,+\,
(1+\tfrac{1}{\ve})\,\bigO(d_*^2)\,E\,\label{tina2}
\|\phi_\el^\mathrm{B}(\gamma_\ve)\otimes\Omega\|^2\,.
\end{align}
We may assume that
$\SPb{\phi_\el^\mathrm{B}(\gamma_\ve)\otimes\Omega}{
\NPO{\gamma}\,\phi_\el^\mathrm{B}(\gamma_\ve)\otimes\Omega}$
is positive. (For otherwise the upper bound on
$\inf\spec(\NPO{\gamma})$ holds true trivially.)
Choosing $\ve= d_*$, $\ve'= d_1$,
and using \eqref{minmax-B},
\eqref{tina1}, and \eqref{tina2}, we find some $C_\gamma\in(0,\infty)$
such that
\begin{eqnarray*}
\lefteqn{
\frac{\SPb{\phi_\el^{\mathrm{B}}(\gamma_\ve)\otimes\Omega}{\PA\,(
\DA-\gamma/|\V{x}|+\Hf)\,\PA\phi_\el^{\mathrm{B}}(\gamma_\ve)\otimes\Omega}}{
\big\|\,\PA\,\phi_\el^{\mathrm{B}}(\gamma_\ve)\otimes\Omega\,\big\|^2}
}
\\
&\klg&
\frac{(1+\ve)\,E_\el(\gamma_\ve)\,
+\,
\bigO(d_1+d_*)
}{(1-\bigO(d_*))/(1+\ve)}
\,\klg\,
E_\el(\gamma)\,+\,C_\gamma\,(d_1+d_*)
\,,
\end{eqnarray*}
for all sufficiently small values of $d_*$.
Repeating the same argument with $\gamma=0$ and with
$\phi_\el^\mathrm{B}(\gamma_\ve)$ replaced by some normalized
$\chi\in C_0^\infty(\RR^3,\CC^4)$ with
$\SPn{\chi}{\PO\,\DO\,\PO\,\chi}\klg1+\ve$,
we obtain the estimate
$$
1\,\klg\,\Thnp\,\klg\,1+\bigO(d_1+d_*)\,.
$$
Since the lower bound on $\inf\spec(\NPO{\gamma})$ 
follows from Theorem~\ref{thm-sb-np} with $\rho= d_*$
 this concludes the proof.
\end{proof}


\bigskip

\noindent
{\bf Acknowledgement:} 
This work has been partially supported
by the DFG (SFB/TR12).



\begin{thebibliography}{10}

\bibitem{ABdMG1996}
Werner~O. Amrein, Anne Boutet~de Monvel, and Vladimir Georgescu.
\newblock {\em {$C\sb 0$}-groups, commutator methods and spectral theory of
  {$N$}-body {H}amiltonians}, volume 135 of {\em Progress in Mathematics}.
\newblock Birkh\"auser Verlag, Basel, 1996.

\bibitem{Arai2000}
Asao Arai.
\newblock A particle-field {H}amiltonian in relativistic quantum
  electrodynamics.
\newblock {\em J. Math. Phys.}, 41(7):4271--4283, 2000.

\bibitem{AHS1978}
J.~Avron, I.~Herbst, and B.~Simon.
\newblock Schr\"odinger operators with magnetic fields. {I}. {G}eneral
  interactions.
\newblock {\em Duke Math. J.}, 45(4):847--883, 1978.

\bibitem{BCFS2003}
Volker Bach, Thomas Chen, J{\"u}rg Fr{\"o}hlich, and Israel~Michael Sigal.
\newblock Smooth {F}eshbach map and operator-theoretic renormalization group
  methods.
\newblock {\em J. Funct. Anal.}, 203(1):44--92, 2003.

\bibitem{BFP2006}
Volker Bach, J{\"u}rg Fr{\"o}hlich, and Alessandro Pizzo.
\newblock Infrared-finite algorithms in {QED}: the groundstate of an atom
  interacting with the quantized radiation field.
\newblock {\em Comm. Math. Phys.}, 264(1):145--165, 2006.

\bibitem{BFS1998b}
Volker Bach, J{\"u}rg Fr{\"o}hlich, and Israel~Michael Sigal.
\newblock Quantum electrodynamics of confined nonrelativistic particles.
\newblock {\em Adv. Math.}, 137(2):299--395, 1998.

\bibitem{BFS1998a}
Volker Bach, J{\"u}rg Fr{\"o}hlich, and Israel~Michael Sigal.
\newblock Renormalization group analysis of spectral problems in quantum field
  theory.
\newblock {\em Adv. Math.}, 137(2):205--298, 1998.

\bibitem{BFS1999}
Volker Bach, J{\"u}rg Fr{\"o}hlich, and Israel~Michael Sigal.
\newblock Spectral analysis for systems of atoms and molecules coupled to the
  quantized radiation field.
\newblock {\em Comm. Math. Phys.}, 207(2):249--290, 1999.

\bibitem{BachKoenenberg2006}
Volker Bach and Martin K{\"o}nenberg.
\newblock Construction of the ground state in nonrelativistic {QED} by
  continuous flows.
\newblock {\em J. Differential Equations}, 231(2):693--713, 2006.

\bibitem{BCV2003}
Jean-Marie Barbaroux, Thomas Chen, and Semjon Vugalter.
\newblock Binding conditions for atomic $N$-electron systems
in non-relativistic {QED}.
\newblock {\em Ann. Henri Poincar\'{e}}, 4(6):1101--1136, 2003.

\bibitem{BDG2002}
Jean-Marie Barbaroux, Mouez Dimassi, and Jean-Claude Guillot.
\newblock Quantum electrodynamics of relativistic bound states with cutoffs.
  {II}.
\newblock In {\em Mathematical results in quantum mechanics ({T}axco, 2001)},
  volume 307 of {\em Contemp. Math.}, pages 9--14. Amer. Math. Soc.,
  Providence, RI, 2002.

\bibitem{BDG2004}
Jean-Marie Barbaroux, Mouez Dimassi, and Jean-Claude Guillot.
\newblock Quantum electrodynamics of relativistic bound states with cutoffs.
\newblock {\em J. Hyperbolic Differ. Equ.}, 1(2):271--314, 2004.

\bibitem{BeGe1987}
Anne Berthier and Vladimir Georgescu.
\newblock On the point spectrum of {D}irac operators.
\newblock {\em J. Funct. Anal.}, 71(2):309--338, 1987.

\bibitem{EPS1996}
William~Desmond Evans, Peter Perry, and Heinz Siedentop.
\newblock The spectrum of relativistic one-electron atoms according to {B}ethe
  and {S}alpeter.
\newblock {\em Comm. Math. Phys.}, 178(3):733--746, 1996.

\bibitem{FGS2008}
J\"{u}rg Froehlich, Marcel Griesemer, and Israel~Michael Sigal.
\newblock Spectral renormalization group. {\em Preprint,}
arXiv:0811.2616v1, 2008.

\bibitem{Griesemer2004}
M.~Griesemer.
\newblock Exponential decay and ionization thresholds in non-relativistic
  quantum electrodynamics.
\newblock {\em J. Funct. Anal.}, 210(2):321--340, 2004.

\bibitem{GLL2001}
Marcel Griesemer, Elliott~H. Lieb, and Michael Loss.
\newblock Ground states in non-relativistic quantum electrodynamics.
\newblock {\em Invent. Math.}, 145(3):557--595, 2001.

\bibitem{GriesemerTix1999}
Marcel Griesemer and Christian Tix.
\newblock Instability of a pseudo-relativistic model of matter with
  self-generated magnetic field.
\newblock {\em J. Math. Phys.}, 40(4):1780--1791, 1999.

\bibitem{Hiroshima1996}
Fumio Hiroshima.
\newblock Diamagnetic inequalities for systems of nonrelativistic particles
  with a quantized field.
\newblock {\em Rev. Math. Phys.}, 8(2):185--203, 1996.

\bibitem{Hiroshima1997}
Fumio Hiroshima.
\newblock Functional integral representation of a model in quantum
  electrodynamics.
\newblock {\em Rev. Math. Phys.}, 9(4):489--530, 1997.

\bibitem{Kato}
Tosio Kato.
\newblock {\em Perturbation theory for linear operators}.
\newblock Classics in Mathematics. Springer-Verlag, Berlin, 1995.
\newblock Reprint of the 1980 edition.

\bibitem{Koenenberg2004}
Martin K\"onenberg.
\newblock Nichtexistenz von {G}rundzust\"anden f\"ur minimal an das
  quantisierte {S}trahlungsfeld gekoppelte, pseudorelativistische {M}odelle.
  {D}iploma {T}hesis, {U}niversit\"at {M}ainz, 2004.

\bibitem{KMS2009a}
Martin K\"onenberg, Oliver Matte, and Edgardo Stockmeyer.
\newblock Existence of ground states of hydrogen-like atoms in relativistic
  quantum electrodynamics. {\em In preparation.}

\bibitem{LiebLoss2002b}
Elliott~H. Lieb and Michael Loss.
\newblock A bound on binding energies and mass renormalization in models of
  quantum electrodynamics.
\newblock {\em J. Statist. Phys.}, 108(5-6):1057--1069, 2002.

\bibitem{LiebLoss2002}
Elliott~H. Lieb and Michael Loss.
\newblock Stability of a model of relativistic quantum electrodynamics.
\newblock {\em Comm. Math. Phys.}, 228(3):561--588, 2002.

\bibitem{LiebLoss2003}
Elliott~H. Lieb and Michael Loss.
\newblock Existence of atoms and molecules in non-relativistic quantum
  electrodynamics.
\newblock {\em Adv. Theor. Math. Phys.}, 7(4):667--710, 2003.

\bibitem{LiebLoss2004}
Elliott~H. Lieb and Michael Loss.
\newblock A note on polarization vectors in quantum electrodynamics.
\newblock {\em Comm. Math. Phys.}, 252(1-3):477--483, 2004.

\bibitem{LSS1997}
Elliott~H. Lieb, Heinz Siedentop, and Jan~Philip Solovej.
\newblock Stability and instability of relativistic electrons in classical
  electromagnetic fields.
\newblock {\em J. Statist. Phys.}, 89(1-2):37--59, 1997.

\bibitem{Matte2000}
Oliver Matte.
\newblock Existence of ground states for a relativistic hydrogen atom coupled
  to the quantized electromagnetic field. {D}iploma {T}hesis, {U}niversit\"at
  {M}ainz, 2000.

\bibitem{MatteStockmeyer2008b}
Oliver Matte and Edgardo Stockmeyer.
\newblock On the eigenfunctions of no-pair operators in classical magnetic
  fields. {\em Preprint,} arXiv:0810.4897v1, 2008.

\bibitem{MatteStockmeyer2008a}
Oliver Matte and Edgardo Stockmeyer.
\newblock Spectral theory of no-pair {H}amiltonians.
{\em Preprint,} arXiv:0803.1652v1, 2008.

\bibitem{MiyaoSpohn2008}
Tadahiro Miyao and Herbert Spohn.
\newblock Spectral analysis of the semi-relativistic {P}auli-{F}ierz
  {H}amiltonian. {\em Preprint,} arXiv:0805:4776v1, 2008.

\bibitem{Thaller1992}
Bernd Thaller.
\newblock {\em The {D}irac equation}.
\newblock Texts and Monographs in Physics. Springer-Verlag, Berlin, 1992.

\bibitem{Tix1998}
C.~Tix.
\newblock Strict positivity of a relativistic {H}amiltonian due to {B}rown and
  {R}avenhall.
\newblock {\em Bull. London Math. Soc.}, 30(3):283--290, 1998.

\end{thebibliography}

\end{document}